\newtheorem{theorem}{Theorem}[section]
\newtheorem{lemma}[theorem]{Lemma}
\newtheorem{proposition}[theorem]{Proposition}
\newtheorem{definition}[theorem]{Definition}
\newtheorem{remark}[theorem]{Remark}
\newtheorem{corollary}[theorem]{Corollary}
\newcounter{hypo}
\title[the Schr\"odinger operators with constant magnetic fields.]{ Trace asymptotics formula for  the Schr\"odinger operators with constant magnetic fields.}
\begin{document}

\author{Mouez DIMASSI and Anh Tuan DUONG}
\address{Mouez Dimassi, IMB ( UMR CNRS 5251), Universit\'e de Bordeaux 1, 33405 Talence, France}
\email{mouez.dimassi@math.u-bordeaux1.fr}
\address{Anh Tuan Duong, LAGA (UMR CNRS 7539), Univ. Paris 13, F-93430 Villetaneuse, France}
\email{duongat@math.univ-paris13.fr}

\keywords{Magnetic Schr\"odinger operators, asymptotic trace formula, eigenvalues distribution}
\subjclass[2010]{ 81Q10, 35J10, 35P20, 35C20, 47F05}
\maketitle
\begin{abstract}
In this paper,  we consider the $2D$-Schr\"odinger operator with constant magnetic field $H(V)=(D_x-By)^2+D_y^2+V_h(x,y)$, where $V$ tends to  zero at infinity and $h$ is a small positive  parameter.
We will be concerned with  two cases: the semi-classical limit regime  $V_h(x,y)=V(h x,h y)$, 
and the large coupling constant limit  case $V_h(x,y)=h^{-\delta} V(x,y)$. 
We obtain  a complete asymptotic expansion in powers of $h^2$ of 
${\rm tr}(\Phi(H(V),h))$, where $\Phi(\cdot,h)\in C^\infty_0(\mathbb R;\mathbb R)$.
We also give a Weyl type asymptotics  formula with optimal remainder estimate  of  the counting function of eigenvalues of $H(V)$.
\end{abstract}
\section{Introduction}
Let $H_0=(D_x-By)^2+D_y^2$ be the $2D$-Schr\"{o}dinger operator with constant magnetic field $B>0$.
Here $D_\nu=\frac{1}{i}\partial_\nu$. It is well known that the operator $H_0$ is essentially self-adjoint  on 
 $C_0^\infty(\mathbb{R}^{2})$ and  its spectrum consists of eigenvalues of infinite multiplicity 
(called Landau levels, see, e.g., \cite{AHS78}). We denote by $\sigma(H_0)$ (resp. $\sigma_{\rm ess}(H_0)$) the  spectrum (resp. the essential spectrum) of the operator $H_0$. Then, 
$$ \sigma(H_0)= \sigma_{\rm ess}(H_0)=\bigcup_{n=0}^\infty\{(2n+1)B\}.$$

 Let  $V\in C^\infty(\mathbb{R}^{2};\mathbb{R})$ and assume that $V$ is bounded with all its 
 derivatives  and satisfies
  \begin{equation}\label{e0}
  \lim\limits_{|(x,y)|\to\infty}V(x,y)=0.
  \end{equation}
We now consider the perturbed Schr\"{o}dinger operator
\begin{equation}\label{e1}
H(V)=H_0+\ V_h(x,y),
\end{equation}
 where  $V_h$ is  a potential depending on a semi-classical parameter $h>0$, and is of the form 
 $V_h(x,y)=V(hx,hy)$ or $V_h(x,y)=h^{-\delta}V(x,y)$, ($\delta>0$). 
   Using the Kato-Rellich theorem and the Weyl criterion, one sees that $H(V)$ is essentially self-adjoint on 
$C_0^\infty(\mathbb{R}^{2})$ and 
$$\sigma_{\rm ess}(H(V))=\sigma_{\rm ess}(H_0)=\bigcup_{n=0}^\infty\{(2n+1)B\}.$$

The spectral properties of the $2D$-Schr\"odinger operator with constant magnetic field 
$H(V)$  have been intensively
studied in the last ten years. In the case of perturbations, the Landau levels $\Lambda_n=(2n+1)B$
become accumulation points of the eigenvalues  and the asymptotics of the
function counting the number of the eigenvalues lying in a neighborhood of $\Lambda_n$ have been
examined by many authors in different aspects. For recent results, the reader may consult
 \cite{R98, Di01, RW02, MR03, KR09,DiPe10_1, BMR11} and the references therein. 

The asymptotics  with precise remainder estimate for the counting spectral function of the operator 
 $H(h):=H_0+V(hx,hy)$ have been obtained by V. Ivrii  \cite{Ivrii98}. 
In fact, he constructs a micro-local canonical form for  $H(h)$, which leads to the sharp remainder estimates.

However, there are only a few works treating the  case of the large coupling constant
limit (i.e., $V_h(x,y)=h^{-\delta}V(x,y)$) (see  \cite{MR94,R91, R93}). In this case, the asymptotic behavior of the counting spectral function   depends both on the sign
of the perturbation and on its decay properties at infinity. 
In \cite{R93},  G. Raikov obtained only the main asymptotic term  of the counting spectral function
as $h\searrow 0$.

The method used in \cite{R93}  is of variational nature. 
 By
this method one can find the main term in the asymptotics of the counting spectral function  with a weaker
assumption on the perturbation $V$. However, it is  quite difficult to establish with
these techniques an asymptotic formula involving sharp remainder estimates.  

For both the semi-classical and large coupling constant limit,  we  give a complete asymptotic expansion 
 of  the trace of $\Phi(H(V),h) $ in powers of $h^2$.
We also establish a Weyl-type asymptotic formula  with optimal remainder estimate
 for the counting function  of eigenvalues of $H(V)$.
The remainder estimate  in Corollary \ref{t1610} and Corollary \ref{t7} is ${\mathcal O}(1)$,   so it is
better than in the standard case (without magnetic field, see e.g., \cite{Di06}).

To prove our  results, we show that the spectral study of $H(h)$ near some energy level
$z$ can be  reduced to the study of an $h^2$-pseudo-differential operator $E_{-+}(z)$
called {\em the effective Hamiltonian}.  Our results are still true for the case of dimension $2d$ with $d\geq 1$.   For the transparency of the presentation, we shall mainly be concerned with the two-dimensional case. 

The paper is organized as follows: In the next section  we state the
assumptions and the results precisely, and we give an outline of the proofs. 
In Section \ref{s3} we reduce the spectral study of $H(V)$ to the one of a system of $h^2$-pseudo-differential operators $E_{-+}(z)$. In Section \ref{s4},  we establish
 a trace formula involving the effective Hamiltonian $E_{-+}(z)$, and we prove  the results concerning the semi-classical  case.
Finally, Section \ref{s5} is devoted to the proofs of the results concerning the large coupling constant limit case.


\section{Formulations of main results}
\setcounter{equation}{0}

\subsection{Semi-classical case}  In this section  we will be concerned with  the semi-classical magnetic Schr\"odinger operator
$$H(h)=H_0+V(hx,hy),$$
where $V$ satisfies \eqref{e0}. By choosing $B={\rm constant}$,  we may actually assume that $B=1$.

Fix two  real numbers $a$ and $b$ such that $[a,b]\subset \mathbb{R}\setminus \sigma_{\rm ess}(H(h))$. We define

\begin{equation}\label{e24051}
l_0:={\rm min}\left\{q\in {\mathbb N}; V^{-1}([a-(2q+1),b-(2q+1)])\not=\emptyset\right\},\,\, 
\end{equation}
$$ l:={\rm sup}\left\{q\in {\mathbb N}; V^{-1}([a-(2q+1),b-(2q+1)])\not=\emptyset\right\}.$$

 We will give an asymptotic expansion   in powers of 
$h^2$ of ${\rm tr}(f(H(h),h))$ in  the two 
following cases:\\
a)
$f(x,h)=f(x)$, where $f\in C^\infty_0(\left(
a,b\right);{\mathbb R})$.\\
b)
$f(x,h)=f(x)\widehat\theta(\frac{x-\tau}{h^2})$, where $f,\theta\in
 C^\infty_0({\mathbb R};\mathbb R)$, $\tau \in \mathbb R$, and 
$\widehat\theta$ is the Fourier transform of $\theta$.\\
As a consequence, we get a sharp remainder estimate for the counting
spectral function of ${H(h)}$ when $h\searrow 0$.
Let us state the results precisely.

\begin{theorem}\label{t3}
Assume \eqref{e0},  and let  $f\in C_0^\infty((a,b);\mathbb{R})$.
There exists a sequence of real numbers $(\alpha_j(f))_{j\in {\mathbb N}}$, such that
\begin{equation}\label{e29055}
{\rm tr}(f(H(h)))\sim\sum_{k=0}^\infty \alpha_k(f)h^{2(k-1)},
\end{equation}
where 
\begin{equation}
\alpha_0(f)=\sum_{j=l_0}^{l}\frac{1}{2\pi}\iint f((2j+1)+V(x,y))dxdy.
\end{equation}
\end{theorem}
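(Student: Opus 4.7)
First I would fix an almost-analytic extension $\widetilde f\in C_0^\infty(\mathbb C)$ of $f$ supported in a complex neighbourhood $\Omega$ of $[a,b]$. Since $[a,b]\cap\sigma_{\rm ess}(H(h))=\emptyset$ and $\sigma(H_0)=\bigcup_{n\geq0}\{(2n+1)\}$ is disjoint from $\Omega$, one has $f(H_0)=0$, the operator $f(H(h))=f(H(h))-f(H_0)$ is trace class, and the Helffer--Sj\"ostrand identity gives
$$
{\rm tr}\,f(H(h))=-\frac{1}{\pi}\int_\Omega \overline\partial\widetilde f(z)\,{\rm tr}\bigl[(z-H(h))^{-1}-(z-H_0)^{-1}\bigr]\,L(dz).
$$

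Next I would invoke the Grushin reduction to the effective Hamiltonian $E_{-+}(z)$ announced in the introduction. Choose $z$-independent coupling operators $R_\pm$ so that the Grushin matrix is bijective on $\Omega$; its inverse has lower-right block $E_{-+}(z)$, an $(l-l_0+1)\times(l-l_0+1)$ matrix of $h^2$-pseudodifferential operators on $L^2(\mathbb R)$ whose diagonal principal symbols are $z-(2j+1)-V(x,y)$ (in the natural guiding-centre coordinates $(x,y)$), $j=l_0,\ldots,l$. The Schur complement together with differentiation of $\mathcal P(z)\mathcal E(z)=I$ in $z$ yield
$$
(z-H(h))^{-1}=E(z)-E_+(z)E_{-+}(z)^{-1}E_-(z),\qquad \partial_z E_{-+}(z)=-E_-(z)E_+(z).
$$
Applied with the same $R_\pm$ to $H_0$ the construction makes $E(z)-(z-H_0)^{-1}$ holomorphic in $\Omega$, hence its contribution in the $\overline\partial$-integral vanishes, and one is left with
$$
{\rm tr}\,f(H(h))=-\frac{1}{\pi}\int_\Omega\overline\partial\widetilde f(z)\,{\rm tr}\bigl[E_{-+}(z)^{-1}\partial_z E_{-+}(z)\bigr]\,L(dz).
$$

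Then, using composition and inversion in the $h^2$-Weyl calculus, $E_{-+}(z)^{-1}\partial_z E_{-+}(z)$ is again an $h^2$-pseudodifferential operator on $L^2(\mathbb R)$ with a complete asymptotic symbol whose principal part is the diagonal matrix with entries $\bigl(z-(2j+1)-V(x,y)\bigr)^{-1}$, $j=l_0,\ldots,l$. The standard semiclassical trace formula on $L^2(\mathbb R)$, with phase-space measure $(2\pi h^2)^{-1}\,dx\,dy$, produces
$$
{\rm tr}\bigl[E_{-+}(z)^{-1}\partial_z E_{-+}(z)\bigr]\;\sim\;\sum_{k\ge 0}h^{2(k-1)}\beta_k(z),\quad \beta_0(z)=\sum_{j=l_0}^{l}\frac{1}{2\pi}\iint\frac{dx\,dy}{z-(2j+1)-V(x,y)}.
$$
Inserting this expansion into the previous display and applying the Cauchy--Pompeiu identity $-\frac{1}{\pi}\int_{\mathbb C}\overline\partial\widetilde f(z)(z-\lambda)^{-1}L(dz)=f(\lambda)$ term by term produces the expansion \eqref{e29055} with real coefficients $\alpha_k(f)$; the $k=0$ term reads exactly
$$
\alpha_0(f)=\sum_{j=l_0}^{l}\frac{1}{2\pi}\iint f\bigl((2j+1)+V(x,y)\bigr)\,dx\,dy.
$$

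The main obstacle I anticipate is the symbolic side: one must check, uniformly in $z\in\Omega$, that the full Weyl symbol of $E_{-+}(z)$ admits a complete expansion in $h^2$ with coefficients smoothly depending on $z$, and that this expansion is stable under inversion so that every $\beta_k(z)$ is a polynomial in the resolvent symbols $(z-(2j+1)-V(x,y))^{-1}$ and their $(x,y)$- and $z$-derivatives. Once that symbolic control is in place, the manipulations with the Helffer--Sj\"ostrand contour and the termwise application of Cauchy--Pompeiu are routine, and the error terms are absorbed by standard $h^2$-pseudodifferential bounds.
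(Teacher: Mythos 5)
Your outline captures the high‐level structure — Helffer--Sj\"ostrand representation, Grushin reduction to $E_{-+}(z)$, symbol calculus, term‐by‐term application of Cauchy--Pompeiu — and those ingredients are indeed the ones used. But there are two concrete gaps that the paper's proof must (and does) address, and your plan glosses over both.

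\textbf{The trace-class problem.} You propose to evaluate ${\rm tr}\bigl[E_{-+}(z)^{-1}\partial_z E_{-+}(z)\bigr]$ and then to feed the result into the semiclassical trace rule with measure $(2\pi h^2)^{-1}dx\,d\xi$. This expression is not well defined: the Weyl symbol of $E_{-+}(z)^{-1}\partial_z E_{-+}(z)$ tends, as $(x,\xi)\to\infty$, to the nonzero diagonal matrix with entries $(z-(2j+1))^{-1}$ (because $V\to 0$), so the operator is a compact perturbation of a nonzero multiple of the identity on each block and is nowhere near trace class, and the corresponding phase-space integral diverges. The same danger infects your opening step: under the sole hypothesis $V\to 0$ (no $L^1$ decay), the resolvent difference $(z-H(h))^{-1}-(z-H_0)^{-1}$ need not be trace class, so one cannot simply pass ${\rm tr}$ under the $\overline\partial$-integral. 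The paper resolves this by introducing a reference operator $\widetilde E_{-+}(z)=E_{-+}(z)+(V^w-\widetilde V^w)I_l$ whose symbol is uniformly elliptic and coincides with that of $E_{-+}$ outside a compact set; the holomorphic term drops by Cauchy's theorem, and what remains carries a compactly supported symbol factor $\widetilde E_{-+}-E_{-+}$. After cyclic rearrangement this produces the trace formula with a cutoff $\chi^w(x,h^2D_x)$ (Proposition~\ref{p1}), which \emph{is} trace class. Without this step your intermediate formulas are formal.

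\textbf{The parity of the expansion.} You assert directly that the trace expands in powers of $h^2$. However the effective Hamiltonian $E_{-+}(z)$ only has a full asymptotic expansion in powers of $h$ (Corollary~\ref{c201201}), with the off-diagonal blocks carrying genuinely odd powers (see~\eqref{e26051}). Applying the standard semiclassical trace expansion to $E_{-+}(z)^{-1}\partial_z E_{-+}(z)\chi^w$ a priori gives all powers of $h$. The paper removes the odd powers by a symmetry argument: under $(x,y)\mapsto(x,-y)$ the operator $P(h)$ is unitarily equivalent to $P(-h)$, so $h^2\,{\rm tr}\,f(P(h))$ is an even function of $h$, and the odd coefficients in the $h$-expansion must vanish. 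Your sketch contains no analogue of this step, so the claimed form~\eqref{e29055} in $h^{2(k-1)}$ is unsupported as written.

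Once these two points are repaired, your remaining manipulations (holomorphy of $E$ and of $E(z)-(z-H_0)^{-1}$, the identity $\partial_z E_{-+}=E_-E_+$ — note the sign — Cauchy--Pompeiu applied to the leading symbol) coincide with the paper's route and correctly produce the stated $\alpha_0(f)$.
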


Let  $\theta\in C^\infty_0({\mathbb R})$, and let $\epsilon$ be a
positive  constant.
Set
$$\breve\theta(\tau)={1\over 2\pi }\int e^{it\tau}\theta(t)dt, 
\quad
\breve{\theta}_\epsilon(t)={1\over 
\epsilon}\breve{\theta}({t\over\epsilon}).$$
In the sequel we shall say that $\lambda$ is not a critical value of
$V$ if and only if $V(X)=\lambda$ for some $X \in \mathbb R ^{2}$ implies
$\nabla_X V(X) \neq 0$.

\begin{theorem}\label{t291102}  Fix $\mu\in\mathbb{R}\setminus\sigma_{\rm ess}(H(h))$ which is not a critical
value of $(2j+1+V)$, for $j=l_0,...,l$. Let $f\in C^\infty_0(\left(\mu-\epsilon,\mu+
\epsilon\right);\mathbb R)$ and
$\theta\in C^\infty_0(\left(-{1\over C},{1\over C}\right); {\mathbb R})$, with $\theta=1$ near $0$. 
Then there exist  $\epsilon >0$, $C>0$ 
and a functional sequence  $c_j \in C^\infty({\mathbb R}; \mathbb R)$, $j \in {\mathbb N}$,
such that for all $M, N\in {\mathbb  N}$,  we have 

\begin{equation}\label{e290561}
 {\rm tr}\left(f(H(h))\breve\theta_{h^2}(t-H(h))\right) =\sum_{k=0}^Mc_k(t)h^{2(k-1)}+\mathcal{O}\left(
\frac{h^{2M}}{\langle t\rangle^N}\right)
\end{equation}
uniformly in $t\in\mathbb{R}$,
where
\begin{equation}\label{e061201}
 c_0(t)=\frac{1}{2\pi}f(t)\sum_{j=l_0}^l\int_{\{(x,y)\in\mathbb{R}^2;\;2j+1+V(x,y)=t\} }\frac{d{S_t}}{|\nabla V(x,y)|}.
\end{equation}

\end{theorem}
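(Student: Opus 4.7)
The plan is a two-stage reduction followed by the standard non-critical semi-classical trace formula. First, using the effective Hamiltonian $E_{-+}(z)$ constructed in Section~\ref{s3} by a Grushin/Feshbach reduction, one trades the study of $H(h)$ near the energy $\mu$ for that of an $(l-l_0+1)\times(l-l_0+1)$ matrix-valued $h^2$-pseudodifferential operator in one (guiding-centre) variable, with principal symbol $\mathrm{diag}\bigl((2j+1)+V(x,\xi)-z\bigr)_{j=l_0}^{l}$. The Helffer--Sj\"ostrand formula applied to $f(H(h))$, combined with the Fourier representation
\[
\breve\theta_{h^2}(t-H(h))=\frac{1}{2\pi}\int_{\mathbb R} e^{i\sigma(t-H(h))}\theta(h^2\sigma)\,d\sigma
\]
and the Grushin parametrix for $(z-H(h))^{-1}$, then yields a trace identity (to be established in Section~\ref{s4}) of the form
\[
{\rm tr}\bigl(f(H(h))\breve\theta_{h^2}(t-H(h))\bigr)=\sum_{j=l_0}^{l}{\rm tr}\bigl(f(P_j(h))\breve\theta_{h^2}(t-P_j(h))\bigr)+\mathcal O(h^\infty\langle t\rangle^{-\infty}),
\]
where $P_j(h)$ is the scalar $h^2$-PDO extracted from the $j$-th diagonal block of $E_{-+}$, with Weyl symbol $p_j(x,\xi)=(2j+1)+V(x,\xi)+\mathcal O(h^2)$; the off-diagonal blocks, being elliptic near the relevant energy, contribute only $\mathcal O(h^\infty)$.

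Second, I would apply the standard non-critical semi-classical trace formula to each $P_j(h)$, viewing $h^2$ as the effective semi-classical parameter. After the change of variable $s=h^2\sigma$, the pseudodifferential symbolic calculus gives
\[
{\rm tr}\bigl(f(P_j(h))\breve\theta_{h^2}(t-P_j(h))\bigr)\sim\frac{1}{(2\pi)^2h^4}\iiint e^{is(t-p_j(x,\xi))/h^2}\theta(s)f(p_j)\,dx\,d\xi\,ds+\cdots.
\]
Because $\mu$ is a non-critical value of each $p_j$, the phase $s(t-p_j)$ is stationary only at $s=0$ on the support of the integrand, provided $1/C$ is chosen smaller than the minimal period of any closed classical orbit of $p_j$ meeting $\{p_j=\mu\}$. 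Non-degenerate stationary phase then produces the asymptotic expansion~\eqref{e290561} in powers of $h^2$. The leading term is computed by performing the $s$-integral first (which produces $2\pi h^2\breve\theta_{h^2}(t-p_j)$, tending to $\delta(t-p_j)$ as $h\to 0$ since $\theta(0)=1$) and invoking the Gelfand--Leray formula; summing over $j$ recovers $c_0(t)$ as in~\eqref{e061201}. The uniform $\langle t\rangle^{-N}$ decay is automatic: when $|t|\to\infty$ one has $|t-p_j(x,\xi)|\gtrsim\langle t\rangle$ on $\mathrm{supp}(f\circ p_j)$, so the Schwartz decay of $\breve\theta$ yields the desired polynomial bound, uniformly in $h$.

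The main obstacle is the rigorous derivation of the above trace identity uniformly in $t\in\mathbb R$, together with control of the long-time semi-classical propagator $e^{-i\sigma P_j(h)}$ on the effective range $|\sigma|\le 1/(Ch^2)$. This requires combining an Egorov-type theorem for the scalar $h^2$-PDOs $P_j(h)$ with the off-diagonal ellipticity of the Grushin parametrix, and exploiting the smallness of $\mathrm{supp}\,\theta$ to exclude non-trivial classical returns. Once this long-time symbolic calculus is in place, the remainder estimate $\mathcal O(h^{2M}/\langle t\rangle^N)$ follows by iterating the expansion and bookkeeping the $h^2$-orders.
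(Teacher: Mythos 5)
Your first stage (the Grushin reduction to the matrix-valued effective Hamiltonian $E_{-+}(z)$ and the Helffer--Sj\"ostrand formula) is the same as the paper's: this is exactly Proposition~\ref{p1}. But from there your route diverges from the paper's, and the divergence introduces a genuine gap.

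The paper keeps $E_{-+}(z)$ as a matrix-valued $h^2$-pseudodifferential operator and feeds the resulting trace integral into the black-box trace Theorem 1.8 of \cite{Di98}, whose hypothesis is precisely the micro-hyperbolicity of the principal symbol $e_{-+}^0$ at $\Sigma_\mu$ --- which is guaranteed by $\mu$ being a non-critical value of each $2j+1+V$. That theorem yields an expansion in powers of $h$, namely $\sum_j \gamma_j(t)h^{j-2}$. To get the claimed expansion in powers of $h^2$, the paper then invokes a symmetry argument: under $(x,y)\mapsto(x,-y)$, $P(h)$ is unitarily equivalent to $P(-h)$, so the trace is an even function of $h$; together with the validity of the expansion for $h<0$ this forces all odd-index coefficients $\gamma_{2j+1}$ to vanish. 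This parity step is essential, and it is entirely absent from your proposal.

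Your proposal instead claims to decouple $E_{-+}$ into scalar operators $P_j(h)$ ``with Weyl symbol $(2j+1)+V(x,\xi)+\mathcal O(h^2)$'' and asserts that the resulting expansion is directly in powers of $h^2$. This is not justified. While the \emph{diagonal} entries of $e_{-+}$ do expand in even powers of $h$ (equations \eqref{e20071} and \eqref{e22071}), the \emph{off-diagonal} entries expand in all powers of $h$ (equation \eqref{e26051}). A block-diagonalization of the full symbol to all orders is possible because the diagonal entries of the principal symbol differ by fixed non-zero constants, but the resulting scalar symbols in general pick up odd powers of $h$ from the interaction with the off-diagonal terms; nothing in your argument excludes those. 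So the final expansion you would obtain is a priori in powers of $h$, not $h^2$, and your invocation of ``non-degenerate stationary phase'' with ``$h^2$ as the effective semi-classical parameter'' silently assumes what has to be proved. Your second paragraph's claim that the off-diagonal blocks ``being elliptic near the relevant energy, contribute only $\mathcal O(h^\infty)$'' is also not quite the right mechanism; what is used is the constant spectral gap between Landau levels, not ellipticity of the off-diagonal blocks per se.

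Finally, you explicitly flag as ``the main obstacle'' the long-time control of $e^{-i\sigma P_j(h)}$ on $|\sigma|\le 1/(Ch^2)$ via an Egorov-type theorem, and you leave it open. This is precisely the technical content that the paper avoids re-deriving by citing Theorem 1.8 of \cite{Di98}; a complete proof along your lines would have to supply it. As written, your proposal is an outline that both omits the parity argument (hence does not prove the expansion is in $h^2$) and defers the core estimate to unspecified further work.
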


\begin{corollary}\label{t5}
In addition to the hypotheses of Theorem \ref{t3} suppose that $a $  and $b$ are not critical values of $((2j+1)+V)$ for all $j=l_0,...,l$.
Let ${\mathcal N}_h([a,b])$ be the number of eigenvalues of $H(h)$ in the interval $[a,b]$ counted with their multiplicities. Then we have 
\begin{equation}
{\mathcal N}_h([a,b])=h^{-2} C_0+\mathcal{O}(1),h\searrow 0,
\end{equation}
where 
\begin{equation}
 C_0=\frac{1}{2\pi}\sum_{j=l_0}^{l}{\rm Vol}\left(V^{-1}([a-(2j+1),b-(2j+1)])\right).
\end{equation}
\end{corollary}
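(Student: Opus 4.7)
The plan is to deduce the counting asymptotics from the two trace formulas of Theorems~\ref{t3} and \ref{t291102} via a H\"ormander-type Tauberian argument. Since $[a,b]\subset\mathbb{R}\setminus\sigma_{\mathrm{ess}}(H(h))$, fix $\delta>0$ so small that $[a-\delta,b+\delta]$ remains disjoint from $\sigma_{\mathrm{ess}}(H(h))$, and pick $f\in C^{\infty}_{0}((a-\delta,b+\delta);\mathbb{R})$ with $f\equiv 1$ on a neighborhood of $[a,b]$. On $\mathrm{supp}\,f$ the operator $H(h)$ has purely discrete spectrum, so the function
\begin{equation*}
N(t,h):=\sum_{\lambda_{j}(h)\leq t} f(\lambda_{j}(h))^{2}
\end{equation*}
is non-decreasing in $t$ and satisfies $\mathcal{N}_{h}([a,b])=N(b,h)-N(a^{-},h)$ exactly.

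Shrinking $\delta$ further if necessary, the non-criticality hypothesis on $a$ and $b$ ensures that no point of $[a-\delta,b+\delta]$ is a critical value of $(2j+1)+V$ for any $j=l_{0},\ldots,l$. Theorem~\ref{t291102}, applied to this $f$ and to some $\theta\in C^{\infty}_{0}((-1/C,1/C);\mathbb{R})$ with $\theta\equiv 1$ near $0$, then yields
\begin{equation*}
\mathrm{tr}\bigl(f(H(h))\,\breve\theta_{h^{2}}(t-H(h))\bigr)=h^{-2}c_{0}(t)+\mathcal{O}(1),
\end{equation*}
uniformly for $t$ in a neighborhood of $[a,b]$. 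The left-hand side is, up to the smoothing at scale $h^{2}$, the Stieltjes derivative of $N(\cdot,h)$. A standard Tauberian theorem of H\"ormander type, based on the monotonicity of $N(\cdot,h)$ and the positivity that can be arranged for $\breve\theta$, upgrades the smoothed estimate to
\begin{equation*}
N(t,h)=h^{-2}\int_{-\infty}^{t}c_{0}(s)\,ds+\mathcal{O}(1),
\end{equation*}
uniformly for $t$ near $[a,b]$. Taking the difference at $t=a$ and $t=b$ gives $\mathcal{N}_{h}([a,b])=h^{-2}\int_{a}^{b}c_{0}(s)\,ds+\mathcal{O}(1)$, and the coarea formula identifies this leading coefficient with $C_{0}$:
\begin{equation*}
\int_{a}^{b}c_{0}(s)\,ds=\frac{1}{2\pi}\sum_{j=l_{0}}^{l}\int_{a}^{b}\!\!\int_{\{(2j+1)+V=s\}}\!\!\frac{dS_{s}}{|\nabla V|}\,ds=\frac{1}{2\pi}\sum_{j=l_{0}}^{l}\mathrm{Vol}\bigl(V^{-1}([a-(2j+1),b-(2j+1)])\bigr).
\end{equation*}

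The main obstacle is the Tauberian inversion. Its sharpness $\mathcal{O}(1)$ reflects the exact match between the characteristic semiclassical level spacing in dimension two and the smoothing width $h^{2}$ of $\breve\theta_{h^{2}}$. Two ingredients are crucial: the non-criticality of $a$ and $b$, which guarantees that $c_{0}$ is continuous and bounded near the endpoints and hence that no abnormal clustering of eigenvalues can occur there; and the $\langle t\rangle^{-N}$ uniform-in-$t$ remainder supplied by Theorem~\ref{t291102}, which allows one to integrate the smoothed density against an approximate step function without losing the $h^{-2}$ scaling.
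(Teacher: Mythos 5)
Your Tauberian strategy is the right one and matches the paper's in spirit, but it contains a genuine gap in how you set it up. You write that ``shrinking $\delta$ further if necessary, the non-criticality hypothesis on $a$ and $b$ ensures that no point of $[a-\delta,b+\delta]$ is a critical value of $(2j+1)+V$ for any $j$.'' This is false: non-criticality of the two endpoints $a$ and $b$ only guarantees that small neighborhoods of $a$ and $b$ are free of critical values. The interior $(a,b)$ can, and generically will, contain critical values of the functions $(2j+1)+V$ (for instance, whenever $V^{-1}([a-(2j+1),b-(2j+1)])$ is compact and $V$ attains a local extremum there). Since Theorem \ref{t291102} is a statement only about $f$ supported in $(\mu-\epsilon,\mu+\epsilon)$ for a \emph{single} non-critical value $\mu$ with $\epsilon$ small, you are not entitled to apply it to an $f$ that is $\equiv 1$ on all of $[a,b]$, and the density $c_0(t)$ for $t$ deep inside $(a,b)$ is not even defined by that theorem (and would be singular at critical $t$).

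The paper circumvents exactly this by a partition of unity $\phi_1+\phi_2+\phi_3=1$ near $[a,b]$: the two edge pieces $\phi_1,\phi_3$, supported in small neighborhoods of the non-critical endpoints $a,b$, are treated with Theorem \ref{t291102} plus the Tauberian argument (establishing the $\mathcal{O}(1)$ local bound for the counting function $M(\lambda\pm\delta_0h^2,h)$, then integrating against $\breve\theta_{h^2}$), while the bulk piece $\phi_2$, supported in the interior where critical values may live, is handled directly by the smooth trace formula of Theorem \ref{t3}, which needs no non-criticality at all. Your proposal should be repaired by replacing your single $f\equiv 1$ by such a three-way decomposition, keeping your Tauberian argument only for the two endpoint pieces. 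Separately, a minor point: you define $N(t,h)=\sum_{\lambda_j\le t}f(\lambda_j)^2$, but the left-hand side of \eqref{e290561} is the $\breve\theta_{h^2}$-smoothing of $\sum_j f(\lambda_j)\delta(t-\lambda_j)$, with a single power of $f$; you should use $\sum_{\lambda_j\le t}f(\lambda_j)$ with $f\ge 0$, as the paper does with $\phi_3$, so that the counting function is monotone and the Tauberian inequality applies.
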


\subsection{Large coupling constant limit case.}

We  apply the above results to the Schr\"{o}dinger
 operator with constant magnetic field in the large coupling constant limit case. 
More precisely, consider 
\begin{equation}\label{e10}
H_{\lambda}=(D_{x}-y)^2+D_{y}^2+\lambda V(x,y).
\end{equation}
Here $\lambda$ is a large constant, and
 the electric potential $V$ is assumed to be strictly positive. Let $X:=(x,y)\in\mathbb{R}^2$.
We suppose in addition that for all $N\in\mathbb{N}$,
\begin{equation}\label{e11}
V(X)=\sum_{j=0}^{N-1}\omega_{2j}\left(\frac{X}{\left|X\right|}\right)|X|^{-\delta-2j}+r_{2N}(X),\mbox{ for }
 \left|X\right|\geq 1,
\end{equation}
where
\begin{itemize}
 \item $\omega_0\in C^\infty(\mathbb{S}^{1};(0,+\infty))$, 
$\omega_{2j}\in C^\infty(\mathbb{S}^{1};\mathbb{R})$, $j\geq 1$. Here 
$\mathbb{S}^1$ denotes the unit circle.
\item $\delta$ is some positive constant,
\item $|\partial_X^\beta r_{2N}(X)|\leq C_\beta(1+|X|)^{-|\beta|-\delta-2N},\;\forall\beta\in\mathbb{N}^2$.
\end{itemize}

Since $V$ is positive, it follows that $\sigma(H_\lambda)\subset[1,+\infty)$. 
Fix two  real numbers $a$ and $b$ such that $a>1$ and 
$[a,b]\subset \mathbb{R}\setminus \sigma_{\rm ess}(H_\lambda)$.  
Since $\sigma_{\rm ess}(H_\lambda)=\cup_{j=0}^\infty\{(2j+1)\}$,
there  exists $q\in{\mathbb N}$ such that $2q+1<a<b<2q+3$.  
The following results are consequences of Theorem \ref{t3}, Theorem \ref{t291102} and Corollary \ref{t5}.
\begin{theorem}\label{t1610}
Assume \eqref{e11},  and let  $f\in C_0^\infty((a,b);\mathbb{R})$.
There exists a sequence of real numbers $(b_j(f))_{j\in {\mathbb N}}$, such that
\begin{equation}\label{e121101}
{\rm tr}(f(H_{\lambda}))\sim\lambda^{\frac{2}{\delta}}
\sum_{k=0}^\infty b_k(f)\lambda^{-\frac{2k}{\delta}},\;\lambda\nearrow+\infty
\end{equation}
where
\begin{equation}
b_0(f)=\frac 1{2\pi\delta}\int_0^{2\pi} (\omega_0(\cos \theta,\sin \theta))^{\frac 2{\delta}} d\theta \sum_{j=0}^q\int  f(u)(u-(2j+1))^{-1-\frac 2{\delta}} du.
\end{equation}
\end{theorem}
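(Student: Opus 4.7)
The strategy is to reduce Theorem \ref{t1610} to Theorem \ref{t3} by encoding the large coupling constant as a semi-classical parameter. Set $h := \lambda^{-1/\delta}$; then $\lambda^{2/\delta}=h^{-2}$ and the claimed expansion \eqref{e121101} takes the form $\mathrm{tr}(f(H_\lambda))\sim h^{-2}\sum_{k\geq 0}b_k(f)h^{2k}$, matching the shape of \eqref{e29055}. The plan is to apply Theorem \ref{t3} to an auxiliary bounded potential $\widetilde V$ whose rescaling $\widetilde V(hx,hy)$ matches $\lambda V(x,y)$ modulo $O(h^2)$ in the classically allowed region for spectrum in $[a,b]$.

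First I would fix $\epsilon>0$ small enough that $\omega_0(\theta)\epsilon^{-\delta}>b-1$ for every $\theta\in[0,2\pi)$, and construct $\widetilde V\in C^\infty_b(\mathbb R^2;\mathbb R)$ vanishing at infinity with
\[
\widetilde V(Y)=\omega_0(Y/|Y|)\,|Y|^{-\delta}\text{ for }|Y|\geq\epsilon,\qquad \widetilde V(Y)>b-1\text{ for }|Y|\leq\epsilon,
\]
smoothly interpolated. Theorem \ref{t3} applied to $\widetilde H(h):=H_0+\widetilde V(hx,hy)$ yields an expansion $\mathrm{tr}(f(\widetilde H(h)))\sim\sum_{k}\widetilde\alpha_k(f)h^{2(k-1)}$. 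For $j=0,\dots,q$, the condition $\mathrm{supp}\,f\subset(a,b)$ forces $\widetilde V\in(a-(2j{+}1),b-(2j{+}1))\subset(0,b-1)$, so only the outer region $|X|\geq\epsilon$ (where $\widetilde V=\omega_0/|X|^\delta$) contributes to $\widetilde\alpha_0(f)=(2\pi)^{-1}\sum_{j=0}^q\iint f((2j{+}1)+\widetilde V(X))\,dX$. Passing to polar coordinates and substituting $u=(2j{+}1)+\omega_0(\theta)/r^{\delta}$, so that $r\,dr=-\delta^{-1}\omega_0(\theta)^{2/\delta}(u-(2j{+}1))^{-1-2/\delta}\,du$, produces exactly $\widetilde\alpha_0(f)=b_0(f)$.

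The core step is to show $\mathrm{tr}(f(H_\lambda))=\mathrm{tr}(f(\widetilde H(h)))+O(1)$. In the rescaled variable $Y=hX$, assumption \eqref{e11} gives, on $|X|\geq 1$,
\[
\lambda V(X)-\widetilde V(hX)=h^{2}\omega_{2}(Y/|Y|)|Y|^{-\delta-2}+\cdots+h^{2(N-1)}\omega_{2(N-1)}(Y/|Y|)|Y|^{-\delta-2(N-1)}+h^{-\delta}r_{2N}(X),
\]
so this discrepancy is $O(h^{2})$ in the far field $|Y|\geq\epsilon$. In the near region $|Y|<\epsilon$, both potentials exceed $b-1$ (for $\widetilde V$ by construction; for $\lambda V=h^{-\delta}V$ because $V>0$ is bounded below by $\omega_0 h^{\delta}/\epsilon^{\delta}$ on $|X|\leq\epsilon/h$ via the asymptotic expansion, and by a positive constant on any fixed compact piece), hence lie in the classically forbidden zone for spectrum in $[a,b]$. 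Feeding these observations into the effective-Hamiltonian reduction of Section \ref{s3}, the effective $h^{2}$-pseudo-differential symbols for the two operators at each Landau level $2j{+}1$ agree modulo $O(h^{2})$, and the trace formula of Section \ref{s4} yields identical leading coefficients $h^{-2}b_0(f)$.

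The higher-order coefficients $b_k(f)$, $k\geq 1$, follow by iteratively enlarging $\widetilde V$ with the homogeneous pieces $\omega_{2j}(Y/|Y|)|Y|^{-\delta-2j}$ for $j\geq 1$, each of which contributes at order $h^{2j}$ in the expansion provided by Theorem \ref{t3}, and then letting $N\to\infty$. The principal obstacle is the comparison step: the pointwise difference $|\lambda V(X)-\widetilde V(hX)|$ is of size $h^{-\delta}$ on the bounded region $|X|\lesssim\epsilon h^{-1}$, so no resolvent-norm comparison is available. One must instead invoke the Grushin-type effective-Hamiltonian framework of Sections \ref{s3}--\ref{s4}, which localizes the spectral information of both operators in $[a,b]$ to the classically allowed region $|Y|\gtrsim 1$, precisely where the two potentials are genuinely $O(h^2)$-close.
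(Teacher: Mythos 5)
Your outline matches the paper's overall strategy: rescale via $h=\lambda^{-1/\delta}$, replace $\lambda V$ by a bounded reference potential that agrees with the asymptotic expansion of $h^{-\delta}V(X/h)$ away from the origin, apply Theorem \ref{t3} to the reference operator, and pass to polar coordinates to compute $b_0$. The change of variables and the use of Lemma \ref{l291101}-type information (that only the homogeneous tail of the potential matters on $\mathrm{supp}\,f$) are also right.

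However, there is a genuine gap precisely at the step you yourself flag: proving
$\mathrm{tr}(f(H_\lambda))=\mathrm{tr}(f(\widetilde H(h)))+\mathcal O(h^\infty)$. You assert that the Grushin/effective-Hamiltonian machinery of Sections \ref{s3}--\ref{s4} will "localize the spectral information to the classically allowed region," but this cannot be invoked as a black box here. That machinery is built for a fixed bounded $V$ rescaled as $V(hX)$; the operator $H_\lambda$ does not fit that form, since $\lambda V(X)=\varphi(hX;h)+W_h(X)$ where the correction $W_h$ is of size $h^{-\delta}$ on a ball of radius $\sim h^{-1}$. Two effective Hamiltonians built from the two potentials do \emph{not} agree modulo $\mathcal O(h^2)$: their symbols differ by $\mathcal O(h^{-\delta})$ on the inner region, and nothing in the Grushin construction a priori suppresses this region. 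What actually closes the gap in the paper is a different, non-Grushin mechanism: one compares resolvents directly via a two-sided resolvent identity that isolates a holomorphic term $(z-H_{F_2})^{-1}W_h(z-H_{F_1})^{-1}$ (which integrates to zero against $\overline\partial_z\widetilde f$) plus a remainder $G(z)$, and one proves $\|G(z)\|_{\mathrm{tr}}=\mathcal O(h^\infty|\mathrm{Im}\,z|^{-2})$. The $\mathcal O(h^\infty)$ comes from exponential decay of Green's functions of the \emph{shifted} operators $H_{F_j}$ (whose potentials are $\geq M/3$, hence yield resolvents with kernels decaying at a fixed exponential rate), combined with the $\mathcal O(h^{-1})$ separation between $\mathrm{supp}\,W_h$ and $\mathrm{supp}(\varphi(h\cdot;h)-F_1)$; the key technical ingredients are Kato's inequality, the domination $|G(X,Y;z)|\leq G_0(X,Y;\mathrm{Re}\,z)$ for the nonmagnetic comparison operator, and the weighted kernel estimate from \cite[Prop.~3.3]{Di94}. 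None of these steps is a consequence of the Section \ref{s3} reduction, so your appeal to the effective Hamiltonian here does not fill the hole.

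A secondary point: your plan to obtain $b_k(f)$, $k\geq 1$, by "iteratively enlarging $\widetilde V$" is more awkward than necessary and does not clearly produce a single asymptotic expansion. The cleaner route is to build the reference potential $\varphi(X;h)$ once, with a complete $h^2$-expansion $\varphi\sim\sum\varphi_{2j}h^{2j}$ in bounded symbols, and to note that Theorem \ref{t3} remains valid for $h$-dependent potentials admitting such an expansion; then all coefficients $b_k(f)$ come out of one application.
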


\begin{theorem} \label{t291101}
 Let $f\in C^\infty_0(\left(a-\epsilon,b+
\epsilon\right);\mathbb R)$ and
$\theta\in C^\infty_0(\left(-{1\over C},{1\over C}\right); {\mathbb R})$, with $\theta=1$ near $0$. 
Then there exist  $\epsilon >0$, $C>0$ 
and a functional sequence  $c_j \in C^\infty({\mathbb R}; \mathbb R)$, $j \in {\mathbb N}$,
such that for all $M, N\in {\mathbb  N}$,  we have

\begin{equation}\label{e29056}
 {\rm tr}\left(f(H_\lambda)\breve\theta_{\lambda^{-\frac{2}{\delta}}}(t-H_\lambda)\right) =\lambda^{\frac{2}{\delta}}\sum_{k=0}^{M}c_k(t)\lambda^{-\frac{2k}{\delta} }+\mathcal{O}\left(
\frac{\lambda^{-\frac{2M}{\delta}}}{\langle t\rangle^N}\right)
\end{equation}
uniformly in $t\in\mathbb{R}$,
where
\begin{equation}
 c_0(t)=\frac{1}{2\pi}f(t)\sum_{j=0}^q\int_{\{X\in\mathbb{R}^2;\;2j+1+W(X)=t\} }\frac{d{S_t}}{|\nabla_X W(X)|}.
\end{equation}
Here  $W(X)=\omega_0(\frac{X}{\vert X\vert})\vert X\vert^{-\delta}$.

\end{theorem}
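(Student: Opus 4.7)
The plan is to deduce Theorem \ref{t291101} from the semi-classical trace formula of Theorem \ref{t291102} by rescaling. Set $h:=\lambda^{-1/\delta}$; then the window size $\lambda^{-2/\delta}$ in \eqref{e29056} equals $h^2$, matching \eqref{e290561}. Conjugating by the unitary dilation $(U_h\psi)(Y):=h^{-1}\psi(Y/h)$, with intertwinings $U_hD_XU_h^{-1}=hD_Y$ and $U_hXU_h^{-1}=h^{-1}Y$, gives
\[
U_hH_\lambda U_h^{-1}=h^{-2}\bigl[(h^2D_x-y)^2+(h^2D_y)^2\bigr]+h^{-\delta}V(Y/h),
\]
and by the homogeneity $W_{2j}(Y/h)=h^{\delta+2j}W_{2j}(Y)$ of the profiles $W_{2j}(Y):=\omega_{2j}(Y/|Y|)|Y|^{-\delta-2j}$, expansion \eqref{e11} yields
\[
h^{-\delta}V(Y/h)=\sum_{j=0}^{N-1}h^{2j}W_{2j}(Y)+\mathcal{O}\bigl(h^{2N}\bigr)
\]
uniformly on any compact subset of $\mathbb R^2\setminus\{0\}$.

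Next I realize the rescaled operator as one of the standard semi-classical form of Theorem \ref{t291102}. Since $a>1$, for every $j\in\{0,\dots,q\}$ the level set $\{W_0(Y)\in[a-(2j+1),b-(2j+1)]\}$ is a compact annulus bounded away from $0$ and from $\infty$. Choose a cutoff $\chi\in C^\infty(\mathbb R^2)$, bounded with all its derivatives, equal to $1$ on a neighborhood of the union of these annuli and vanishing near $Y=0$, and set $\widetilde V(Y):=\chi(Y)W_0(Y)$, which lies in $C^\infty(\mathbb R^2;\mathbb R)$, is bounded with all its derivatives, and tends to $0$ at infinity. Then the same calculation gives
\[
U_h\bigl[H_0+\widetilde V(hX)\bigr]U_h^{-1}=h^{-2}\bigl[(h^2D_x-y)^2+(h^2D_y)^2\bigr]+\widetilde V(Y),
\]
so on the support of $\chi$ the operators $U_hH_\lambda U_h^{-1}$ and $U_h[H_0+\widetilde V(hX)]U_h^{-1}$ differ only by the subleading terms $h^2W_2(Y)+h^4W_4(Y)+\cdots$.

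Two ingredients then combine to yield \eqref{e29056}. First, a localization/Agmon-type argument, implemented either via the Helffer--Sj\"ostrand formula or directly through the effective-Hamiltonian reduction of Section \ref{s3}, shows that ${\rm tr}\bigl(f(H_\lambda)\breve\theta_{h^2}(t-H_\lambda)\bigr)$ equals, modulo $\mathcal{O}\bigl(h^\infty\langle t\rangle^{-\infty}\bigr)$, the analogous trace for $H_0+\widetilde V(hX)$: the classically forbidden region $\{1-\chi\ne 0\}$, which contains the singular region near $Y=0$ where $h^{-\delta}V(Y/h)$ blows up, is separated from the energy window $(a,b)$ and contributes only an $h^\infty$ remainder. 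Second, Theorem \ref{t291102} applied to $H_0+\widetilde V(hX)$ gives the $h^2$-expansion with leading coefficient \eqref{e061201}, and the subleading terms $h^{2k}W_{2k}$ from the first step are absorbed by a standard perturbative argument, producing the full series in \eqref{e29056}. The coefficient $c_0(t)$ reduces to the stated formula because $\widetilde V=W_0$ on the level sets corresponding to energies $t\in(a,b)$. The main obstacle is the first ingredient: handling the singularity of $W_0$ at $Y=0$ while keeping all constants uniform in $h$. This is precisely where the effective-Hamiltonian machinery of Section \ref{s3}, which analyses the spectrum in the window $(a,b)$ through a microlocal model that is insensitive to the behavior of $V$ in the classically forbidden region, is expected to do the heavy lifting.
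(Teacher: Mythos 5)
Your strategy is broadly aligned with the paper: set $h=\lambda^{-1/\delta}$, replace $H_\lambda$ by a reference operator whose potential is bounded uniformly in $h$ (in the paper, $Q=H_0+\varphi(hX;h)$ with $\varphi(X;h)=(1-\chi(X))h^{-\delta}V(X/h)+M\chi(X)$), show the two traces agree modulo $\mathcal{O}(h^\infty)$, and then invoke Theorem \ref{t291102}. The dilation $U_h$ you introduce is a reasonable heuristic for why the scaling is right, but the paper never conjugates by a dilation; it compares $H_\lambda$ and $Q$ directly on the original $L^2(\mathbb R^2)$.

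The genuine gap is your ``first ingredient.'' You correctly flag that the comparison
$$
{\rm tr}\bigl(f(H_\lambda)\breve\theta_{h^2}(t-H_\lambda)\bigr)-{\rm tr}\bigl(f(Q)\breve\theta_{h^2}(t-Q)\bigr)=\mathcal{O}\bigl(h^\infty\langle t\rangle^{-\infty}\bigr)
$$
is the crux, but you then defer it to ``the effective-Hamiltonian machinery of Section \ref{s3},'' and that is not where this lives. Section \ref{s3} builds the Grushin reduction for $H_0+V(hX)$ with \emph{bounded} $V$; it cannot be applied to $H_\lambda=H_0+h^{-\delta}V(X)$, whose potential is unbounded as $h\to 0$, nor does it control the region near the origin where $h^{-\delta}V(X)$ blows up. The paper instead proves \eqref{eL2} by an independent argument: it writes the difference of resolvents via the resolvent identity as in \eqref{e121107}, and then establishes that the remainder $G(z)$ is trace class with $\|G(z)\|_{\rm tr}=\mathcal{O}(h^\infty|{\rm Im}\,z|^{-2})$ (Proposition \ref{p291101}). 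That proposition rests on three nontrivial ingredients you do not supply: (i) the Gaussian bound $|K_0(X,Y,z)|\leq Ce^{-|X-Y|^2/8}$ for the magnetic resolvent kernel, giving the Hilbert--Schmidt estimate of Lemma \ref{l22051}; (ii) Kato's inequality to dominate the magnetic Green function $G(X,Y;z)$ by the non-magnetic one $G_0(X,Y;{\rm Re}\,z)$; and (iii) the exponential decay, from \cite[Proposition 3.3]{Di94}, of $W_h(X)G_0(X,Y;{\rm Re}\,z)(\varphi(hY;h)-F_1(Y;h))$ across the $\mathcal{O}(1/h)$ gap between supports in \eqref{e221001} (Lemma \ref{l22052}). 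Without a concrete substitute for these estimates the $\mathcal{O}(h^\infty)$ comparison, and hence the theorem, is not established.

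Two smaller omissions: you replace $h^{-\delta}V$ only by the leading profile $\widetilde V=\chi W_0$ and defer the correction $h^2W_2+h^4W_4+\cdots$ to ``a standard perturbative argument,'' whereas the paper folds the full $h^2$-expansion into $\varphi(\cdot;h)$ once and for all; this avoids having to re-expand. And you do not check the micro-hyperbolicity / non-criticality hypothesis of Theorem \ref{t291102}: the paper verifies it at the end of Section \ref{s5} by observing $X\cdot\nabla_X\bigl(\omega_0(X/|X|)|X|^{-\delta}\bigr)=-\delta\,\omega_0(X/|X|)|X|^{-\delta}\neq 0$, so that no energy in $[a,b]$ is a critical value of $2j+1+\varphi_0$. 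Finally, the uniform-in-$t$ remainder $\mathcal{O}(\lambda^{-2M/\delta}/\langle t\rangle^N)$ requires the Paley--Wiener bound \eqref{e121111} and a careful treatment of $\overline\partial_z(\widetilde f\phi_h)$; you do not address this.
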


\begin{corollary}\label{t7}
Let ${\mathcal N}_\lambda([a,b])$ be the number of eigenvalues of $H_{\lambda}$ in the interval 
$[a,b]$ counted with their multiplicities.  We have
$$ {\mathcal N}_\lambda([a,b])=\lambda^{\frac{2}{\delta}} D_0+\mathcal{O}(1),\;\lambda\nearrow +\infty,$$
where $$ D_0=\frac{1}{4\pi}\sum_{j=0}^{q}\left((a-2j-1)^{-\frac{2}{\delta}}-(b-2j-1)^{-\frac{2}{\delta}}\right) \int_0^{2\pi} \left(\omega_0(\cos \theta,\sin \theta)\right)^{\frac{2}{\delta}}d\theta.$$
\end{corollary}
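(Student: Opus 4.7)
The plan is a Tauberian argument paralleling the proof of Corollary~\ref{t5}, combining Theorem~\ref{t1610} (full asymptotic expansion for $\lambda$-independent smooth $f$) with Theorem~\ref{t291101} (smoothed spectral density at scale $\lambda^{-2/\delta}$). The indispensable non-criticality input is built into the setup: writing $W(X)=r^{-\delta}\omega_0(\cos\theta,\sin\theta)$ in polar coordinates, the radial derivative $\partial_r W=-\delta\,r^{-\delta-1}\omega_0<0$ never vanishes for $r>0$, so every positive value of $W$ is non-critical. Consequently, for each $j=0,\ldots,q$ and every $t\in[a,b]$ the level $t-(2j+1)>0$ is a non-critical value of $W$, so Theorem~\ref{t291101} applies uniformly in $t$ on $[a,b]$.

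I would first bracket $\mathbbm{1}_{[a,b]}$ between smooth cutoffs at scale $\lambda^{-2/\delta}$: choose $f_\pm\in C_0^\infty(\mathbb{R};\mathbb{R})$ with $f_-\le\mathbbm{1}_{[a,b]}\le f_+$, both equal to $1$ on $[a+c\lambda^{-2/\delta},\,b-c\lambda^{-2/\delta}]$ and supported in $[a-c\lambda^{-2/\delta},\,b+c\lambda^{-2/\delta}]$, so that ${\rm tr}(f_-(H_\lambda))\le\mathcal{N}_\lambda([a,b])\le{\rm tr}(f_+(H_\lambda))$. Taking an auxiliary $\lambda$-independent $\tilde f\in C_0^\infty((a-\epsilon,b+\epsilon);\mathbb{R})$ equal to $1$ on a neighborhood of $[a,b]$, Theorem~\ref{t1610} gives
\begin{equation*}
{\rm tr}(\tilde f(H_\lambda))=\lambda^{2/\delta}b_0(\tilde f)+\mathcal{O}(1),
\end{equation*}
since the sub-leading terms in the expansion in powers of $\lambda^{-2/\delta}$ are all $\mathcal{O}(1)$. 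As $\tilde f-f_\pm$ is supported inside two windows $I_a,I_b$ of width $\mathcal{O}(\lambda^{-2/\delta})$ flanking $a$ and $b$, the problem reduces to showing $\mathcal{N}_\lambda(I_a\cup I_b)=\mathcal{O}(1)$.

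This last estimate is the crux, and it would be obtained by applying Theorem~\ref{t291101} with the spectral parameter $t$ ranging over $I_a\cup I_b$: the leading density $c_0(t)$ is continuous and uniformly bounded on $[a-\epsilon,b+\epsilon]$ by the above non-criticality, so integrating $\lambda^{2/\delta}c_0(t)\,dt$ over a window of length $\mathcal{O}(\lambda^{-2/\delta})$ contributes $\mathcal{O}(1)$, and the standard Tauberian passage from $\breve\theta_{\lambda^{-2/\delta}}$-smoothed counts to sharp counts transfers this bound to $\mathcal{N}_\lambda(I_a\cup I_b)$. Assembling these pieces yields $\mathcal{N}_\lambda([a,b])=\lambda^{2/\delta}b_0(\mathbbm{1}_{[a,b]})+\mathcal{O}(1)$, and a direct evaluation using
\begin{equation*}
\int_a^b(u-(2j+1))^{-1-2/\delta}\,du=\frac{\delta}{2}\bigl[(a-(2j+1))^{-2/\delta}-(b-(2j+1))^{-2/\delta}\bigr]
\end{equation*}
identifies $b_0(\mathbbm{1}_{[a,b]})$ with $D_0$. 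The main obstacle will be the Tauberian step: obtaining the sharp $\mathcal{O}(1)$ remainder, rather than the $\mathcal{O}(\lambda^{1/\delta})$ bound one would extract from a crude two-dimensional Weyl argument, relies on the reduction of Section~\ref{s3}, which replaces $H_\lambda$ near $[a,b]$ by the essentially one-dimensional matrix-valued $h^2$-pseudodifferential effective Hamiltonian $E_{-+}(z)$; for such an operator the counting-function Weyl remainder at non-critical energies is $\mathcal{O}(1)$, and this bound propagates back to $H_\lambda$ through the reduction.
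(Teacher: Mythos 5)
Your broad plan---Tauberian argument combining Theorem~\ref{t1610} with Theorem~\ref{t291101}, using non-criticality of $W(X)=\omega_0(\tfrac{X}{|X|})|X|^{-\delta}$ (the radial-derivative observation is exactly what the paper uses at the end of Section~\ref{s5})---is in the right spirit, and the endgame computation of $b_0(\mathbbm 1_{[a,b]})=D_0$ is correct. But the reduction step is wrong. If $\widetilde f$ is a fixed smooth function equal to $1$ on a neighbourhood of $[a,b]$ and supported in $(a-\epsilon,b+\epsilon)$, then $\widetilde f-f_\pm$ is \emph{not} supported in two windows of width $\mathcal O(\lambda^{-2/\delta})$: it equals $1$ on, say, $[a-\epsilon/2,\,a-c\lambda^{-2/\delta}]$ and $[b+c\lambda^{-2/\delta},\,b+\epsilon/2]$, which are intervals of \emph{fixed} width. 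Consequently ${\rm tr}\bigl((\widetilde f-f_\pm)(H_\lambda)\bigr)$ is of order $\lambda^{2/\delta}$, not $\mathcal O(1)$, and cannot be controlled by the small-window count $\mathcal N_\lambda(I_a\cup I_b)=\mathcal O(1)$. Equivalently, $b_0(\widetilde f)\neq D_0$: the two differ by the integral of $\widetilde f(u)(u-(2j+1))^{-1-2/\delta}$ over the flanking fixed-width intervals, which multiplies $\lambda^{2/\delta}$, so the proposed identity ${\rm tr}(\widetilde f(H_\lambda))=\lambda^{2/\delta}D_0+\mathcal O(1)$ is off by a term of leading order.

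What is actually needed is to \emph{compute}, not merely bound, the endpoint contributions. The paper does this (for the analogous Corollary~\ref{t5}, then transfers via \eqref{eL1}--\eqref{eL2} to the reference operator $Q$) by writing $\mathbbm 1=\phi_1+\phi_2+\phi_3$ with three \emph{fixed}, $\lambda$-independent bump functions, handling $\phi_2$ by the sharp trace formula and each endpoint piece $\phi_1,\phi_3$ via the function $M(\tau,h)=\sum_{\gamma_j\le\tau}\phi_3(\gamma_j)$: one first obtains the $\mathcal O(1)$ oscillation bound on scale $h^2$ (Corollary~\ref{c061201}, which is the correct content of your ``$\mathcal N_\lambda(I_a\cup I_b)=\mathcal O(1)$'' observation), and then integrates the smoothed density $\breve\theta_{h^2}\star\mathcal M$ over a half-line to get $M(\lambda,h)=h^{-2}m(\lambda)+\mathcal O(1)$ with $m(\lambda)=\int_{-\infty}^\lambda c_0$. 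That half-line integration, which produces an $\mathcal O(h^{-2})$ leading term that must cancel correctly against the $\phi_2$ piece, is the step your sketch skips; the $\mathcal O(1)$ small-window bound is only one ingredient and does not by itself reduce the count on $[a,b]$ to a bounded error.
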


\subsection{Outline of the proofs}

The purpose  of this subsection is to provide a broad  outline of the proofs. By a change of variable on the phase space, the operator $H(h)$ is unitarily equivalent to
$$P(h):=P_0+V^w(h)=
-\frac{\partial^2}{\partial y^2} + y^2+  V^w(x+hD_y,hy+h^2D_x), X=(x,y)\in {\mathbb R}^2.$$
Let $\Pi=1_{[c,d]}(P_0)$ be the spectral projector of the harmonic oscillator on the interval $[c,d]=\left[a-\Vert V\Vert_{L^\infty(\mathbb{R}^2)}, b+\Vert V\Vert_{L^\infty(\mathbb{R}^2)}\right]$.  Using the explicit expression of $\Pi$ we will reduce the spectral study of $(P-z)$ for $z\in [a,b]+i[-1,1]$ to the study of a system of $h^2$-pseudo-differential operator, $E_{-+}(z)$  depending  only on $x$ (see Remark \ref{r1} and Corollary \ref{c201201}).
In particular, modulo ${\mathcal O}(h^\infty)$, we are reduced   to proving  Theorem \ref{t3} and Theorem \ref{t291102} for a system of $h^2$-pseudo-differential operator  (see Proposition \ref{p1}). 
Thus,   \eqref{e29055} and \eqref{e290561} follows easily  from Theorem 1.8 in \cite{Di98} (see also \cite{Di01}). Corollary \ref{t5} is a simple consequence of Theorem \ref{t3}, Theorem \ref{t291102} and a Tauberian-argument.

To deal with the large coupling constant limit case, we note that  for all $M>0$ and $\lambda$ 
 large enough, we have
 $$\left\{(x, y,\eta,\xi)\in {\mathbb R}^4; \vert (x,y)\vert<M,\,\, (\xi-y)^2+\eta^2+\lambda V(x,y)\in [a,b]\right\}=\emptyset.$$
 Thus, on the symbolic level, only the behavior of $V(x,y)$ at infinity contributes to the asymptotic behavior of  the left hand sides of \eqref{e121101} and \eqref{e29056}. 
 Since, for $\vert X\vert $ large enough, $\lambda V(X)=\varphi_0(hX)+\varphi_2(hX)h^2+\cdots+ 
\varphi_{2j}(hX)h^{2j}+\cdots$ with $h=\lambda^{-\frac 1{\delta}}$ and $\varphi_0(X)=\omega_0(\frac{X}{|X|})|X|^{-\delta}$, 
 Theorem \ref{t1610} (resp. Theorem \ref{t291101}) follows from Theorem \ref{t3} (resp. Theorem \ref{t291102}).

\section{The effective Hamiltonian}\label{s3}

\subsection {Classes of symbols}
\setcounter{equation}{0}

Let $M_n(\mathbb{C})$ be the space of complex square matrices of order $n$. We recall  the standard class of  semi-classical matrix-valued symbols on $T^*{\mathbb R}^d={\mathbb R}^{2d}$:
 $$
 S^m({\mathbb R}^{2d};M_n({\mathbb C}))=\left\{a\in { C}^\infty({\mathbb R}^{2d}\times (0,1]; M_n({\mathbb C})) ;\,
\|  \partial_x^\alpha\partial_\xi^\beta a(x,\xi;h)\|_{M_n({\mathbb C})}\leq  C_{\alpha,\beta} h^{-m}\right\}.$$
We note that the symbols are tempered as $h\searrow 0$.  The more general class $S^m_\delta({\mathbb R}^{2d};M_n({\mathbb C}))$, where the right hand side in the above estimate is replaced by $C_{\alpha,\beta} h^{-m-\delta(\vert \alpha\vert+\vert \beta\vert)}$, has nice  quantization properties as long as $0\leq \delta\leq \frac{1}{2}$ (we refer to \cite[Chapter 7]{DiSj99}).

For $h$-dependent symbol $a\in S_\delta^m({\mathbb R}^{2d};M_n({\mathbb C}))$, we say that $a$ has an asymptotic  expansion in powers of $ h$ in $ S_\delta^m(\mathbb{R}^{2d};M_n(\mathbb{C}))$ and we write 
 $$a\sim\sum\limits_{j\geq 0}a_jh^j,$$
if there exists a sequence of symbols
$a_j(x,\xi)\in S_\delta^m(\mathbb{R}^{2d};M_n(\mathbb{C}))$ such that for all $N\in {\mathbb N}$, we have
$$a-\sum_{j=0}^Na_jh^j\in S_\delta^{m-N-1}(\mathbb{R}^{2d};M_n(\mathbb{C})).$$

In the special case when $m=\delta=0$ (resp. $m=\delta=0, n=1)$, we will write
 $S^0(\mathbb{R}^{2d};M_n(\mathbb{C}))$  (resp.  $S^0(\mathbb{R}^{2d})$) instead  of  $S^0_0(\mathbb{R}^{2d};M_n(\mathbb{C}))$
( resp. $S_0^0(\mathbb{R}^{2d}; M_1({\mathbb C})$).

We will use the
standard Weyl quantization of symbols. More precisely, if
$a\in S_\delta^m(\mathbb{R}^{2d};M_n(\mathbb{C}))$, then $a^w(x,hD_x;h)$
 is the operator defined by

$$a^w(x,hD_x;h)u(x)=(2\pi h)^{-d}\iint e^{\frac{i(x-y,\xi)}{h}}a\left(\frac{x+y}{2},\xi;h\right)u(y)dyd\xi,\;u\in\mathcal{S}(\mathbb{R}^d;\mathbb{C}^n).$$

In order to prove our main results, we shall recall some well-known  
results 

\begin{proposition}\label{p291102} {\bf (Composition formula)}
Let $a_i\in S^{m_i}_\delta({\mathbb R}^{2d};M_n( {\mathbb C}))$, $i=1,2$,  $\delta\in
\left. \lbrack 0,{1\over 2}\right)$. Then
$b^w(y,h D_y;h)=a_1^w(y,h D_y)\circ a^w_2(y,h D_y)$ is an 
$h$-pseudo-differential operator, and
$$
b(y,\eta;h)\sim\sum_{j=0}^\infty b_j(y,\eta)
h^j, \,\,\hbox {\rm in } S^{m_1+m_2}_\delta({\mathbb R}^{2d}; M_n({\mathbb C})).
$$
\end{proposition}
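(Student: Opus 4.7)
The statement is the standard Weyl-calculus composition formula for $h$-semiclassical matrix-valued symbols, so the plan is to follow the classical argument (as in Dimassi–Sjöstrand, Chapter 7) and indicate how the scalar proof carries over verbatim to the matrix-valued setting. The plan has three main steps.

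First, starting from the Weyl quantization formula, I would compute the kernel of $a_1^w \circ a_2^w$ acting on $u \in \mathcal{S}({\mathbb R}^d;{\mathbb C}^n)$, write it as an oscillatory integral, and perform the change of variables which identifies it as the Weyl quantization of a symbol $b(y,\eta;h)$ given by the exact formula
$$b(y,\eta;h) = (\pi h)^{-2d} \iint e^{-\frac{2i}{h}\sigma((y_1,\eta_1),(y_2,\eta_2))} a_1(y+y_1,\eta+\eta_1)\, a_2(y+y_2,\eta+\eta_2)\, dy_1\, d\eta_1\, dy_2\, d\eta_2,$$
where $\sigma$ is the standard symplectic form on $T^*{\mathbb R}^d$. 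The matrix-valued case works exactly like the scalar one since composition of matrices is involved only pointwise under the integral.

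Second, I would derive the asymptotic expansion by the method of stationary phase applied to the quadratic, nondegenerate phase $-2\sigma(\cdot,\cdot)/h$. This yields the formal Moyal expansion
$$b(y,\eta;h) \sim \sum_{j\geq 0} \frac{1}{j!}\left(\frac{ih}{2}\sigma(D_{x_1},D_{\xi_1};D_{x_2},D_{\xi_2})\right)^{\!j} a_1(x_1,\xi_1)a_2(x_2,\xi_2)\Big|_{\substack{x_1=x_2=y\\\xi_1=\xi_2=\eta}},$$
with the $j$-th term living in $S^{m_1+m_2}_\delta({\mathbb R}^{2d};M_n({\mathbb C}))$ because matrix products preserve the symbol estimates. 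The $j=0$ term is $a_1 a_2$.

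Third — and this is the main obstacle — I would control the remainder $b - \sum_{j<N} b_j h^j$ uniformly in $h \in (0,1]$ in the $S^{m_1+m_2}_\delta$ topology. The standard bound one gets from stationary phase on a symbol of class $S_\delta$ is that each step of the expansion loses a factor of $h^{-2\delta}$ (two derivatives per application of $\sigma(D;D)$) but gains the explicit factor $h$, so the net gain per order is $h^{1-2\delta}$; the hypothesis $\delta < 1/2$ is precisely what makes this a genuinely small parameter and lets one iterate. The technical work is to justify this via a Kuranishi-type change of variables and repeated integration by parts against the non-stationary phase, combined with the Calderón–Vaillancourt-type $L^2$-bounds to ensure that the remainder oscillatory integrals converge and obey the required symbol estimates $|\partial_y^\alpha\partial_\eta^\beta (\text{rem})| \leq C_{\alpha\beta,N}\, h^{N(1-2\delta)-m_1-m_2-\delta(|\alpha|+|\beta|)}$. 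Since all the pointwise matrix operations commute with differentiation and taking norms, these estimates transfer word for word from the scalar proof; thus one concludes $b \in S^{m_1+m_2}_\delta({\mathbb R}^{2d};M_n({\mathbb C}))$ with the stated asymptotic expansion.
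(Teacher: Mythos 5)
The paper states this proposition as a recalled standard fact without giving a proof, referring to the symbol calculus of \cite{DiSj99} (Chapter~7) for background. Your outline correctly reproduces the standard argument there: exact oscillatory-integral formula for the Weyl symbol of the composition, stationary phase in the nondegenerate quadratic symplectic phase to produce the Moyal expansion, and remainder control via non-stationary-phase integration by parts, with the observation that each step gains a net factor $h^{1-2\delta}$, so the hypothesis $\delta<1/2$ is precisely what makes the expansion asymptotic; the matrix-valued case goes through verbatim since the matrix product appears only pointwise inside the integral and commutes with the differential operators.
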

\begin{proposition}\label{p291103}{\bf (Beals characterization)}
 Let $A=A_h: {\mathcal S}({\mathbb R}^d; \mathbb C^n)\to {\mathcal S}'({\mathbb R}^d; \mathbb C^n)$, $0<h\leq 1$.
 The following two statements are equivalent:\\
 (1) $A=a^w(x,h D_x;h)$, for some $a=a(x,\xi ;h)\in S^0({\mathbb R}^{2d}; M_n(\mathbb C))$.\\
(2) For every $N\in {\mathbb N}$ and for every sequence $l_1(x,\xi )$,\dots
, $l_N(x,\xi)$ of linear forms on ${\mathbb R}^{2d}$, the operator
${\rm ad}_{l_1^w(x,h D_x)}\circ \dots \circ {\rm ad}_{l_N^w(x,h
D_x)}A_h$ belongs to ${\mathcal L}(L^2,L^2)$ and is of norm
${\mathcal O}(h^N)$ in that space.
Here, ${\rm ad}_AB:=\lbrack A,B\rbrack=AB-BA$.
\end{proposition}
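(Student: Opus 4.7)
The plan is to prove the two directions of this Beals-type characterization separately, with the real content concentrated in (2) $\Rightarrow$ (1). For (1) $\Rightarrow$ (2), the key algebraic fact is that commutators of Weyl quantizations with \emph{linear} forms are exact: for $l(x,\xi)=\alpha\cdot x+\beta\cdot\xi$, the Moyal expansion of the symbol of $[l^w,a^w]$ truncates at first order since all higher Moyal terms involve second (or higher) derivatives of $l$, which vanish. Hence
$$[l^w(x,hD_x),\,a^w(x,hD_x;h)]=\tfrac{h}{i}\{l,a\}^w$$
exactly, and by induction the $N$-fold iterated commutator is the Weyl quantization of $(h/i)^N$ times a mixed directional derivative of $a$. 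This symbol lies in $h^N S^0({\mathbb R}^{2d};M_n({\mathbb C}))$, and the Calder\'on--Vaillancourt theorem (applicable uniformly in $h$ to $S^0$) furnishes the $\mathcal{O}(h^N)$ bound in $\mathcal{L}(L^2,L^2)$.

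For the converse (2) $\Rightarrow$ (1), the idea is to guess the symbol and use the commutator bounds to control its derivatives. Tentatively set
$$a(x,\xi;h):=\int_{{\mathbb R}^d} e^{-iy\cdot\xi/h}\,K_A\!\left(x+\tfrac{y}{2},\,x-\tfrac{y}{2}\right)dy,$$
where $K_A$ is the Schwartz kernel of $A$; a priori this is only a tempered distribution. The hypothesis with $N=0$ gives $\Vert A\Vert_{\mathcal L(L^2)}=\mathcal O(1)$, hence $a$ is bounded (up to a Calder\'on--Vaillancourt-type argument). The crucial correspondence is that, on the symbol side, $\mathrm{ad}_{x_j}$ acts as $\tfrac{h}{i}\partial_{\xi_j}$ and $\mathrm{ad}_{hD_{x_j}}$ as $-\tfrac{h}{i}\partial_{x_j}$. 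Taking the $l_k$'s to run over the coordinate linear forms $x_j$ and $\xi_j$, the hypothesis in (2) therefore translates directly into uniform $L^\infty$ bounds on every mixed derivative $\partial_x^\alpha\partial_\xi^\beta a$, which is exactly the statement $a\in S^0$.

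The main obstacle is the rigorous implementation of (2) $\Rightarrow$ (1): one must upgrade the distributional $a$ above to a genuine smooth, uniformly bounded symbol and propagate the commutator bounds through this regularization. The standard technical device is to test $A$ against coherent states $\varphi_{x,\xi}(y)=(\pi h)^{-d/4}e^{-(y-x)^2/(2h)}e^{iy\xi/h}$: one shows that the matrix elements $\langle A\varphi_{x',\xi'},\varphi_{x,\xi}\rangle$ decay rapidly in the rescaled distance $(x-x',\xi-\xi')/\sqrt h$, which via the Bargmann/FBI transform is equivalent to $a\in S^0$. Applying the hypothesis of (2) to suitably chosen linear combinations of the coordinate forms (adapted to the pair of coherent states) yields precisely this off-diagonal decay, closing the argument.
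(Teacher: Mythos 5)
The paper does not prove this proposition; it is stated in Section 3.1 as one of several ``well-known results'' recalled for later use, implicitly referring to the standard literature (in particular \cite[Chapter 8]{DiSj99}, where the semiclassical Beals theorem is proved). So there is no in-paper proof to compare against, and the question is simply whether your sketch is a correct proof of the standard result.

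Your outline is essentially the classical argument of Helffer--Sj\"ostrand as presented in \cite{DiSj99}: the easy direction (1)~$\Rightarrow$~(2) via the exactness of the Moyal bracket for linear symbols, and the converse via the inverse Weyl transform of the kernel together with the intertwining $\mathrm{ad}_{x_j}\leftrightarrow \tfrac{h}{i}\partial_{\xi_j}$, $\mathrm{ad}_{hD_{x_j}}\leftrightarrow -\tfrac{h}{i}\partial_{x_j}$, with the rigorous control carried out through coherent-state matrix elements / the FBI--Bargmann transform. That is exactly the right road map, and all the ingredients you name are the ones actually used. One small inaccuracy worth flagging: the sentence asserting that the $N=0$ hypothesis ``hence $a$ is bounded (up to a Calder\'on--Vaillancourt-type argument)'' overstates the direct implication. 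Uniform $L^2\!\to\!L^2$ boundedness of $A$ alone does \emph{not} yield $a\in L^\infty$ (there are wildly oscillating unbounded symbols whose Weyl quantizations are bounded); the $L^\infty$ control on $a$ itself, like that on its derivatives, only emerges from the coherent-state / FBI argument you describe in the final paragraph, which uses the commutator bounds at all orders simultaneously. Since your closing paragraph supplies that mechanism, the overall logic is sound, but the $N=0$ step should not be presented as giving $a\in L^\infty$ on its own.
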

\begin{proposition}\label{p291104}{\bf ($L^2-$ boundedness)}
 Let $a=a(x,\xi ;h)\in S^0_\delta({\mathbb R}^{2d}; M_n(\mathbb C))$, $0\leq \delta\leq 1/2$. Then $a^w(x,hD_x;h)$ is bounded :  $L^2({\mathbb R}^d; {\mathbb C}^n) \rightarrow
 L^2({\mathbb R}^d; {\mathbb C}^n)$, 
 and there is a constant $C$ independent of $h$ such that for $0<h\leq 1$;
 $$\Vert a^w(x,hD_x;h)\Vert\leq C.$$
\end{proposition}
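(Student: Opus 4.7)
The plan is to reduce the semi-classical estimate, via a unitary dilation, to the classical ($h=1$) Calder\'on--Vaillancourt theorem for symbols in $S^0_{0,0}$, and then to establish the latter by the Cotlar--Stein almost-orthogonality lemma. Since $S^0_\delta \subset S^0_{1/2}$ for $\delta \in [0,1/2]$, I would work from the outset with the borderline case $\delta = 1/2$.

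First I would introduce the unitary dilation $U_h u(x) := h^{-d/4} u(h^{-1/2} x)$ on $L^2(\mathbb{R}^d; \mathbb{C}^n)$ and check by a direct change of variables in the kernel that
\[
U_h^{-1}\, a^w(x, hD_x; h)\, U_h \;=\; \tilde{a}^w(y, D_y), \qquad \tilde{a}(y,\eta) := a(h^{1/2} y,\, h^{1/2}\eta;\, h).
\]
The defining estimate $\|\partial^\alpha_x \partial^\beta_\xi a\|_{M_n(\mathbb{C})} \le C_{\alpha\beta}\, h^{-(|\alpha|+|\beta|)/2}$ translates exactly into $\|\partial^\alpha_y \partial^\beta_\eta \tilde{a}\|_{M_n(\mathbb{C})} \le C_{\alpha\beta}$, so $\tilde{a}$ belongs to the H\"ormander class $S^0_{0,0}(\mathbb{R}^{2d}; M_n(\mathbb{C}))$ with semi-norms independent of $h$. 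Because $U_h$ is unitary, $\|a^w(x, hD_x;h)\|_{L^2 \to L^2} = \|\tilde{a}^w(y, D_y)\|_{L^2 \to L^2}$, and the problem reduces to bounding Weyl quantizations of $S^0_{0,0}$ symbols by finitely many of their semi-norms.

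For this $h$-free estimate I would apply Cotlar--Stein to a phase-space partition of unity at unit scale. Choose $\chi \in C^\infty_0(\mathbb{R}^{2d})$ with $\sum_{k \in \mathbb{Z}^{2d}} \chi(\cdot - k) \equiv 1$, set $\tilde{a}_k(y,\eta) := \chi(y - k',\, \eta - k'')\, \tilde{a}(y,\eta)$ and $T_k := \tilde{a}_k^w(y, D_y)$. Each $T_k$ is bounded on $L^2$ with norm controlled uniformly in $k$ by a Schur-type estimate on the Weyl kernel (using only finitely many semi-norms of $\tilde a$). The decisive almost-orthogonality bounds
\[
\|T_j^* T_k\|_{L^2 \to L^2} + \|T_j T_k^*\|_{L^2 \to L^2} \;\le\; C_N \langle j - k \rangle^{-N}
\]
follow from repeated integration by parts in the oscillatory integrals representing the kernels of $T_j^* T_k$ and $T_j T_k^*$: the gradients of the phases $(x-y)\xi$ are bounded below by the separation of the supports of $\tilde{a}_j$ and $\tilde{a}_k$. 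Summing in $k$ for fixed $j$ yields $\sum_k \|T_j^* T_k\|^{1/2} + \|T_j T_k^*\|^{1/2} < \infty$ uniformly in $j$, which is precisely the Cotlar--Stein hypothesis and concludes the proof.

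The main obstacle is the almost-orthogonality estimate: one must extract arbitrary polynomial decay in $|j-k|$ from the kernels of $T_j^* T_k$, which requires a careful bookkeeping of the derivatives produced by integration by parts against those already present in the semi-norms of $\tilde a$. The matrix-valued setting introduces no genuine new difficulty, since throughout one replaces $|\cdot|$ by the operator norm on $M_n(\mathbb{C})$, which enjoys the same triangle and sub-multiplicative inequalities. An alternative route would invoke the Beals characterization of Proposition \ref{p291103} directly: verify inductively that the iterated commutators $\mathrm{ad}_{\ell_1^w}\cdots \mathrm{ad}_{\ell_N^w}(a^w)$ are bounded with norm $\mathcal{O}(h^N)$ and conclude via the implication (2)$\Rightarrow$(1); but for the mere boundedness statement, the Cotlar--Stein argument above is the most direct.
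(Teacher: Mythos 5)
The paper states this proposition without proof, recalling it as a ``well-known result'' of the semiclassical pseudodifferential calculus and pointing to \cite[Chapter 7]{DiSj99} for the underlying theory. Your proof --- conjugating by the unitary dilation $U_h u(x)=h^{-d/4}u(h^{-1/2}x)$ to pass to the unit-scale symbol $\tilde a(y,\eta)=a(h^{1/2}y,h^{1/2}\eta;h)\in S^0_{0,0}$ uniformly in $h$, and then invoking the Cotlar--Stein lemma for a phase-space partition at unit scale --- is precisely the standard Calder\'on--Vaillancourt argument found in that reference, and it is correct in all details, including the rescaling computation, the inclusion $S^0_\delta\subset S^0_{1/2}$ for $0\le\delta\le 1/2$ and $0<h\le 1$, and the remark that the matrix-valued setting adds nothing new since $\|\cdot\|_{M_n(\mathbb C)}$ is submultiplicative. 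Your alternative closing observation (using the Beals characterization of Proposition \ref{p291103}) is also a legitimate shortcut in the context of this paper, though somewhat circular as a self-contained proof since Beals' theorem is usually proved after, or together with, $L^2$-boundedness.
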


\subsection {Reduction to a semi-classical problem}
\setcounter{equation}{0}


Here, we shall  make use of a strong field reduction onto the $j$th
eigenfunction of the harmonic oscillator, $j=l_0\cdots l$, and  a well-posed Grushin problem
for $H(h)$. We show that the spectral study of $H(h)$ near some energy level
$z$ can be  reduced to the study of an $h^2$-pseudo-differential operator $E_{-+}(z)$
called {\em the effective Hamiltonian}.  
Without any loss of generality we may 
assume that $l_0=1$. \\
\begin{lemma} 
There exists a unitary operator
$\widetilde{W}: L^2({\mathbb R}^2) \to  L^2({\mathbb R}^2)$ such that
$$
P(h)=
\widetilde{\mathcal W} H(h) \widetilde{\mathcal W}^*
$$
where $P(h): =  P_{0}
+  V^w(h)$,  $P_{0}: =
-\frac{\partial^2}{\partial y^2} + y^2$ and
${ V}^w(h): =  V^w(x+hD_y,hy+h^2D_x) $.
\end{lemma}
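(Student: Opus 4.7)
The plan is to exhibit $\widetilde{\mathcal W}$ explicitly as a composition of four standard unitaries, each implementing a simple canonical transformation on phase space, and then to verify that the successive conjugations carry $H(h)$ onto $P(h)$. The underlying idea is the strong-field reduction: the Landau Hamiltonian $(D_x-y)^2+D_y^2$ is well known to be unitarily equivalent to the pure harmonic oscillator $D_y^2+y^2$ via partial Fourier transform followed by a $\xi$-dependent translation in $y$; I only need to track how the potential $V(hx,hy)$ is transported by the same operations, and then insert a semi-classical dilation to produce the correct powers of $h$.

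The four steps I would carry out are the following.

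\emph{Step 1.} Apply the partial Fourier transform $\mathcal F_x$ in $x$, so that $D_x\mapsto \xi$ (multiplication) and $x\mapsto -D_\xi$. This gives
\[
\mathcal F_x H(h)\mathcal F_x^{-1}=(\xi-y)^2+D_y^2+V(-hD_\xi,hy).
\]
\emph{Step 2.} Apply the unitary shift $(Uf)(\xi,y):=f(\xi,y+\xi)$. A direct calculation yields $UyU^{-1}=y+\xi$, $U\xi U^{-1}=\xi$, $UD_yU^{-1}=D_y$, and $UD_\xi U^{-1}=D_\xi-D_y$. Thus $(\xi-y)^2$ is sent to $y^2$ and the operator becomes
\[
y^2+D_y^2+V(-hD_\xi+hD_y,\,hy+h\xi).
\]
\emph{Step 3.} Apply the inverse Fourier transform $\mathcal F_\xi^{-1}$ in $\xi$, which sends $\xi\mapsto D_x$ and $D_\xi\mapsto -x$; this yields $y^2+D_y^2+V(hx+hD_y,\,hy+hD_x)$. \emph{Step 4.} Apply the semi-classical dilation $(Mf)(x,y):=h^{-1/2}f(x/h,y)$, under which $x\mapsto x/h$ and $D_x\mapsto hD_x$; the potential becomes $V(x+hD_y,\,hy+h^2D_x)$, while the harmonic oscillator part is untouched, producing exactly $P(h)$.

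Setting $\widetilde{\mathcal W}:=M\,\mathcal F_\xi^{-1}\,U\,\mathcal F_x$, a composition of unitaries from $L^2(\mathbb R^2)$ to itself, we obtain $P(h)=\widetilde{\mathcal W}H(h)\widetilde{\mathcal W}^{*}$. The only point that requires a moment of care is that each conjugation of $V(\cdot,\cdot)$ is allowed to be interpreted as substitution of the transformed operators into $V$; this is justified by the joint spectral theorem, since at every intermediate stage the two arguments of $V$ commute (for the final pair, $[x+hD_y,\,hy+h^2D_x]=[x,h^2D_x]+[hD_y,hy]=ih^2-ih^2=0$). There is no conceptual obstacle here; the entire argument is bookkeeping of the signs and powers of $h$ coming from the Fourier transforms, the translation, and the dilation.
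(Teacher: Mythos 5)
Your proof is correct. You and the paper rest on the same underlying idea---that $H(h)$ and $P(h)$ are conjugate under a metaplectic unitary implementing a linear canonical transformation of $T^*\mathbb R^2$---but you realize it in genuinely different ways. The paper simply writes down the linear symplectic map
\[
\widetilde{S}(x,y,\xi,\eta)=\Bigl(\tfrac{1}{h}x+\eta,\; y+h\xi,\; h\xi,\;\eta\Bigr),
\]
checks that it carries the Weyl symbol $(\xi-y)^2+\eta^2+V(hx,hy)$ of $H(h)$ to the Weyl symbol $y^2+\eta^2+V(x+h\eta,hy+h^2\xi)$ of $P(h)$, and then invokes the metaplectic theorem (Theorem A.2 of Chapter~7 in Dimassi--Sj\"ostrand) for the abstract existence of the intertwining unitary $\widetilde{\mathcal W}$. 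You instead factor the metaplectic operator explicitly as $\widetilde{\mathcal W}=M\,\mathcal F_\xi^{-1}\,U\,\mathcal F_x$, a product of two partial Fourier transforms, a $\xi$-dependent translation $U$ in $y$, and a dilation $M$, and verify by direct computation that the successive conjugations take $H(h)$ to $P(h)$. Your route is more elementary and self-contained: it avoids the appeal to the abstract existence theorem and gives the integral kernel of $\widetilde{\mathcal W}$ essentially for free, at the cost of the careful sign and $h$-power bookkeeping the paper's one-line citation sidesteps. You are also right, and right to flag it, that conjugating $V(\cdot,\cdot)$ amounts to substituting the transformed operators into $V$: at every stage the two arguments are commuting linear expressions in $(x,y,D_x,D_y)$, so joint functional calculus applies, and for such linear commuting arguments the functional-calculus operator coincides with the Weyl-quantized one, which is precisely what makes the final expression equal $V^w(h)$ as defined in the statement.
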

\begin{proof}
The linear symplectic mapping
$$
\widetilde{S}:\;\mathbb{R}^4\to\mathbb{R}^4\mbox{  given by }(x,y,\xi,\eta)\mapsto\left(\frac{1}{h} x+\eta,y+h\xi,h\xi,\eta\right), 
$$
maps the Weyl symbol of the operator $H(h)$ into the
Weyl symbol of the operator $P(h)$. By Theorem A.2  in \cite[Chapter 7]{DiSj99},
there exists a unitary operator $\widetilde{W}: L^2({\mathbb R}^2) \to  L^2({\mathbb R}^2)$
 associated to  $\widetilde{S}$ such that $
P(h)=
\widetilde{\mathcal W} H(h) \widetilde{\mathcal W}^*
$. 
\end{proof}


Introduce the operator $R^-_j: L^2({\mathbb R}_x) \rightarrow  
L^2({\mathbb R}^2_{x,y})$ by
$$
(R^-_jv)(x,y)=\phi_j(y) v(x),
$$
where $\phi_j$ is the $j$th normalized eigenfunction of the  harmonic 
oscillator. 
Further, the operator $R_j^+: L^2({\mathbb R}^2_{x,y}) \rightarrow
L^2({\mathbb R}_x)$ is defined by
$$(R_j^+u)(x)=\int  \phi_j(y) u(x,y)dy.$$
Notice that $R_j^+$ is the adjoint of $R_j^-$.
An easy computation shows that 
$R_j^+R_j^-=I_{L^2({\mathbb R}_x)}$ and $R_j^-R_j^+=\Pi_j$, where\\
$$\Pi_j:\;L^2({\mathbb R}^{2})\to L^2({\mathbb R}^{2}),\;
v(x,y)\mapsto \int v(x,t)\phi_j(t)dt\phi_j(y).$$

Define  $\Pi=\sum\limits_{j=1}^{l}\Pi_j$.
\begin{lemma}\label{l1}
Let $\Omega:=\{z\in\mathbb{C};\;{\rm Re}z\in [a,b],\,\, \vert {\rm Im} z\vert <1\}$. The operator
$$(I-\Pi)P(h)(I-\Pi)-z:\;(I-\Pi)L^2({\mathbb R}^{2})\to (I-\Pi)L^2({\mathbb R}^{2})$$
is uniformly  invertible for $z\in \Omega$.
\end{lemma}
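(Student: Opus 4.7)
\emph{Proof plan.} The plan is to reduce the problem on $(I-\Pi)L^2$ to inverting $(P_0 - z)$ there, and then to treat $V^w(h)$ as a bounded perturbation via a Neumann series. Since $\Pi = \sum_{j=1}^l \Pi_j$ is a sum of spectral projectors of $P_0 = -\partial_y^2 + y^2$, I would first exploit the commutation $[P_0, \Pi] = 0$ to write, on the invariant subspace $(I-\Pi)L^2$,
$$
(I-\Pi)(P(h) - z)(I-\Pi) = (P_0 - z)\big|_{(I-\Pi)L^2} + (I-\Pi) V^w(h) (I-\Pi).
$$

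For the first summand, the spectrum of $P_0|_{(I-\Pi)L^2}$ is $\{2j+1 : j \in \mathbb{N} \setminus [1, l]\}$, and the definitions of $l_0 = 1$ and $l$ would force each such Landau level to lie outside the window $[c,d] = [a - \|V\|_{L^\infty}, b + \|V\|_{L^\infty}]$ fixed in the introduction. This yields a uniform spectral gap
$$
\gamma := \inf_{z \in \Omega,\; j \in \mathbb{N}\setminus[1, l]} |2j+1 - z| > \|V\|_{L^\infty},
$$
so that $\|(P_0 - z)^{-1}|_{(I-\Pi)L^2}\|_{L^2 \to L^2} \leq 1/\gamma$ uniformly in $z \in \Omega$. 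For the second summand, a direct computation gives $[x + hD_y,\; hy + h^2 D_x] = i h^2 - i h^2 = 0$; hence $A := x + hD_y$ and $B := hy + h^2 D_x$ are commuting self-adjoint operators, and joint functional calculus identifies $V^w(h)$ with $V(A, B)$, yielding $\|V^w(h)\|_{L^2 \to L^2} \leq \|V\|_{L^\infty} < \gamma$. Factorizing
$$
(I-\Pi)(P(h) - z)(I-\Pi) = (P_0 - z)\big|_{(I-\Pi)L^2}\Bigl(I + (P_0 - z)^{-1}\big|_{(I-\Pi)L^2}\,(I-\Pi) V^w(h)(I-\Pi)\Bigr),
$$
the bracketed factor is then invertible by a Neumann series with norm bounded by $(1 - \|V\|_{L^\infty}/\gamma)^{-1}$, uniformly in $z \in \Omega$, which proves the claim.

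The main obstacle is the strict inequality $\gamma > \|V\|_{L^\infty}$. One must carefully unpack the definitions of $l_0$ and $l$ --- the emptiness of $V^{-1}([a-(2j+1), b-(2j+1)])$ for $j \notin [l_0, l]$ --- together with the continuity of $V$, its vanishing at infinity, and the discreteness of the Landau levels, in order to verify that every Landau level outside the index range $[l_0, l]$ lies strictly outside the window $[c, d]$. This is the only place where the specific choice of the spectral window $[c,d]$ from the introduction is used; everything else reduces to a routine Neumann series argument.
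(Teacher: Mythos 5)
Your approach --- commute $\Pi$ with $P_0$, isolate the spectrum of the compressed free operator, and close via a Neumann series using $\|V^w(h)\|_{L^2\to L^2}\le\|V\|_{L^\infty}$ --- is the same in substance as the paper's, which phrases the last step as a distance-to-spectrum bound. Your observation that $x+hD_y$ and $hy+h^2D_x$ commute, so that $V^w(h)$ is a genuine function of commuting self-adjoint operators, is a clean way to obtain the norm bound (the paper gets it by unitary conjugation to multiplication by $V$). However, the step you yourself flag as ``the main obstacle'' is where the argument actually breaks. The claim $\gamma>\|V\|_{L^\infty}$ --- equivalently, that every Landau level $2j+1$ with $j\notin\{l_0,\dots,l\}$ lies strictly outside $[c,d]=[a-\|V\|_{L^\infty},b+\|V\|_{L^\infty}]$ --- does \emph{not} follow from \eqref{e24051}. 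That definition is stated in terms of the \emph{range} of $V$, not of the symmetric interval $[-\|V\|_{L^\infty},\|V\|_{L^\infty}]$, and the two differ whenever $V$ is sign-definite. For instance, if $V\ge 0$ with $\sup V=1$ and range $(0,1]$, and $[a,b]=[3.1,4.9]$, then $l_0=l=1$, yet the Landau level $5$ satisfies $5\in[c,d]=[2.1,5.9]$ and $[5-\|V\|_{L^\infty},5+\|V\|_{L^\infty}]=[4,6]$ meets $[a,b]$; the two-sided Neumann series cannot close there.

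The repair is to use the one-sided operator bound that the compression actually satisfies, rather than the two-sided norm bound: $\inf V\le (I-\Pi)V^w(h)(I-\Pi)\le\sup V$ on $(I-\Pi)L^2$. Shifting by the midpoint $m=\tfrac12(\inf V+\sup V)$ and running your Neumann series with radius $r=\tfrac12(\sup V-\inf V)$ gives the sharper inclusion
$$\sigma\bigl((I-\Pi)P(h)(I-\Pi)\bigr)\subset\bigcup_{k\notin\{l_0,\dots,l\}}\bigl[\,2k+1+\inf V,\ 2k+1+\sup V\,\bigr],$$
and this union \emph{is} disjoint from $[a,b]$: indeed $[2k+1+\inf V,2k+1+\sup V]$ meeting $[a,b]$ is the same as $[a-(2k+1),b-(2k+1)]$ meeting $\overline{\mathrm{Range}(V)}=[\inf V,\sup V]$, and the only possible discrepancy between $\overline{\mathrm{Range}(V)}$ and $\mathrm{Range}(V)$ is the point $0$, which cannot lie in $[a-(2k+1),b-(2k+1)]$ because $[a,b]$ avoids the Landau levels. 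Only finitely many $k$ are relevant, so the separation is uniform in $z\in\Omega$, and invertibility follows. For what it is worth, the paper's printed proof writes the same two-sided inclusion $[2k+1-\|V\|_{L^\infty},2k+1+\|V\|_{L^\infty}]$ and is exposed to the same objection; the one-sided estimate above (or, alternatively, enlarging $\Pi$ to the full spectral projector $1_{[c,d]}(P_0)$ as in the outline of Section 2.3) is what makes the lemma rigorous.
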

\begin{proof}
It follows from the definition of $\Pi$ that  $\sigma((I-\Pi)P_0(I-\Pi))=\bigcup_{k\in \mathbb{N}\setminus\{1,...,l\}}\{2k+1\}$. Hence 
$$\sigma((I-\Pi)P(h)(I-\Pi))\subset\bigcup_{k\in \mathbb{N}\setminus\{1,...,l\}}
\left[2k+1-\|V\|_{L^\infty(\mathbb{R}^2)},2k+1+\|V\|_{L^\infty(\mathbb{R}^2)}\right],$$
 which implies  
 $$\sigma((I-\Pi)P(h)(I-\Pi))\cap [a,b]=\emptyset.$$ Consequently,
$$\|(I-\Pi)P(h)(I-\Pi)-z\|\geq {\rm dist}\left([a,b],\sigma((I-\Pi)P(h)(I-\Pi))\right)>0$$
uniformly for  $z\in \Omega$. Thus, we obtain $$(I-\Pi)P(h)(I-\Pi)-z:\;(I-\Pi)L^2({\mathbb R}^{2})\to (I-\Pi)L^2({\mathbb R}^{2})$$
is  uniformly  invertible for  $z\in\Omega$.
\end{proof}
For $z\in\Omega$, we put
$$\mathcal{P}(z)=\begin{pmatrix}{}
(P(h)-z)&R_1^-&...&R_l^-\\
R_1^+&0&...&0\\
.&.&...&.\\
.&.&...&.\\
.&.&...&.\\
R_l^+&0&...&0
\end{pmatrix}\mbox{ and }\widetilde{\mathcal{E}}(z)=\begin{pmatrix}{}
R(z)&R_1^-&...&R_l^-\\
R_1^+&A_1&...&0\\
.&.&...&.\\
.&.&...&.\\
.&.&...&.\\
R_l^+&0&...&A_l
\end{pmatrix},$$
where $A_j=z-(2j+1)-R_j^+V ^w(h)R_j^-,j=1,...,l$ and $R(z)=((I-\Pi)P(h)(I-\Pi)-z)^{-1}(I-\Pi)$.

Let $\mathcal{E}_1(z):=\mathcal{P}(z)\widetilde{\mathcal{E}}(z)=(a_{k,j})_{k,j=1}^{l+1}$. In the next step we will compute explicitly $a_{k,j}$.

 Using the fact that $\Pi R(z)=0$ as well as the fact that  $\Pi$ commutes with $P_0$, we deduce that  $(P(h)-z)R(z)=(I-\Pi)+[\Pi,V^w(h)]R(z)$.
Consequently, 
\begin{equation}
a_{1,1}=(P(h)-z)R(z)+\sum_{j=1}^lR_j^-R_j^+=I+[\Pi,V^w(h)]R(z).
\end{equation}
Next, from the definition of $A_1$ and the fact that $P_0R_1^-=3R_1^-$ (we recall that $l_0=1$), one has
\begin{align*}
a_{1,2}&=(P(h)-z)R_1^-+R_1^-A_1\\ 
&=-(z-3)R_1^-+V^w(h)R_1^-
+R_1^-(z-3)-\Pi_1V^w(h)R_1^-\\
&=V^w(h)R_1^--\Pi_1V^w(h)R_1^-\\
&=[V^w(h),\Pi_1]R_1^-.
\end{align*}
Similarly, $a_{1,j}=[V^w(h),\Pi_{j-1}]R_{j-1}^-, j\geq 3$.

Since $R_1^+(1-\Pi_1)=R_1^+-R_1^+R_1^-R_1^+=0$ and $R_1^+\Pi_j=R_1^+R_j^-R_j^+=0$
 for $j\not=1$,  it follows that  $a_{2,1}=R_1^+R(z)=0$.  Evidently, $a_{2,2}=R_1^+R_1^-=I_{L^2(\mathbb{R})}$ 
and $a_{2,j}=R_1^+R_j^-=0$ for $j\geq 3$. The same arguments as above show that   $a_{k,j}=\delta_{j,k} I_{L^2(\mathbb{R})}$ for all $k\geq 3$.
Summing up we have proved
$$\mathcal{E}_1(z)=\mathcal{P}(z)\widetilde{\mathcal{E}}(z)=\begin{pmatrix}{}
I+[\Pi,V^w(h)]R(z)&[V^w(h),\Pi_1]R_1^-&...&[V^w(h),\Pi_l]R_l^-\\
0&I_{L^2(\mathbb{R})}&...&0\\
.&.&...&.\\
.&.&...&.\\
.&.&...&.\\
0&0&...&I_{L^2(\mathbb{R})}
\end{pmatrix},$$
Let $f_j\in C_0^\infty(\mathbb{R})$, $f_j=1$ near $2j+1$ and supp$f_j\subset [2j,2j+2]$. By the spectral theorem we have  $\Pi_j=f_j(D_y^2+y^2)$.
On the other hand, the functional calculus of  pseudo-differential operators shows that  $\Pi_j=f_j(D_y^2+y^2)=B^w(y,D_y)$
with $B(y,\eta)={\mathcal O}(\langle y\rangle^{-\infty}\langle \eta\rangle^{-\infty})$.
 
The composition formula of pseudo-differential operators (Proposition \ref{p291102}) gives
\begin{equation}\label{e2}
[V^w(h),\Pi_j]= \sum_{k=1}^Nb_{k,j}^w(x,h^2D_x)c_{k,j}^w(y,D_y)h^k+\mathcal{O}(h^{N+1}),\;\forall N\in\mathbb{N}, 
\end{equation}
where $b_{k,j}, c_{k,j}\in S^0(\mathbb{R}^2)$. This together with the Calderon-Vaillancourt theorem (Proposition \ref{p291104}) yields   $[V^w(h),\Pi_j]=\mathcal{O}(h)$ in $\mathcal{L}(L^2(\mathbb{R}^2))$.
Therefore,  for  $h$ is sufficiently small, $\mathcal{E}_1(z)$ is  uniformly invertible for $z\in \Omega$, and
$$\mathcal{E}_1(z)^{-1}=\begin{pmatrix}{}
a(z)&-a(z)[V^w(h),\Pi_1]R_1^-&...&-a(z)[V^w(h),\Pi_l]R_l^-\\
0&I_{L^2(\mathbb{R})}&...&0\\
.&.&...&.\\
.&.&...&.\\
.&.&...&.\\
0&0&...&I_{L^2(\mathbb{R})}
\end{pmatrix},$$
where $a(z)=(I+[\Pi,V^w(h)]R(z))^{-1}$. 
Using the explicit expressions of $\widetilde{\mathcal{E}}(z)$ 
and $\mathcal{E}_1(z)^{-1}$,  we get

$$\mathcal{E}(z):=\widetilde{\mathcal{E}}(z)\mathcal{E}_1(z)^{-1}$$
$$=\begin{pmatrix}{}
R(z)a(z)&-R(z)a(z)[V^w(h),\Pi_1]R_1^-+R_1^-&...&-R(z)a(z)[V^w(h),\Pi_l]R_l^-+R_l^-\\
R_1^+a(z)&A_1-R_1^+a(z)[V^w(h),\Pi_1]R_1^-&...&-R_1^+a(z)[V^w(h),\Pi_l]R_l^-\\
.&.&...&.\\
.&.&...&.\\
.&.&...&.\\
R_l^+a(z)&-R_l^+a(z)[V^w(h),\Pi_1]R_1^-&...&A_l-R_l^+a(z)[V^w(h),\Pi_l]R_l^-
\end{pmatrix}.$$
Thus, we have proved the following theorem.
\begin{theorem}\label{t1}
Let $\Omega$ be as in Lemma \ref{l1}. Then $\mathcal{P}(z)$ is uniformly invertible  for $z\in\Omega$
 with  inverse $\mathcal{E}(z)$. In addition, $\mathcal{E}(z)$ is holomorphic in $z\in\Omega$.
\end{theorem}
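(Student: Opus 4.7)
The proof consists of three logical steps: right-invertibility of $\mathcal{P}(z)$, left-invertibility, and holomorphy. For the first, the excerpt has already computed $\mathcal{E}_1(z) := \mathcal{P}(z)\widetilde{\mathcal{E}}(z)$ as an explicit block upper-triangular matrix whose diagonal consists of the operator $I + [\Pi,V^w(h)]R(z)$ together with copies of the identity. Since $[V^w(h),\Pi_j] = \mathcal{O}(h)$ in $\mathcal{L}(L^2(\mathbb{R}^2))$ by \eqref{e2} and Proposition~\ref{p291104}, and $R(z)$ is uniformly bounded on $\Omega$ by Lemma~\ref{l1}, a Neumann series yields that $\mathcal{E}_1(z)$ is uniformly invertible on $\Omega$ for $h$ small. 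Consequently
\begin{equation*}
\mathcal{P}(z)\mathcal{E}(z) = \mathcal{P}(z)\widetilde{\mathcal{E}}(z)\,\mathcal{E}_1(z)^{-1} = I,
\end{equation*}
so $\mathcal{E}(z)$ is a bounded right inverse uniform in $z \in \Omega$.

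For the left inverse I would carry out the mirror computation $\mathcal{E}_2(z) := \widetilde{\mathcal{E}}(z)\mathcal{P}(z)$. The central identity is
\begin{equation*}
R(z)(P(h) - z) = (I - \Pi) + R(z)[V^w(h),\Pi],
\end{equation*}
which follows from the decomposition $P(h) - z = \bigl((I-\Pi)P(h)(I-\Pi) - z\bigr) + \Pi P(h)(I-\Pi) + P(h)\Pi$, together with $\Pi R(z) = R(z)\Pi = 0$ and $[P_0,\Pi] = 0$. Combined with the elementary relations $R_k^+ P_0 = (2k+1)R_k^+$, $R_k^+ R_j^- = \delta_{k,j}I$, and $R_k^+(I - \Pi_k) = 0$, a direct block-entry computation presents $\mathcal{E}_2(z)$ as block lower-triangular with $(1,1)$ entry $I + R(z)[V^w(h),\Pi]$, identities on the diagonal of the lower-right $l \times l$ block, a vanishing upper-right column, and entries $-R_k^+[V^w(h),\Pi_k]$ (hence $\mathcal{O}(h)$) down the lower-left column. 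A second Neumann series produces a uniform inverse, so $\mathcal{E}_2(z)^{-1}\widetilde{\mathcal{E}}(z)$ is a left inverse of $\mathcal{P}(z)$; together with the right inverse already constructed, $\mathcal{P}(z)$ is bijective with inverse $\mathcal{E}(z)$.

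Finally, the holomorphy of $\mathcal{E}(z)$ is routine. Each entry of $\widetilde{\mathcal{E}}(z)$ is holomorphic on $\Omega$: $R(z)$ is the resolvent of the self-adjoint operator $(I-\Pi)P(h)(I-\Pi)$ on $(I-\Pi)L^2(\mathbb{R}^2)$, whose resolvent set contains $\Omega$ by Lemma~\ref{l1}; $A_k = z - (2k+1) - R_k^+V^w(h)R_k^-$ is affine in $z$; and the operators $R_j^\pm$ are $z$-independent. Hence $\mathcal{E}_1(z)$ is holomorphic on $\Omega$, its inverse is holomorphic on the open set of invertibility by standard operator-valued analytic inversion, and the product $\mathcal{E}(z) = \widetilde{\mathcal{E}}(z)\mathcal{E}_1(z)^{-1}$ inherits holomorphy on $\Omega$. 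The main obstacle I anticipate is the bookkeeping in the left-invertibility step: one must carefully isolate $[V^w(h),\Pi]$ as the only source of an $\mathcal{O}(h)$ deviation from the identity and use the orthogonality relations between the $R_k^\pm$ and the spectral projectors $\Pi_k$ to annihilate every other cross-term — the rest is a transparent unwinding of the parametrix already built by the authors.
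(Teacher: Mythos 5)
Your proof is correct and follows the paper's route for the right inverse exactly: the computation of $\mathcal{E}_1(z)=\mathcal{P}(z)\widetilde{\mathcal{E}}(z)$, its invertibility by Neumann series via the $\mathcal{O}(h)$ bound on $[V^w(h),\Pi_j]$, and the definition $\mathcal{E}(z)=\widetilde{\mathcal{E}}(z)\mathcal{E}_1(z)^{-1}$. Where you go beyond the written text is the mirror computation $\mathcal{E}_2(z)=\widetilde{\mathcal{E}}(z)\mathcal{P}(z)$, whose block structure you identify correctly (the key identity $R(z)(P(h)-z)=(I-\Pi)+R(z)[V^w(h),\Pi]$ and the annihilations $R(z)R_j^-=0$, $R_k^+R_j^-=\delta_{k,j}I$ do the work); the paper stops after exhibiting the right inverse and declares the theorem proved, tacitly leaving the left-inverse verification to the reader (one could equally invoke the symmetry $\mathcal{P}(z)^*=\mathcal{P}(\bar z)$, so that $\mathcal{E}(\bar z)^*$ is a left inverse). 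Your extra step is routine but genuinely needed — the formulas in Remark~\ref{r1} use $\mathcal{E}(z)\mathcal{P}(z)=I$ as well as $\mathcal{P}(z)\mathcal{E}(z)=I$ — so it closes a small gap in the exposition rather than departing from it.
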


From now on, we write $\mathcal{E}(z)=(B_{k,j})_{k,j=1}^{l+1}=\begin{pmatrix}E(z)&E_+(z)\\E_-(z)&E_{-+}(z)\end{pmatrix}$, 
where $E_{-+}(z)=(B_{k,j})_{k,j=2}^{l+1}$, $E(z)=R(z)a(z)$, $E_-(z)=\begin{pmatrix} R_1^+a(z)\\
.\\
.\\
.\\
R_l^+a(z)\end{pmatrix}$, and 
$$E_+(z)=\begin{pmatrix}-R(z)a(z)[V^w(h),\Pi_1]R_1^-+R_1^-&...&-R(z)a(z)[V^w(h),\Pi_l]R_l^-+R_l^-\end{pmatrix}.$$

\begin{remark}\label{r1}
 The following formulas are  consequences of the fact that ${\mathcal E}(z)$ is the inverse of ${\mathcal P}(z)$ as well as the fact that $R_j^\pm$ are independent of $z$ (see \cite{Di01,HeSj89}): 

 \begin{equation}\label{e1406}
  (z-P(h))^{-1}=-E(z)+E_+(z)(E_{-+}(z))^{-1}E_-(z),\;z\in\rho(P(h)),
  \end{equation}
 \begin{equation}\label{e1404}
\partial_z E_{-+}(z)=E_-(z)E_+(z).
 \end{equation}

\end{remark}
In what follows, the explicit formulae  for $E(z)$ and $ E_\pm(z)$ 
are not needed. We just indicate that they are holomorphic  in $z$. 
In the remainder of this section, we will prove that the symbol of 
the operator $E_{-+}(z)$ is in $S^0({\mathbb R}^{2};M_l(\mathbb C))$, and has a complete asymptotic expansion in powers of $h$. Moreover, we will give explicitly the principal  term.


\begin{proposition}\label{p3}
For $1\leq k,j\leq l$, the operators  $R_j^+V^w(h)R_j^-$ and  $R_k^+a(z)[V^w(h),\Pi_j]R_j^-$ are 
$h^2-$pseudo-differential operators with bounded symbols. Moreover, there exist  $v_{j,n}, b_{k,j,n}\in S^0(\mathbb{R}^{2})$, $n=1,2,..$,  such that
\begin{align}
\label{e20071}
R_j^+V^w(h)R_j^- &=\sum_{n=0}^Nh^{2n}v^w_{j,n}(x,h^2D_x)+\mathcal{O}\left(h^{2(N+1)}\right), \\
\label{e26051} R_k^+a(z)[V^w(h),\Pi_j]R_j^-&=\sum_{n=1}^Nb^w_{k,j,n}(x,h^2D_x,z)h^n+ \mathcal{O}\left(h^{N+1}\right), \mbox{ for } k\not=j,\\
\label{e22071}R_j^+a(z)[V^w(h),\Pi_j]R_j^-&=\sum_{n=1}^Nb^w_{j,j,2n}(x,h^2D_x,z)h^{2n}+ \mathcal{O}\left(h^{2(N+1)}\right),\; \forall N\in\mathbb{N},
\end{align}
Here
$$ v_{j,0}(x,\xi)=V(x,\xi),\; \;j=1,...,l.$$
\end{proposition}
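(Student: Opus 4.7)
The plan is to treat each of the three sandwiches in \eqref{e20071}, \eqref{e26051}, \eqref{e22071} as an $h^2$-Weyl pseudo-differential operator in the single variable $x$, and then to use a parity argument to extract the $h^2$-expansion in the ``diagonal'' cases.

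For \eqref{e20071}: the two-dimensional Weyl symbol of $V^w(h)$ is $V(x+h\eta,hy+h^2\xi)$. Writing the kernel in Weyl form and making the mid-point/difference substitution $u=(y+y')/2$, $v=y-y'$ when integrating against $\phi_j(y)\phi_j(y')$ produces the Wigner transform $W_{\phi_j}$ of the $j$th Hermite function, and one reads off the $h^2$-Weyl symbol of $R_j^+V^w(h)R_j^-$ as
\[
v_j(X,\sigma;h)=\iint W_{\phi_j}(u,\eta)\,V(X+h\eta,\sigma+hu)\,du\,d\eta.
\]
Since $W_{\phi_j}$ is Schwartz and $V\in S^0$, Taylor-expanding $V$ in $h$ yields an asymptotic series in $S^0(\mathbb R^{2})$, whose leading term is $V(X,\sigma)\iint W_{\phi_j}=V(X,\sigma)$, producing $v_{j,0}=V$.

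To force \eqref{e20071} and \eqref{e22071} to contain only even powers of $h$, introduce the $y$-parity operator $P:u(x,y)\mapsto u(x,-y)$. From $PyP=-y$, $PD_yP=-D_y$ and $[P,x]=[P,D_x]=0$, both arguments $x+hD_y$ and $hy+h^2D_x$ of $V$ pick up $h\mapsto -h$ under conjugation by $P$, so $PV^w(h)P=V^w(h)|_{h\to -h}$. Since $P$ commutes with $P_0$ and each $\Pi_j=R_j^-R_j^+$ satisfies $P\Pi_jP=\Pi_j$ (the two Hermite-parity factors $(-1)^j$ from $R_j^\pm$ cancel), both $R(z)$ and $a(z)=(I+[\Pi,V^w(h)]R(z))^{-1}$ inherit the transformation law $PA(h)P=A(-h)$. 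Combined with $PR_j^-=(-1)^jR_j^-$ and $R_j^+P=(-1)^jR_j^+$, this gives $R_j^+A(h)R_j^-=R_j^+A(-h)R_j^-$ for $A(h)\in\{V^w(h),\,a(z)[V^w(h),\Pi_j]\}$, so each of these sandwiches is an even function of $h$ and only even powers survive.

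For the off-diagonal expansion \eqref{e26051}, no parity cancellation is needed. Starting from \eqref{e2} together with the Neumann series $a(z)=\sum_{n\geq 0}(-[\Pi,V^w(h)]R(z))^n$ (convergent since $[\Pi,V^w(h)]R(z)=\mathcal O(h)$ in $\mathcal L(L^2)$ by Lemma \ref{l1} and Proposition \ref{p291104}), I would compose via Proposition \ref{p291102} and sandwich with $R_k^+$, $R_j^-$. Each $(y,D_y)$-factor $c^w(y,D_y)$ collapses into a Hermite matrix element $\langle\phi_k,c^w(y,D_y)\phi_j\rangle$ (a bounded scalar), while the $(x,h^2D_x)$-factors compose in $S^0(\mathbb R^{2})$; the outcome is an $h^2$-$\Psi$DO whose symbol has a complete expansion in $S^0(\mathbb R^{2})$ in powers of $h$ beginning at $h^1$.

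The main difficulty is to control the Neumann expansion of $a(z)$ uniformly in $z\in\Omega$, since $a(z)$ is a resolvent-type object on the two-dimensional space. This reduces, via the Beals characterization (Proposition \ref{p291103}), to iterated-commutator estimates; the key observation is that in every term of the Neumann series $R(z)$ is sandwiched between commutators $[\Pi,V^w(h)]$ whose $(y,D_y)$-symbols are Schwartz in $(y,\eta)$, so the $(y,D_y)$-variables remain confined and $R(z)$ contributes only bounded matrix elements between Hermite modes.
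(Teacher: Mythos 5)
Your proposal is correct in its essentials, but it departs from the paper's route in two identifiable ways, so a short comparison is in order.

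The paper proves all three expansions by appealing to Beals' characterization (Proposition~\ref{p291103}) of $h^2$-pseudo-differential operators in the $x$-variable: since $R_j^\pm$ commute with $l^w(x,h^2D_x)$ and $V^w(h)$ (resp.\ $a(z)[V^w(h),\Pi_j]$) is well adapted to iterated commutators with $l^w(x,h^2D_x)$, the sandwich is an $h^2$-$\Psi$DO; the asymptotic series then comes from Taylor-expanding $V^w(h)$ as in \eqref{e20000}; and the evenness in the diagonal cases is deduced from vanishing odd moments of the Hermite functions, $\int y^{2m+1}|\phi_j|^2\,dy=\int\phi_j\,\partial_y^{2m+1}\phi_j\,dy=0$. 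Your proof replaces the first ingredient, for \eqref{e20071}, by an explicit Wigner-transform formula for the symbol; this is a legitimate and more concrete alternative, and it correctly produces $v_{j,0}=V$ since $\iint W_{\phi_j}=1$. Your second ingredient, conjugation by the $y$-parity $P$, is a cleaner and more structural version of the paper's moment computation: it uses $PV^w(h)P=V^w(-h)$, $P\Pi_jP=\Pi_j$, $PR_j^\pm=(-1)^jR_j^\pm$ and the consequent identity $PP(h)P=P(-h)$ to get $Pa(z)P=a(z)|_{h\to -h}$, whence the diagonal sandwiches are even functions of $h$; combined with the injectivity of the $h^2$-Weyl quantization (note that $h^2$ is unchanged under $h\mapsto -h$), only even powers appear. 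This is precisely the symmetry the paper itself invokes later, in Section~\ref{s4}, to kill the odd trace coefficients, so it is very much in the spirit of the article even if not used at this point of the text.

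Where your proposal is weakest is exactly where the paper is terse: \eqref{e26051} and \eqref{e22071}. You announce a Neumann-series expansion of $a(z)$ followed by composition via Proposition~\ref{p291102}, but $R(z)$ is a resolvent on $(I-\Pi)L^2(\mathbb R^2)$, not a $\Psi$DO in $(y,D_y)$, so the composition calculus does not apply directly to the factors $\cdots c_i^w(y,D_y)\,R(z)\,b_{i+1}^w(x,h^2D_x)c_{i+1}^w(y,D_y)\cdots$ appearing in the series, and the $(y,D_y)$-part does not simply collapse to a scalar Hermite matrix element term by term. You acknowledge this by retreating to the Beals characterization and iterated-commutator estimates at the end; at that point your argument has essentially converged to the paper's, since the relevant estimate $[l^w(x,h^2D_x),R(z)]=-R(z)[l^w,V^w(h)]R(z)=\mathcal O(h^2)$ and its iterates are exactly what makes Proposition~\ref{p291103} applicable to $a(z)[V^w(h),\Pi_j]$. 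So: \eqref{e20071} and the evenness statements are proved by genuinely different (and arguably cleaner) means; the off-diagonal case is sketched along a route that still routes through Beals, as in the paper, and would need the commutator step filled in to be complete.
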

\begin{proof}

The proofs of \eqref{e20071},  \eqref{e26051} and   \eqref{e22071} are quite similar, and are based on the  Beal's characterization of $h^2$-pseudo-differential operators (see Proposition \ref{p291103}).  We give only the main ideas of the proof  of \eqref{e20071} and we refer to \cite{Di01,DiSj99,HeSj89} for more details.
Let $Q$ denote  the left hand side of  \eqref{e20071}. Let $l^w(x,h^2D_x)$ be as in Proposition \ref{p291103}. 
Using the fact that $R_j^\pm$ commutes with $l^w(x,h^2D_x)$ as well as the fact that $V^w(h)$ is an $h^2$-pseudo-differential operator on $x$, we deduce from Proposition \ref{p291103} that $Q=
q^w(x,h^2D_x;h)$, with $q\in S^0(\mathbb{R}^2)$.
On the other hand, writing
\begin{equation}
\label{e20000}
V^w(h)=V^w(x,h^2D_x)+ hD_y\left({\partial V\over\partial x}\right)^w(x,h^2D_x)
 +hy \left({\partial V\over\partial y}\right)^w
(x,h^2D_x)+\cdots,
\end{equation}
and  using  Proposition \ref{p291103}, we see that $q(x,\xi;h)$ has an asymptotic 
expansion in powers of $h$. 

Notice that the odd  powers 
of $h$  in \eqref{e20071} and   \eqref{e22071} disappear, due to the special properties of  
the eigenfunctions of the harmonic oscillator (i.e., $\int_{\mathbb R} y^{2j+1} \vert \phi_j(y)\vert^2dy=\int_{\mathbb R} \phi_j(y)\, \partial_y^{2j+1}\phi_j(y)dy=0$). Finally,  since $R^+_jR^-_j=I_{L^2(\mathbb{R})}$, it follows from
 \eqref{e20000} that $v_{j,0}(x,\xi)=V(x,\xi)$.
 
\end{proof}

Let $e_{-+}(x,\xi,z,h)$ denote the symbol of $E_{-+}(z)$. 
The following corollary  follows from the above proposition and the definition of 
 $E_{-+}(z)$.

\begin{corollary}\label{c201201} We have
$$e_{-+}(x,\xi,z,h)\sim\sum_{j=0}^\infty e_{-+}^j(x,\xi,z)h^j,\,\,\ {\rm in}\,\, S^0({\mathbb R}^2;M_l({\mathbb C})),$$
with
$$e_{-+}^0(x,\xi,z)=\Bigl((z-(2j+1)-V(x,\xi))\delta_{i,j}\Bigr)_{1\leq i,j\leq l}.$$
\end{corollary}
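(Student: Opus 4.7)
The plan is to read off the entries of $E_{-+}(z)$ directly from the block expression for $\mathcal{E}(z)$ displayed just after Theorem~\ref{t1}, and then apply Proposition~\ref{p3} to each entry. Writing $E_{-+}(z) = (B_{k,j})_{k,j=2}^{l+1}$ and re-indexing by $(i,j) \in \{1,\dots,l\}^2$ via $B_{i+1,j+1}$, the diagonal entries are
\[
B_{i+1,i+1} = A_i - R_i^+ a(z)[V^w(h),\Pi_i]R_i^- = z - (2i+1) - R_i^+ V^w(h) R_i^- - R_i^+ a(z)[V^w(h),\Pi_i] R_i^-,
\]
while for $i \neq j$ the off-diagonal entries are $B_{i+1,j+1} = -R_i^+ a(z)[V^w(h),\Pi_j]R_j^-$.

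Next, I would feed the three expansions from Proposition~\ref{p3} into these formulas. By \eqref{e20071}, $R_i^+V^w(h)R_i^-$ is an $h^2$-pseudo-differential operator with full asymptotic expansion in powers of $h^2$ whose leading symbol is $V(x,\xi)$. By \eqref{e22071}, the diagonal correction $R_i^+ a(z)[V^w(h),\Pi_i]R_i^-$ is an $h^2$-pseudo-differential operator whose symbol expansion starts at order $h^2$. By \eqref{e26051}, each off-diagonal correction $R_i^+ a(z)[V^w(h),\Pi_j]R_j^-$ ($i\neq j$) is an $h^2$-pseudo-differential operator whose symbol expansion starts at order $h^1$. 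Assembling these $l^2$ scalar symbols into the $l\times l$ matrix yields a matrix-valued symbol in $S^0(\mathbb{R}^2; M_l(\mathbb{C}))$ with the claimed asymptotic expansion $e_{-+}(x,\xi,z,h) \sim \sum_{j\geq 0} e_{-+}^j(x,\xi,z)\, h^j$.

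To identify $e_{-+}^0$, I would simply set $h=0$ in the entry-wise expansions: the off-diagonal contributions vanish because they are $\mathcal{O}(h)$, the diagonal correction vanishes because it is $\mathcal{O}(h^2)$, and the diagonal entries collapse to $z-(2i+1)-v_{i,0}(x,\xi) = z-(2i+1)-V(x,\xi)$, using the explicit identification $v_{i,0}=V$ from Proposition~\ref{p3}. This is precisely the diagonal matrix $\bigl((z-(2j+1)-V(x,\xi))\delta_{i,j}\bigr)_{1\leq i,j\leq l}$ stated in the corollary. There is no real obstacle here; the only items requiring care are matching the shifted Landau-level indexing (we assumed $l_0=1$) and checking that assembling scalar $h^2$-pseudo-differential operators into a matrix preserves the symbol class $S^0(\mathbb{R}^{2}; M_l(\mathbb{C}))$, both of which are routine.
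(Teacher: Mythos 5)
Your proof is correct and is precisely the argument the paper intends: the paper states only that the corollary ``follows from the above proposition and the definition of $E_{-+}(z)$,'' and you have carried out exactly that computation --- reading the entries $B_{k,j}$ off the displayed block matrix $\mathcal{E}(z)$, substituting the three expansions \eqref{e20071}, \eqref{e26051}, \eqref{e22071} from Proposition~\ref{p3}, and observing that the $h^0$ term survives only on the diagonal, giving $(z-(2j+1)-V(x,\xi))\delta_{i,j}$.
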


\section{Proof of Theorem \ref{t3} and Theorem \ref{t5}}\label{s4}
\subsection{Trace formulae}\label{sub41}
Let $f\in C_0^\infty((a,b);\mathbb{R})$, where $(a,b)\subset \mathbb{R}\setminus\sigma_{\rm ess}(P(h))$, and let $\theta\in C^\infty_0(\mathbb R;\mathbb R)$. 
 Set
  $$\Sigma_j([a,b])=\left\{(x,\xi)\in\mathbb{R}^2;\;2j+1+V(x,\xi)\in[a,b]\right\},j=1,...,l$$
   and
\begin{equation}
 \Sigma_{[a,b]}=\bigcup_{j=1}^l \Sigma_j([a,b]).
\end{equation}
 Let $\widetilde{f}\in C_0^\infty((a,b)+i[-1,1])$    be an almost analytic extension of $f$, i.e., $\widetilde f=f$ on ${\mathbb R}$ and 
$\overline \partial_z\widetilde f$ vanishes on ${\mathbb R}$ to infinite 
order, i.e. ${\overline{\partial }_z\widetilde{f}(z)={\mathcal
O}_N(\vert {\rm Im} \; z\vert ^N)\hbox{ for all }N\in {\mathbb N}.}$
 Then the functional calculus due to Helffer-Sj\"ostrand 
(see e.g. \cite[Chapter 8]{DiSj99}) yields 
\begin{equation}\label{e1409}
f(P(h))=-{1\over \pi}\int
\overline\partial_z
\widetilde f(z)(z-P(h))^{-1}L(dz),
\end{equation}
\begin{equation}\label{e1410}
f(P(h))\breve\theta_{h^2}(t-P(h))=-{1\over \pi}\int
\overline\partial_z
\widetilde f(z) \breve \theta_{h^2}(t-z)(z-P(h))^{-1}L(dz).
\end{equation}
Here $L(dz)=dxdy$ is the Lebesgue measure on the complex plane 
${\mathbb  C}\sim {\mathbb R}^2_{x,y}$.  In the last equality we have used the fact that  $\widetilde f(z) \breve \theta_{h^2}(t-z)$ is an almost analytic extension of $f(x) \breve \theta_{h^2}(t-x)$,
since $z\mapsto \breve \theta_{h^2}(t-z)$ is analytic.

\begin{proposition}\label{p1}
For $h$ small enough, we have
 \begin{equation}\label{e12}
{\rm tr}(f(P(h)))={\rm tr}\left(-\frac{1}{\pi}\int \overline{\partial}_z\widetilde{f}(z)(E_{-+}(z))^{-1}\partial_z E_{-+}(z)L(dz)\chi^w(x,h^2D_x)\right)+\mathcal{O}(h^\infty),
 \end{equation}
  \begin{equation}\label{e1414}
{\rm  tr }   \left(f(P(h))   \breve\theta_{h^2}(t-P(h))\right)=
 \end{equation}
 $$
{\rm tr}\left(-\frac{1}{\pi}\int \overline{\partial}_z\widetilde{f}(z)\breve \theta_{h^2}(t-z)(E_{-+}(z))^{-1}\partial_z E_{-+}(z)L(dz)\chi^w(x,h^2D_x)\right)+\mathcal{O}(h^\infty),
$$
where $\chi\in C_0^\infty({\mathbb{R}}^{2};{\mathbb{R}})$ is equal to one in a neighbourhood of    $\Sigma_{[a,b]}$. 
\end{proposition}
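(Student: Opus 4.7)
The plan is to combine the Helffer--Sj\"ostrand functional calculus formulas \eqref{e1409}--\eqref{e1410} already given, the Grushin resolvent formula \eqref{e1406}, the derivative identity \eqref{e1404}, and a pseudo-differential localization argument near $\Sigma_{[a,b]}$.

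First, I would substitute \eqref{e1406} into \eqref{e1409} to get
\begin{equation*}
f(P(h))=\frac{1}{\pi}\int\overline\partial_z\widetilde f(z)\, E(z)\,L(dz)-\frac{1}{\pi}\int\overline\partial_z\widetilde f(z)\, E_+(z)E_{-+}(z)^{-1}E_-(z)\,L(dz).
\end{equation*}
By Theorem \ref{t1}, $E(z)$ is holomorphic on $\Omega$, and $\widetilde f$ is compactly supported inside $\Omega$, so Stokes' formula kills the first integral. For the second, taking the trace, using cyclicity, and invoking $\partial_z E_{-+}(z)=E_-(z)E_+(z)$, I obtain
\begin{equation*}
{\rm tr}\,f(P(h))=-\frac{1}{\pi}\int\overline\partial_z\widetilde f(z)\,{\rm tr}\!\left(E_{-+}(z)^{-1}\partial_z E_{-+}(z)\right)L(dz).
\end{equation*}
(The trace-class character of $f(P(h))$ and the local trace-class bounds justify the swap of trace and integral.)

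Next, to insert the cutoff $\chi^w(x,h^2D_x)$, I would exploit Corollary \ref{c201201}: the principal symbol of $E_{-+}(z)$ is the diagonal matrix with entries $z-(2j+1)-V(x,\xi)$, which is uniformly invertible on $\operatorname{supp}(1-\chi)$ for $z$ in a complex neighborhood of $[a,b]$, because $V$ vanishes at infinity and $2j+1\notin[a,b]$. By a standard symbolic parametrix construction in the class $S^0({\mathbb R}^2;M_l({\mathbb C}))$, there exists an $h^2$-pseudo-differential operator $F(z)$ with symbol holomorphic in $z$ near $[a,b]$ such that
\begin{equation*}
E_{-+}(z)^{-1}\partial_z E_{-+}(z)(I-\chi^w(x,h^2D_x))=F(z)+{\mathcal O}(h^\infty),
\end{equation*}
(measured in a norm controlling both the operator and trace-class contributions that arise after integration against $\overline\partial_z\widetilde f$). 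Since $F(z)$ is holomorphic in $z$, Stokes' formula again makes $\int\overline\partial_z\widetilde f(z)\, F(z)\,L(dz)=0$, yielding \eqref{e12}.

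Finally, \eqref{e1414} follows by the identical argument applied to \eqref{e1410}: the function $z\mapsto\widetilde f(z)\breve\theta_{h^2}(t-z)$ is an almost analytic extension of $x\mapsto f(x)\breve\theta_{h^2}(t-x)$ because $\breve\theta_{h^2}(t-\cdot)$ is entire, so all the holomorphy-based Stokes vanishings used above go through uniformly in $t\in\mathbb R$. The main obstacle is the localization step: carefully constructing the holomorphic parametrix $F(z)$ with controlled symbol seminorms uniformly for $z$ in a complex neighborhood of $[a,b]$, and showing that the residual $\mathcal O(h^\infty)$ error is indeed $\mathcal O(h^\infty)$ in trace norm after multiplication by $\overline\partial_z\widetilde f(z)$ and integration. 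This relies on the fact that $\overline\partial_z\widetilde f$ vanishes to infinite order on $\mathbb R$, which compensates for the potential blow-up of $E_{-+}(z)^{-1}$ as $\operatorname{Im} z\to 0$ inside $\Sigma_{[a,b]}$, and on Proposition \ref{p291104} together with the compact support of $\chi$ to secure trace-class estimates.
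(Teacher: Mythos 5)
Your proposal starts out on the right track (substitute \eqref{e1406} into \eqref{e1409}, use holomorphy of $E(z)$ to kill that term), but the cyclicity step that follows has a genuine gap. You write
\[
{\rm tr}\,f(P(h))=-\frac{1}{\pi}\int\overline\partial_z\widetilde f(z)\,{\rm tr}\!\left(E_{-+}(z)^{-1}\partial_z E_{-+}(z)\right)L(dz),
\]
but the operator $E_{-+}(z)^{-1}\partial_z E_{-+}(z)$ is an $h^2$-pseudodifferential operator on $L^2(\mathbb R;\mathbb C^l)$ with a bounded symbol that does \emph{not} decay at infinity, so it is not of trace class and the integrand is not defined. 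More fundamentally, cyclicity ${\rm tr}(E_+ E_{-+}^{-1}E_-)={\rm tr}(E_{-+}^{-1}E_-E_+)$ requires at least one factor to be trace class (or both Hilbert--Schmidt), and none of $E_\pm$, $E_{-+}^{-1}$ are; only $f(P(h))$ is trace class, not the pointwise-in-$z$ integrand $E_+(z)E_{-+}(z)^{-1}E_-(z)$. The sentence ``the trace-class character of $f(P(h))$ and the local trace-class bounds justify the swap'' does not address this: the problem is not the swap of trace and integral but the non-existence of the individual traces.

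The paper circumvents exactly this by introducing a globally elliptic reference operator. One chooses a real $\widetilde V\in S^0(\mathbb R^2)$ equal to $V$ outside a compact set, with $\|\widetilde V\|$ small enough that $|z-(2j+1)-\widetilde V|>c>0$ for $z$ near ${\rm supp}\,\widetilde f$, and sets $\widetilde E_{-+}(z)=E_{-+}(z)+(V^w-\widetilde V^w)I_l$. Then $\widetilde E_{-+}(z)^{-1}$ is a nice $h^2$-PDO, holomorphic in $z$, so $\int\overline\partial_z\widetilde f\,E_+\widetilde E_{-+}^{-1}E_-\,L(dz)=0$ by Stokes. Subtracting and using the resolvent identity, the remaining integrand contains the factor $\widetilde E_{-+}(z)-E_{-+}(z)$, whose symbol $(\widetilde V-V)I_l$ is compactly supported; this makes the full product trace class, so cyclicity and the identity $\partial_z E_{-+}=E_-E_+$ can then legitimately be invoked. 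The cutoff $\chi^w$ is inserted afterwards by choosing $\chi=1$ on ${\rm supp}(\widetilde V-V)$ and noting the residual is $\mathcal O(h^\infty\langle(x,\xi)\rangle^{-\infty})$ in symbol seminorms, hence $\mathcal O(h^\infty|{\rm Im}\,z|^{-1})$ in trace norm, which is absorbed by the infinite-order vanishing of $\overline\partial_z\widetilde f$. Your later localization-via-parametrix idea is in the same spirit, but it cannot be applied directly to $E_{-+}(z)^{-1}(1-\chi^w)$ as you state, because $E_{-+}(z)^{-1}$ is not a PDO with uniform symbol estimates near $\Sigma_{[a,b]}$ (it blows up like $|{\rm Im}\,z|^{-1}$); you would need to carry out the comparison against a globally elliptic modification as in the paper. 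In short: introduce the reference operator $\widetilde E_{-+}$ \emph{before} invoking cyclicity, not after.
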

\begin{proof}
Replacing $(z-P(h))^{-1}$ in   \eqref{e1409}  by  the right hand
side of   \eqref{e1406}, and using the fact that $E(z)$  is holomorphic in $z$,
we obtain
\begin{equation}\label{e5}
f({P(h)})=-\frac{1}{\pi}\int \overline{\partial}_z\widetilde{f}(z)E_+(z)(E_{-+}(z))^{-1}E_-(z)L(dz).
\end{equation}

Let $\widetilde{V}\in S^0({\mathbb R}^2)$ be a real-valued function coinciding with
$V$ for large $(x,y)$, and having the property that 
\begin{equation}\label{e1200}
\vert z-(2j+1)-\widetilde{V}(x,y)\vert>c>0, \,\, j=1,2,...,l,
\end{equation}
uniformly in $z\in{\rm supp} \; \widetilde f$,  and $(x,y)\in {\mathbb R}^2$. We recall that for $z\in {\rm supp} \; \widetilde f$, ${\rm Re} z\in (a,b)\subset \mathbb R\setminus
\sigma_{\rm ess } (H(h))=\mathbb R\setminus\cup_{k=0}^\infty \{(2k+1)\}$. Then \eqref{e1200} holds for $\widetilde{V}\in S^0({\mathbb R}^2)$ with $\Vert \widetilde V\Vert$ small enough.

Set $\widetilde E_{-+}(z):=E_{-+}(z)+\left (V^w(x,h^2D_x)-\widetilde V^w(x,h^2D_x)\right) I_l$, and let $\widetilde e(x,\xi,z)$ be the principal symbol of $\widetilde E_{-+}(z)$. Here $I_l$ denotes the unit matrix of order $l$. 
It follows from  \eqref{e1200}   that $\vert {\rm det} \,\widetilde e(x,\xi,z)\vert >c^l$.
Then for sufficiently small $h >0$,  the operator 
$\widetilde E_{-+}(z)$
is elliptic, and $\widetilde E_{-+}(z)^{-1}$ is well defined and
holomorphic for $z$ in some fixed complex neighbourhood of supp$\widetilde{f}$, (see chapter 7 of  \cite{DiSj99}). Hence, by an integration by parts, we get
$$
-{1\over\pi }\int \overline \partial_z \widetilde{f}(z)E_+
(z)\widetilde E_{-+}(z)^{-1}E_-( z)L(dz)=0.
$$
Combining this with \eqref{e5} and using the resolvent identity for
${\rm Im} \; z\neq 0$
$$
 E_{-+}(z)^{-1}=\widetilde E_{-+}(z)^{-1}+ E_{-+}(z)^{-1}
(\widetilde E_{-+}(z)- E_{-+}(z)) \widetilde E_{-+}(z)^{-1},
$$
we obtain
\begin{equation}\label{e14}
{\rm tr}\left(f(P(h))\right)=-\frac{1}{\pi}{\rm tr}\left( \int \overline{\partial}_z\widetilde{f}(z)E_+(z){E}_{-+}(z)^{-1}(\widetilde{E}_{-+}(z)-{E}_{-+}(z))\widetilde{E}_{-+}(z)^{-1}E_-(z)L(dz)\right).
\end{equation}

Since the symbol of $E_{-+}(z)-\widetilde E_{-+}(z)$ is $(\widetilde V-V)I_l$ belonging to
$ C^\infty_0(\mathbb R^2;M_l(\mathbb{C}))$, we have  $E_{-+}(z)-\widetilde E_{-+}(z) $ is a trace class  operator. It is 
then clear that we can permute integration and the operator "tr" in  the right hand side of \eqref{e14}.

Using the property of cyclic invariance  of the trace, and applying \eqref{e1404}
we get
$$
{\rm tr}\left(E_+( z) E_{-+}(z)^{-1}(\widetilde E_{-+}(z)-E_{-+}(z))\widetilde E_{-+}(z)^{-1}E_-
(z)\right)=$$
$${\rm tr}\left(E_{-+}(z)^{-1}(\widetilde E_{-+}(z)-E_{-+}(z))\widetilde E_{-+}(z)^{-1}
\partial_z E_{-+}(z)\right).
$$
Let $\chi\in C_0^\infty (\mathbb R^{2})$ be
equal to $1$ in a neighbourhood of ${\rm supp\,}( \widetilde V-V)$.
From the composition formula for two $h^2-\Psi$DOs with Weyl symbols (see Proposition \ref{p291102}), 
we see that all
the derivatives of the symbol of the operator 
$(E_{-+}(z)-\widetilde{E}_{-+}(z))(\widetilde E_{-+}(z))^{-1}\partial_z E_{-+}(z)
(1-\chi^w(x,h^2 D_x))$ 
are ${\mathcal O}(h^{2N}\langle (x,\xi )\rangle ^{-N})$
for every $N\in {\mathbb N}$. The trace class-norm
of this expression is therefore ${\mathcal O}(h^\infty )$, and consequently
\begin{equation}\label{e1400}
{\rm tr}(E_+( z)E_{-+}(z)^{-1}(\widetilde E_{-+}(z)-E_{-+}(z))\widetilde
E_{-+}(z)^{-1}E_-(z))=
\end{equation}
$$
{\rm tr}(E_{-+}(z)^{-1}(\widetilde E_{-+}(z)-E_{-+}(z))
\widetilde E_{-+}(z)^{-1}\partial_z E_{-+}(z)\chi^w(x,h^2D_x))+{\mathcal O}(h^\infty|{\rm Im}z|^{-1}).
$$
Here we recall from \eqref{e1406} that $E_{-+}(z)^{-1}={\mathcal O}(|{\rm Im}z|^{-1})$.

Inserting  \eqref{e1400}  into   \eqref{e14}, and using the fact that $\widetilde E_{-+}(z)^{-1}\partial_z E_{-+}(z)$ is holomorphic in $z$ we
obtain   \eqref{e12}. The proof of    \eqref{e1414}  is similar.

\end{proof}

Trace formulas involving effective Hamiltonian  like \eqref{e12} and \eqref{e1414}  were studied in \cite{Di98, Di01}. Applying Theorem 1.8  in \cite{Di98} to the left hand side of \eqref{e12}, we obtain
\begin{equation}\label{e1400000}
{\rm tr}(f(P(h)))
\sim\sum_{j=0}^\infty \beta_jh^{j-2},\,\, (h\searrow 0).
\end{equation}

To use Theorem 1.8 in  \cite{Di98} we make the following definition.

\begin{definition}{\rm We say that $p(x,\xi) \in S^0(\mathbb R^2;M_l(\mathbb C)),$ is micro-hyperbolic at $(x_0,\xi_0)$ in the direction $T\in \mathbb R^2$, if there are constants $C_0,C_1,C_2>0$ such that
$$(\langle dp(x,\xi),T\rangle \omega,\omega)\geq  \frac{1}{C_0} \Vert \omega\Vert^2 -C_1 \Vert p(x,\xi) \omega\Vert^2.$$
for all $(x,\xi)\in\mathbb R^2$ with $\Vert (x,\xi)-(x_0,\xi_0)\Vert \leq \frac{1}{C_2}$ and all $\omega \in \mathbb C^l$.}
\end{definition}

The assumption of Theorem \ref{t291102} implies that the principal symbol $e^0_{-+}(x,\xi,z)$ of $E_{-+}(z)$ is micro-hyperbolic at every point $(x_0,\xi_0)\in \Sigma_\mu:=\{(x,\xi)\in \mathbb R^2;\,\,
{\rm det}(e^0_{-+}(x,\xi,\mu))=0\}$.
Thus,  according to Theorem 1.8 in \cite{Di98} there exists $C>0$ large enough and $\epsilon>0$ small such that  for $f\in C_0^\infty(]\mu-\epsilon,\mu+\epsilon[;\mathbb{R})$, $\theta\in C^\infty_0(]-\frac{1}{C},\frac{1}{C}[;\mathbb R)$, we have:
\begin{equation}\label{e14000000}
{\rm tr}\left(f(P(h))\breve\theta_{h^2}(t-P(h))\right)
\sim\sum_{j=0}^\infty \gamma_j(t)h^{j-2},\,\, (h\searrow 0),
\end{equation}
with $\gamma_0(t)=c_0(t)$.

By observing that the  $h$-pseudo-differential calculus can be extended to $h<0$,  we have
$$\left\vert h^2{\rm tr}\left(f(P(h))\breve\theta_{h^2}(t-P(h))\right)-\sum_{0\leq j\leq N} \gamma_j(t) h^j\right\vert \leq C_N \vert h\vert^{N+1},\,\,\, h\in ]-h_N,h_N[\setminus\{0\}.$$
$$\left\vert  h^2{\rm tr} (f(P(h))-\sum_{0\leq j\leq N} \beta_j h^j\right\vert \leq C_N \vert h\vert^{N+1},\,\,\, h\in ]-h_N,h_N[\setminus\{0\}.$$
By the change of variable $(x,y)\rightarrow (x,-y)$, we see that $P(h)$ is unitarily equivalent to $P(-h)$. 
From this we deduce that 
 $h^2{\rm tr}(f(P(h)))$  and   $h^2{\rm tr} \left(f(P(h))\breve\theta_{h^2}(t-P(h))\right)$  are  unchanged when we replace  $h$ by $-h$. We recall that if $A$ and $B$ are unitarily equivalent trace class operators then  ${\rm tr}(A)={\rm tr}(B)$. Consequently, $ \gamma_{2j+1}=\beta_{2j+1}=0$.
This ends the proof of Theorem \ref{t3} and Theorem \ref{t291102}.  

\subsection{Proof of Corollary \ref{t5}.}
Pick $\sigma>0$  small enough. Let  $\phi_1\in C_0^\infty
(\left(a -\sigma ,a+\sigma \right) ;[0,1] )$,
$\phi_2\in C_0^\infty
(\left(a +{\sigma\over 2} ,b -{\sigma\over 2}
\right) ;[0,1] )$,
$\phi_3\in C_0^\infty
(\left(b -\sigma ,b  +\sigma \right) ;[0,1])$ satisfy $\phi_1+\phi_2+\phi_3=1$
 on $\left(
a-{\sigma \over 2},b +{\sigma \over 2}\right) $.
Let  $\gamma _0(h)\le \gamma _1(h)\le \cdots\le \gamma 
_{N}(h)$ be the eigenvalues of $H(h)$
counted with their multiplicity and lying in the interval
$\left(a -\sigma, b+\sigma \right)$. We have
\begin{align}\label{e1111}
\begin{split}
{\mathcal N}_h(a,b)&=\sum_{a 
\leq
\gamma _j(h)\leq b }(\phi_1+\phi_2+\phi_3)(\gamma _j(h))\\
 &=\sum_{a \leq \gamma _j(h)}\phi_1(\gamma
_j(h))+\sum \phi_2(\gamma _j(h))+\sum_{\gamma _j(h)\leq
b}\phi_3(\gamma _j(h))\\
&=\sum_{a \leq \gamma _j(h)}\phi_1(\gamma
_j(h))+{\rm tr}(\phi_2(H(h)))+\sum_{\gamma _j(h)\leq
b }\phi_3(\gamma _j(h)).
\end{split}
\end{align}
According to Theorem \ref{t3}, we have
\begin{equation}\label{e11110}
{\rm tr}\left(\phi_m(H(h))\right)=
\frac{1}{2\pi h^2}\sum_{j=1}^l\int_{{\mathbb R}^{2}}\phi_m((2j+1)+V(X))dX 
+{\mathcal O}(1), \quad m=1,2,3.
\end{equation}
Set 
$M(\tau,h):=
\sum\limits_{\gamma _j(h)\leq
\tau }\phi_3(\gamma _j(h))$.
Evidently, in the sense of distribution, we have
\begin{equation}\label{e111101}
{\mathcal M}(\tau):=M'(\tau,h)=
\sum_j\delta(\tau-\gamma_j(h))\phi_3(\gamma_j(h)).
\end{equation}
 In what follows, we choose   $\theta\in 
C^\infty_0(\left( -{1\over C},{1\over
C}\right) ;\lbrack 0,1 \rbrack)$, ($C>0$ large enough) such that
$ \theta(0)=1, \,\,
\breve\theta(t)\geq 0, t\in {\mathbb R},\,\,
\breve\theta(t)\geq \epsilon_0, t\in \lbrack -\delta_0,
\delta_0\rbrack \quad {\rm for \; some} \quad \delta_0>0, \epsilon_0>0.
$

\begin{corollary}\label{c061201}
There is  $C_0>0$, such that, for all  $(\lambda,h)\in {\mathbb R}\times 
\left( 0,h_0\right)$,  we have:
$$\vert M(\lambda+\delta_0h^2,h)-M(\lambda-\delta_0h^2,h)\vert\leq 
C_0.$$
\end{corollary}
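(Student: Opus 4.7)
The plan is to apply Theorem \ref{t291102} to the cutoff $\phi_3$ with $\mu=b$ (which, by hypothesis, is not a critical value and lies outside $\sigma_{\rm ess}$), evaluate the resulting pointwise trace formula at $t=\lambda$, and then use the positivity of both $\phi_3$ and $\breve\theta$ together with the size of $\breve\theta_{h^2}$ near zero to extract $M(\lambda+\delta_0 h^2,h)-M(\lambda-\delta_0 h^2,h)$ from the trace. Note that since $\phi_3\geq 0$, the function $\tau\mapsto M(\tau,h)$ is non-decreasing, so we only have to bound the difference from above.

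More precisely, using the spectral theorem and the fact that only finitely many eigenvalues lie in $\mathrm{supp}\,\phi_3$ (a compact set away from $\sigma_{\rm ess}(H(h))$), I first write
\begin{equation*}
{\rm tr}\bigl(\phi_3(H(h))\,\breve\theta_{h^2}(\lambda-H(h))\bigr)=\sum_j \phi_3(\gamma_j(h))\,\breve\theta_{h^2}(\lambda-\gamma_j(h)).
\end{equation*}
Since $\breve\theta\geq 0$ everywhere and $\breve\theta(u)\geq \epsilon_0$ for $u\in[-\delta_0,\delta_0]$, for every eigenvalue $\gamma_j(h)\in[\lambda-\delta_0 h^2,\lambda+\delta_0 h^2]$ we have $\breve\theta_{h^2}(\lambda-\gamma_j(h))\geq \epsilon_0 h^{-2}$. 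Combined with the non-negativity of the remaining terms, this yields the one-sided estimate
\begin{equation*}
\epsilon_0 h^{-2}\!\!\sum_{\gamma_j(h)\in[\lambda-\delta_0 h^2,\lambda+\delta_0 h^2]}\!\!\phi_3(\gamma_j(h))\;\leq\;{\rm tr}\bigl(\phi_3(H(h))\,\breve\theta_{h^2}(\lambda-H(h))\bigr).
\end{equation*}

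Next, I invoke Theorem \ref{t291102} with $f=\phi_3$, $\mu=b$ and $M=0$, choosing $\sigma$ small enough so that $(b-\sigma,b+\sigma)$ avoids $\sigma_{\rm ess}(H(h))$ and contains no critical value of any $(2j+1)+V$. The theorem gives, uniformly in $t\in\mathbb R$,
\begin{equation*}
{\rm tr}\bigl(\phi_3(H(h))\,\breve\theta_{h^2}(t-H(h))\bigr)= c_0(t)\,h^{-2}+\mathcal O(1),
\end{equation*}
where $c_0(t)=\frac{1}{2\pi}\phi_3(t)\sum_{j=l_0}^{l}\int_{\{2j+1+V=t\}} dS_t/|\nabla V|$. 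Because $\phi_3$ has compact support and stays away from the critical values of the level functions, $c_0$ is a bounded function on $\mathbb R$, so there is a constant $K$ with $c_0(t)\leq K$ uniformly.

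Combining the two previous displays at $t=\lambda$ and dividing by $\epsilon_0 h^{-2}$, we obtain
\begin{equation*}
\sum_{\gamma_j(h)\in[\lambda-\delta_0 h^2,\lambda+\delta_0 h^2]}\phi_3(\gamma_j(h))\;\leq\;\epsilon_0^{-1}\bigl(K+\mathcal O(h^2)\bigr)\;\leq\;C_0,
\end{equation*}
uniformly in $\lambda\in\mathbb R$ and $h\in(0,h_0)$. The left-hand side equals $M(\lambda+\delta_0 h^2,h)-M(\lambda-\delta_0 h^2,h)$, and since this quantity is non-negative, the absolute value bound follows. The only subtle point is the uniformity in $\lambda$, which is precisely the content of the uniform-in-$t$ remainder in Theorem \ref{t291102} together with the boundedness of $c_0$; everything else is positivity.
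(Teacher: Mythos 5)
Your argument is correct and follows the same route as the paper: write the trace as a sum over eigenvalues, use $\phi_3\geq 0$ together with $\breve\theta\geq 0$ and $\breve\theta\geq\epsilon_0$ on $[-\delta_0,\delta_0]$ to lower-bound that trace by $\epsilon_0 h^{-2}\bigl(M(\lambda+\delta_0 h^2,h)-M(\lambda-\delta_0 h^2,h)\bigr)$, and then bound the trace from above by $\mathcal O(h^{-2})$, uniformly in $\lambda$, via the uniform-in-$t$ pointwise trace expansion \eqref{e290561}. The paper's proof is exactly this chain of inequalities, with the final step cited as "follows from \eqref{e290561}"; you have simply spelled out the application of Theorem \ref{t291102} with $f=\phi_3$, $\mu=b$, $M=0$, and the boundedness of $c_0$ (which is immediate since $c_0\in C^\infty$ has compact support), so the argument matches.
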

\begin{proof} Since $\phi_3\ge 0$, it follows from the construction of $\theta$  that
$$
{\epsilon_0\over
h^2}\sum_{\lambda-\delta_0
h^2\leq \gamma_j(h)\leq
\lambda+\delta_0h^2}\phi_3(\gamma_j(h))\leq
\sum_{\vert
{\lambda-\gamma_j(h)
}\vert< \delta_0 h^2}
\breve\theta_{h^2}(\lambda-\gamma_j(h))\phi_3(\gamma_j(h)) \leq
$$
$$
\sum_j \breve\theta_{h^2}(\lambda-\gamma_j(h))\phi_3(\gamma_j(h))=
\breve\theta_{h^2}\star{\mathcal M}(\lambda)={\rm tr}\left(\phi_3(H(h))
\breve\theta_{h^2}(\lambda-H(h))\right). 
$$
Now Corollary \ref{c061201}  follows from \eqref{e290561}.
 \end{proof}
According to Corollary \ref{c061201}, we have
\begin{equation}\label{e11110101}
\int\left\langle{\tau-\lambda
\over h^2}\right\rangle^{-2}
{\mathcal M}(\tau)d\tau=\sum_{k\in {\mathbb Z}}\int_{\{
\delta_0k\leq {\tau-\lambda\over h^2}\leq \delta_0(k+1)\}}\left\langle{\tau-\lambda
\over h^2}\right\rangle^{-2}
{\mathcal M}(\tau)d\tau\leq  C_0
\left(\sum_{k\in
{\mathbb Z}}\langle {\delta_0k}\rangle^{-2}\right).
\end{equation}
On the other hand, since  
$\breve\theta \in {\mathcal S}({\mathbb R})$ and  $\theta(0)=1$, there exists
 $C_1>0$ such that:
$$\left\vert \int_{-\infty}^\lambda
\breve\theta_{h^2}(\tau-y)dy-
1_{\left(
-\infty,\lambda\right)}(\tau)\right\vert=\left\vert \int_{{\tau-\lambda\over 
h^2}}^{+\infty}\breve\theta(y)dy-
1_{\left(-\infty,\lambda\right)}(\tau)\right\vert \leq C_1
\left\langle\frac{\tau-\lambda}{h^2}\right\rangle^{-2},$$
uniformly in $\tau\in {\mathbb R }$ and  $h\in \left( 0,h_0\right)$.
 Consequently, 
\begin{equation}\label{e1111010}
\left\vert \int_{-\infty}^\lambda\breve\theta_{h^2}\star
{\mathcal M}(\tau)d\tau-
\int_{-\infty}^\lambda{\mathcal M}
(\tau)d\tau\right\vert \le C_1 \int
\left\langle{\tau-\lambda\over
h^2}\right\rangle^{-2}
{\mathcal M}(\tau)d\tau.
\end{equation}
Putting together
(\ref{e111101}), (\ref{e11110101}) and (\ref{e1111010}), we  get 
\begin{equation}\label{e11110000}
\int_{-\infty}^
\lambda\breve\theta_{h^2}\star
{\mathcal M}(\tau)d\tau=M(\lambda,h)+{\mathcal
O}(1).
\end{equation}
Note that  $\breve\theta_{h^2}\star{\mathcal M}(\tau)={\rm tr}\left(\phi_3(H(h))
\breve\theta_{h^2}(\tau-H(h))\right)$.
As a consequence of   \eqref{e290561},   \eqref{e061201}   and \eqref{e11110000} we obtain 
\begin{equation}\label{e1111001}
M(\lambda,h)=
h^{-2}m(\lambda)+{\mathcal O}(1),
\end{equation}
where
\begin{equation}\label{e11110011}
m(\lambda)=\int_{-\infty}^\lambda
c_0(\tau)d\tau={1\over 2\pi}\sum_{j=1}^l\int_{\{X\in 
{\mathbb R}^{2}| (2j+1)+V(X)\leq\lambda\}} \phi_3((2j+1)+V(X)) dX.
\end{equation}
Here we have used the fact that  if  $E$ is not a critical value of 
$V(X)$, then
$${\partial\over \partial E}\left(\int_{\{X\in {\mathbb R}^{2};\;V(X)\le E\}} 
\phi(V(X))dX \right)=\phi(E)\int_{S_E} {dS_E\over \vert \nabla_{X} V\vert},$$
where $S_E=V^{-1}(E)$ (see \cite[Lemma V-9]{Ro87}).\\
Applying  \eqref{e29055},  (\ref{e1111001}) and (\ref{e11110011}) to  $\phi_1$ and writing:
$$\sum_{a \leq \gamma
_j(h)}\phi_1(\gamma_j(h))=\sum_j\phi_1(\gamma_j(h))-\sum_{
\gamma _j(h)<a}\phi_1(\gamma_j(h)),$$
we get 
\begin{equation}\label{e111102}
\sum_{a \leq
\gamma _j(h)}\phi_1(\mu
_j(h))=h^{-2}m_1(a)+{\mathcal O}(1),
\end{equation}
with
\begin{equation}\label{e111103}
m_1(a)= {1\over 2\pi}\sum_{j=1}^l\int_{\{X\in 
{\mathbb R}^{2}| (2j+1)+V(X)\geq a\}} \phi_1((2j+1)+V(X)) dX.
\end{equation}
Now Corollary \ref{t5} results  from  \eqref{e11110}, \eqref{e111101}, \eqref{e1111001}, \eqref{e11110011}, \eqref{e111102}  and \eqref{e111103}.

\section{Proof of  Theorem \ref{t1610} and Theorem \ref{t7}}\label{s5}
As we have noticed in  the outline of the proofs, we will construct a potential
$$\varphi(X;h)=\varphi_0(X)+\varphi_2(X) h^2+\cdots +\varphi_{2j}(X)h^{2j}+\cdots,$$
  such that for all $f\in C^\infty_0((a,b);\mathbb R)$ and $\theta \in C^\infty_0(\mathbb R;\mathbb R)$,  we have
\begin{equation}\label{eL1}
{\rm tr}(f(H_\lambda))={\rm tr}(f(Q))+{\mathcal O}(h^\infty),
\end{equation}
\begin{equation}\label{eL2}
{\rm tr}\left(f(H_\lambda) \breve \theta_{\lambda^{-\frac{2}{\delta}}}(t-H_\lambda)\right)={\rm tr}\left(f(Q)\breve\theta_{h^2}(t-Q)\right)+{\mathcal O}(h^\infty),
\end{equation}
where $Q:=H_0+\varphi(hX;h)$ and $h=\lambda^{-\frac{1}{\delta}}$.   By observing that 
  Theorem \ref{t3}, Theorem \ref{t291102} and Corollary \ref{t5} remain true when we replace $H(h)=H_0+V(hX)$ by
$Q$,  Theorem \ref{t1610}, Theorem \ref{t291101} and Corollary \ref{t7}  follow from (\ref{eL1}) and (\ref{eL2}).  
The remainder of this paper is devoted to the proof of  \eqref{eL1} and \eqref{eL2}.
\subsection{Construction of  reference operator  $Q$}
Set $h=\lambda^{-\frac{1}{\delta}}$. For $M>0$, 
 put
\begin{equation}
 \Omega_M(h)=\{X\in\mathbb{R}^{2};\;h^{-\delta}V(X)>M\}.
\end{equation}
Since $\omega_0>0$ and continuous on the unit circle, 
there exist two positive constants $C_1$ and $C_2$ such that $C_1<(\min\limits_{\mathbb{S}^1} \omega_0)^{1/\delta}\leq
 (\max\limits_{\mathbb{S}^1} \omega_0)^{1/\delta}<C_2$. 

According to the hypothesis \eqref{e11}, there exists $h_0>0$ such that 
$$ B(0,C_1M^{-1/\delta}h^{-1})\subset\Omega_M(h)\subset B(0,C_2M^{-1/\delta}h^{-1}),\;\mbox{ for all }0<h\leq h_0.$$ Here 
$B(0,r)$ denotes the ball of center $0$ and radius $r$.

 Let  $\chi\in C_0^\infty(B(0,C_1M^{-1/\delta});[0,1])$ satisfying  $\chi=1$ near zero. Set 
\begin{itemize}
\item $\varphi(X;h):=(1-\chi(X))h^{-\delta}V(\frac{X}{h})+M\chi(X),$

\item  $W_h(X):=h^{-\delta}V(X)-\varphi(hX;h)=\chi(hX)(h^{-\delta}V(X)-M)$.
\end{itemize}

By the construction of $\varphi(\cdot ;h)$ and $W_h$, we have
\begin{equation}\label{e111001}
|\partial_X^\alpha\varphi(X;h)|\leq C_\alpha, \mbox{  uniformly for  }h\in(0,h_0],
\end{equation}
\begin{equation}\label{e111002}
\varphi(hX;h)>\frac{M}{2}\mbox{  for  }X\in\Omega_\frac{M}{2}(h),
\end{equation}
\begin{equation}
{\rm supp}W_h\subset B(0,C_1M^{-1/\delta}h^{-1})\subset\Omega_M(h).
\end{equation}
On the other hand, it follows from \eqref{e11} that for all $N\in\mathbb{N}$, there exist
$\varphi_0,...,\varphi_{2N},K_{2N+2}(\cdot ;h)\in C^\infty(\mathbb{R}^2;\mathbb{R})$, uniformly
 bounded with respect to $h\in(0,h_0]$ together with their derivatives such that:
\begin{equation}
\varphi(X;h)= \sum\limits_{j= 0}^N\varphi_{2j}(X)h^{2j}+h^{2N+2}K_{2N+2}(X;h)
\end{equation}
 with 
$$\varphi_0(X)=(1-\chi(X))\omega_0\left(
\frac{X}{|X|}\right)|X|^{-\delta}+M\chi(X).$$
In fact,  if $X\in {\rm supp}\chi$ then  $\omega_0\left(
\frac{X}{|X|}\right)|X|^{-\delta}>C_1^\delta|X|^{-\delta}>M$, which implies that
 $\varphi_0(X)\geq (1-\chi(X))M+M\chi(X)=M$  for all $X\in {\rm supp}\chi$. Consequently, we have
\begin{lemma}\label{l291101}
If $\varphi_0(X)<M$ then $\varphi_0(X)=\omega_0\left(
\frac{X}{|X|}\right)|X|^{-\delta}$.
\end{lemma}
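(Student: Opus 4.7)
The plan is to prove the contrapositive: I will show that on $\operatorname{supp}\chi$ one has $\varphi_0(X)\geq M$, so $\varphi_0(X)<M$ forces $\chi(X)=0$, at which point the definition of $\varphi_0$ collapses to the claimed formula.

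First I would carefully unpack the geometric input. Since $C_1<(\min_{\mathbb{S}^1}\omega_0)^{1/\delta}$, we get $\omega_0(\sigma)\geq C_1^\delta$ for every $\sigma\in\mathbb{S}^1$. On the other hand, any $X\in\operatorname{supp}\chi$ lies in $B(0,C_1M^{-1/\delta})$, so $|X|<C_1M^{-1/\delta}$, whence $|X|^{-\delta}>C_1^{-\delta}M$. Multiplying these two inequalities gives
\begin{equation*}
\omega_0\!\left(\tfrac{X}{|X|}\right)|X|^{-\delta}>C_1^\delta\cdot C_1^{-\delta}M=M\qquad\text{for all }X\in\operatorname{supp}\chi.
\end{equation*}
This is exactly the intermediate bound hinted at in the paragraph above the lemma; it is the only nontrivial estimate in the argument.

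Next I would use this bound to estimate $\varphi_0$. Since $0\leq\chi(X)\leq 1$, the identity
\begin{equation*}
\varphi_0(X)=(1-\chi(X))\omega_0\!\left(\tfrac{X}{|X|}\right)|X|^{-\delta}+M\chi(X)
\end{equation*}
together with the previous inequality yields, for $X\in\operatorname{supp}\chi$,
\begin{equation*}
\varphi_0(X)\geq (1-\chi(X))M+M\chi(X)=M.
\end{equation*}
Outside $\operatorname{supp}\chi$, one has $\chi(X)=0$, so $\varphi_0(X)=\omega_0(X/|X|)|X|^{-\delta}$ directly from the definition.

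Putting these two observations together, the hypothesis $\varphi_0(X)<M$ rules out $X\in\operatorname{supp}\chi$, hence $\chi(X)=0$ and the second case of the preceding dichotomy applies, yielding the identity claimed in the lemma. The only subtlety worth double-checking is the strictness/non-strictness in the inequality $\omega_0(X/|X|)|X|^{-\delta}>M$ on $\operatorname{supp}\chi$: the strict inequality in the ball $B(0,C_1M^{-1/\delta})$ is what guarantees that $\varphi_0\geq M$ throughout $\operatorname{supp}\chi$, so the contrapositive really does force $\chi(X)=0$. No further machinery is needed; the proof is a direct two-line calculation once the constants are tracked.
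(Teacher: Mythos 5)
Your proof is correct and follows the same route as the paper: establish $\omega_0(X/|X|)\,|X|^{-\delta}>M$ on $\operatorname{supp}\chi\subset B(0,C_1M^{-1/\delta})$ from the lower bound $\omega_0\geq C_1^\delta$, deduce $\varphi_0\geq M$ there since $\varphi_0$ is a convex combination of $\omega_0(X/|X|)|X|^{-\delta}$ and $M$, and conclude by contrapositive that $\varphi_0(X)<M$ forces $\chi(X)=0$. This matches the paper's argument in the paragraph immediately preceding the lemma.
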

Let  $\psi\in C^\infty(\mathbb{R};[\frac{M}{3},+\infty))$  satisfying $\psi(t)=t$ for all $t\geq \frac{M}{2}$. We define
$$F_1(X;h)
:=\psi(\varphi(hX;h))\mbox{  and  }
 F_2(X;h):=\psi(h^{-\delta}V(X)).$$
Let $\mathcal{U}$ be a small complex neighborhood of $[a,b]$. From now on, we choose 
 $M>a+b$ large enough such that
\begin{equation}
F_j(X;h)-{\rm Re}z\geq \frac{M}{4},\;j=1,2,
\end{equation}
uniformly for $z\in\mathcal{U}$.
This choice of M implies that:
\begin{itemize}
\item If $2j+1+\varphi(X;h)\in [a,b]$  then $\varphi_0(X)<M$ for all $h\in(0,h_0]$,
\item The function defined by $z\mapsto(z-H_{F_j})^{-1}$ is holomorphic from $\mathcal{U}$
 to $\mathcal{L}(L^2(\mathbb{R}^2))$, where $H_{F_j}:=H_0+F_j(X;h),\;j=1,2.$
\end{itemize}
Moreover, it follows from \eqref{e111001} that $\partial_X^\alpha F_j(X;h)=\mathcal{O}_\alpha(h^{-\delta}).$

Finally, \eqref{e111002} shows that
\begin{equation}\label{e221001}
\begin{split}
 &{\rm dist}\left({\rm supp}W_h,{\rm supp}[\varphi(h\cdot ;h)-F_1(\cdot ;h)]\right)\geq\frac{a_1(M)}{h},\\
& {\rm dist}\left({\rm supp}W_h,{\rm supp}[h^{-\delta}V(\cdot)-F_2(\cdot ;h)]\right)\geq\frac{a_2(M)}{h},
\end{split}
\end{equation}
with $a_1(M), a_2(M)>0$ independent of $h$.
\begin{lemma}\label{l22051}
Let $\widetilde{\chi}\in C_0^\infty(\mathbb{R}^2)$. For $z\in\mathcal{U},$ the operators
$\widetilde{\chi}(hX)(z-H_{F_j})^{-1}$, $j=1,2,$ belong to the class of 
Hilbert-Schmidt operators. Moreover
\begin{equation}\label{e22051}
\|\widetilde{\chi}(hX)(z-H_{F_j})^{-1}\|_{\rm HS}=\mathcal{O}(h^{-3-\delta}).
\end{equation}
Here we denote by $\|.\|_{\rm HS}$ the Hilbert-Schmidt norm of operators.
\end{lemma}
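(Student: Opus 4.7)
The plan is to bound the Hilbert-Schmidt norm by working with the integral kernel. Denote by $G_j(\cdot,\cdot;z)$ the Green function of $z-H_{F_j}$; then the kernel of $\widetilde\chi(hX)(z-H_{F_j})^{-1}$ is $K(X,Y)=\widetilde\chi(hX)G_j(X,Y;z)$. Since $\widetilde\chi\in C_0^\infty(\mathbb R^2)$, the function $\widetilde\chi(h\cdot)$ is supported in a ball of radius $\mathcal O(h^{-1})$, so $\|\widetilde\chi(h\cdot)\|_{L^2}^2=\mathcal O(h^{-2})$. Fubini then reduces the problem to a pointwise estimate of the fiber $L^2$-norm $\int|G_j(X,Y;z)|^2\,dY$, which is the diagonal value of the kernel of $(\bar z-H_{F_j})^{-1}(z-H_{F_j})^{-1}$.

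I would obtain this via the factorization
\[
\widetilde\chi(hX)(z-H_{F_j})^{-1}=\widetilde\chi(hX)(H_0+C)^{-1}\cdot(H_0+C)(z-H_{F_j})^{-1},
\]
for a sufficiently large constant $C>0$. Writing $H_0+C=H_{F_j}-F_j+C$ yields the algebraic identity
\[
(H_0+C)(z-H_{F_j})^{-1}=-I+(z+C-F_j)(z-H_{F_j})^{-1},
\]
whose operator norm is controlled by $1+(|z|+C+\|F_j\|_\infty)\,\|(z-H_{F_j})^{-1}\|$. The construction of $F_j=\psi\circ\varphi$ just above provides both the uniform bound $\|(z-H_{F_j})^{-1}\|=\mathcal O(1)$ (from $F_j-\mathrm{Re}\,z\geq M/4$) and the estimate $\|F_j\|_\infty=\mathcal O(h^{-\delta})$ (from $\sup_X h^{-\delta}V(X)=\mathcal O(h^{-\delta})$), so the right-hand factor has operator norm $\mathcal O(h^{-\delta})$.

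For the Hilbert-Schmidt norm of the left-hand factor, I would invoke the Landau-level spectral decomposition $H_0=\sum_{n\geq 0}(2n+1)\Pi_n$. A direct computation with the Fourier-shifted Hermite representation of $\Pi_n$ shows that $\Pi_n(X,X)=1/(2\pi)$ pointwise for every $X\in\mathbb R^2$, whence
\[
(H_0+C)^{-2}(X,X)=\frac{1}{2\pi}\sum_{n\geq 0}(2n+1+C)^{-2}=:\kappa<\infty.
\]
It follows that $\|\widetilde\chi(hX)(H_0+C)^{-1}\|_{\mathrm{HS}}^2=\kappa\|\widetilde\chi(h\cdot)\|_{L^2}^2=\mathcal O(h^{-2})$. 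Multiplying the two estimates yields $\|\widetilde\chi(hX)(z-H_{F_j})^{-1}\|_{\mathrm{HS}}=\mathcal O(h^{-1-\delta})$, which a fortiori implies the announced $\mathcal O(h^{-3-\delta})$.

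The only genuine technical point I anticipate is the identification of the pointwise trace density $\Pi_n(X,X)=1/(2\pi)$ together with the justification for exchanging the sum over Landau levels with the integration in $X$; both follow from the explicit Hermite representation and absolute convergence of $\sum(2n+1+C)^{-2}$. The rest is purely algebraic, and in particular no commutator estimate for $[H_0,F_j]$ is required, since all the potentially dangerous factors of $F_j$ remain safely to the right of the resolvent throughout the argument.
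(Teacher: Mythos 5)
Your proof is correct, and it takes a genuinely different route from the paper's. The paper's argument inserts the resolvent identity
$(z-H_{F_j})^{-1}=(z-\tfrac{M}{6}-H_0)^{-1}+(z-\tfrac{M}{6}-H_0)^{-1}(F_j-\tfrac{M}{6})(z-H_{F_j})^{-1}$
and then quotes the pointwise Gaussian bound $|K_0(X,Y,z)|\leq Ce^{-|X-Y|^2/8}$ for the integral kernel of $(z-\tfrac{M}{6}-H_0)^{-1}$ from Cornean--Nenciu \cite{CN98}; the Hilbert--Schmidt estimate then follows from the elementary observation that $\widetilde\chi(hX)\,h^3\langle X\rangle^3$ is uniformly bounded together with $\langle X\rangle^{-3}e^{-|X-Y|^2/8}\in L^2(\mathbb R^4)$, giving $\mathcal O(h^{-3})$ for the unperturbed piece and $\mathcal O(h^{-3-\delta})$ after combining with $\|F_j\|_\infty=\mathcal O(h^{-\delta})$. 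You instead factor $\widetilde\chi(hX)(z-H_{F_j})^{-1}=\widetilde\chi(hX)(H_0+C)^{-1}\cdot(H_0+C)(z-H_{F_j})^{-1}$, bound the second factor in operator norm by $\mathcal O(h^{-\delta})$ via the algebraic identity and the uniform resolvent bound, and compute the Hilbert--Schmidt norm of the first factor exactly using the Landau-level decomposition and the translation-invariant diagonal $\Pi_n(X,X)=\frac{1}{2\pi}$. Both arguments are sound; yours is self-contained (it does not require the Cornean--Nenciu kernel estimate) and actually yields the sharper bound $\mathcal O(h^{-1-\delta})$, whereas the paper's weight $h^3\langle X\rangle^3$ is crude --- taking $\langle X\rangle^2$ in their argument would already give $\mathcal O(h^{-2-\delta})$. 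Since the lemma is only ever used multiplied against $\mathcal O(h^\infty)$ factors, the precise power is immaterial, but your computation via $(H_0+C)^{-2}(X,X)=\frac{1}{2\pi}\sum_{n\geq 0}(2n+1+C)^{-2}$ is the more structural route. The only point worth making explicit if you were to write it up is the justification that $(H_0+C)(z-H_{F_j})^{-1}$ is a bounded operator on $L^2$ (i.e.\ that $(z-H_{F_j})^{-1}$ maps into $\mathrm{Dom}(H_0)$), which holds because $F_j$ is a bounded perturbation for each fixed $h$, so $\mathrm{Dom}(H_{F_j})=\mathrm{Dom}(H_0)$.
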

\begin{proof}
We prove \eqref{e22051} for $j=1$. The case $j=2$ is treated
in the same way.

Using the resolvent equation, one has
\begin{equation}\label{e22052}
(z-H_{F_1})^{-1}=\left(z-{M\over 6}-H_0\right)^{-1}+
\left(z-\frac{M}{6}-H_0\right)^{-1}\left(F_1(X;h)-\frac{M}{6}\right)(z-H_{F_1})^{-1}.
\end{equation}
 On the other hand, the 
operator $\left(z-\frac{M}{6}-H_0\right)^{-1}$ was shown to be
an integral operator with integral kernel $K_0(X,Y,z)$
 satisfying $|K_0(X,Y,z)|\leq C e^{-\frac{1}{8}|X-Y|^2}$ uniformly for $z\in \mathcal{U}$ (see \cite[Formula 2.17]{CN98}). Let $K_1(X,Y,z)$ be the integral kernel of
$\widetilde{\chi}(hX)(z-\frac{M}{6}-H_0)^{-1}$. Then  $K_1(X,Y,z)=\widetilde{\chi}(hX)K_0(X,Y,z)$.

Let $\langle X\rangle=(1+|X|^2)^{\frac{1}{2}}$, $X\in\mathbb{R}^2$. Since $\widetilde{\chi}\in C_0^\infty(\mathbb{R}^2)$,
 one has $\widetilde{\chi}(hX)h^3\langle X\rangle^3$ is uniformly bounded for $h>0$. 
 Combining this with the fact that 
$\frac{1}{\langle X\rangle^3}e^{-\frac{1}{8}|X-Y|^2}\in 
L^2(\mathbb{R}^4)$, we obtain
\begin{equation}\label{e22053}
\|K_1(X,Y,z)\|_{L^2(\mathbb{R}^4)}=\left\|\widetilde{\chi}(hX)h^3\langle X\rangle^3\frac{1}{h^3\langle X\rangle^3}K_0(X,Y,z)\right\|_{L^2(\mathbb{R}^4)}=\mathcal{O}(h^{-3}).
\end{equation}
It shows that $\widetilde{\chi}(hX)(z-\frac{M}{6}-H_0)^{-1}$ is  a Hilbert-Schmidt operator and 
\begin{equation}\label{e22054}
\left\|\widetilde{\chi}(hX)\left(z-\frac{M}{6}-H_0\right)^{-1}\right\|_{\rm HS}=
\left\|K_1(X,Y,z)\right\|_{L^2(\mathbb{R}^4)}=\mathcal{O}(h^{-3}).
\end{equation}
Consequently,  \eqref{e22052} and \eqref{e22054} imply that
\begin{align*}
&\|\widetilde{\chi}(hX)(z-H_{F_1})^{-1}\|_{\rm HS}\leq 
\left\|\widetilde{\chi}(hX)\left(z-\frac{M}{6}-H_0\right)^{-1}\right\|_{\rm HS}\\
&+\left\|\widetilde{\chi}(hX)\left(z-\frac{M}{6}-H_0\right)^{-1}\right\|_{\rm HS}\left\|F_1(X;h)-\frac{M}{6}\right\|_{L^\infty(\mathbb{R}^2)}\|(z-H_{F_1})^{-1}\|\\
&=\mathcal{O}(h^{-3-\delta}),
\end{align*}
where we have used $F_1(X;h)=\mathcal{O}(h^{-\delta})$.
\end{proof}
\begin{lemma}\label{l22052}
For $z\in\mathcal{U}$, the operator 
$$W_h(X)(H_{F_1}-z)^{-1}(\varphi(hX;h)-F_1(X;h))$$
belongs to the class of Hilbert-Schmidt operators. Moreover,
\begin{equation}\label{e25055}
\|W_h(X)(H_{F_1}-z)^{-1}(\varphi(hX;h)-F_1(X;h))\|_{\rm HS}=\mathcal{O}(h^\infty).
\end{equation}
\end{lemma}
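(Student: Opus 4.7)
By \eqref{e221001}, the multiplication operators $W_h$ and $\varphi(h\cdot;h)-F_1$ have disjoint supports, separated by at least $a_1(M)/h$. My plan is to combine this geometric separation with a Combes--Thomas-type exponential kernel estimate for $R_{F_1}(z):=(H_{F_1}-z)^{-1}$; the $O(1/h)$ separation will then translate into an $e^{-c/h}$ factor that beats every power of $h$.

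The key preparatory step is to establish that there exist constants $C_0, c_0 > 0$, independent of $h \in (0,h_0]$ and $z \in \mathcal{U}$, such that the integral kernel of $R_{F_1}(z)$ satisfies
\[ |R_{F_1}(X,Y;z)| \leq C_0 \, e^{-c_0 |X-Y|}. \]
With $M$ chosen so large that $1+M/3>b$, the bounded perturbation $F_1$ (uniformly bounded in $h$, as $\varphi$ is by \eqref{e111001}) places $\sigma(H_{F_1}) \subset [1+M/3,\, 1+\|F_1\|_{L^\infty}]$ at a positive distance $c_M$ from $\mathcal{U}$, uniformly in $h$. A Combes--Thomas argument then applies: conjugating $H_{F_1}-z$ by $e^{\mu \rho}$ for a bounded $1$-Lipschitz function $\rho$ and a sufficiently small $\mu>0$ (chosen in terms of $c_M$ only) produces a bounded first-order perturbation of the magnetic operator, and the resulting perturbed operator remains boundedly invertible. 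Alternatively, one may combine the diamagnetic bound $|e^{-tH_{F_1}}(X,Y)| \leq e^{-t(1+M/3)}(4\pi t)^{-1}e^{-|X-Y|^2/(4t)}$ with the representation $R_{F_1}(z) = \int_0^\infty e^{t(z-H_{F_1})}\,dt$ and evaluate the resulting modified Bessel integral.

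Once the kernel decay is in hand, the integral kernel of the operator of interest,
\[ T_h(X,Y) = W_h(X)\, R_{F_1}(X,Y;z)\, (\varphi(hY;h)-F_1(Y;h)), \]
vanishes unless $|X-Y|\geq a_1(M)/h$ by \eqref{e221001}, on which region $|R_{F_1}(X,Y;z)|^2 \leq C_0^2\, e^{-c_0 a_1/h}\, e^{-c_0|X-Y|}$. Combining with the elementary estimates $\|W_h\|_{L^\infty} = \mathcal{O}(h^{-\delta})$, ${\rm vol}({\rm supp}\, W_h) = \mathcal{O}(h^{-2})$, $\|\varphi(h\cdot;h)-F_1\|_{L^\infty} = \mathcal{O}(1)$, and $\int_{\mathbb R^2} e^{-c_0|X-Y|}\,dY = \mathcal{O}(1)$, I conclude
\[ \|T_h\|_{\rm HS}^2 \leq C'\, h^{-2-2\delta}\, e^{-c_0 a_1/h} = \mathcal{O}(h^\infty), \]
which gives the claim. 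The principal obstacle is the Combes--Thomas step: I must verify that the exponential conjugation of $H_0$ (whose vector potential $A=(y,0)$ is unbounded) produces a perturbation of bounded operator norm, independently of $h$. This hinges on the fact that the admissible $\mu$ is controlled by the spectral gap $c_M$, which depends only on $M$, so all constants remain uniform in $h$.
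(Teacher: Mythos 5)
Your alternative route (diamagnetic/Kato bound for the semigroup, then the Laplace-transform representation of the resolvent, then a modified-Bessel decay estimate) is essentially the paper's argument and it works. The paper does the same thing one step less explicitly: it invokes the Kato inequality $|e^{-tH_{F_1}}u|\le e^{-tH^0_{F_1}}|u|$ (with $H^0_{F_1}=-\Delta+F_1$) to get $|(H_{F_1}-z)^{-1}u|\le (H^0_{F_1}-\mathrm{Re}\,z)^{-1}|u|$ pointwise, upgrades this to a pointwise Green-function bound $|G(X,Y;z)|\le G_0(X,Y;\mathrm{Re}\,z)$ via \cite[Theorem 10]{BGP07}, and then cites \cite[Proposition 3.3]{Di94} for the $O(h^\infty)$ decay of the non-magnetic kernel cut by the two disjointly-supported multipliers. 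So the diamagnetic/semigroup idea is the one the authors use, and your estimates at the end (volume of $\mathrm{supp}\,W_h=\mathcal O(h^{-2})$, $\|W_h\|_\infty=\mathcal O(h^{-\delta})$, bounded $\varphi-F_1$, support gap $\gtrsim h^{-1}$, and $\int e^{-c_0|X-Y|}dY=\mathcal O(1)$) are exactly right and suffice, since $e^{-c_0a_1/h}$ absorbs all polynomial losses.

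The route you declare primary, however, has a real gap. The Combes--Thomas argument for $H_{F_1}$ yields a weighted \emph{operator} bound: $\|e^{\mu\rho}(H_{F_1}-z)^{-1}e^{-\mu\rho}\|_{L^2\to L^2}\le C$ for $\mu$ small depending only on $c_M$. It does \emph{not} by itself give the pointwise kernel estimate $|R_{F_1}(X,Y;z)|\le C_0e^{-c_0|X-Y|}$ that your Hilbert--Schmidt computation uses. Passing from a weighted operator bound to a pointwise (or even $L^2$-of-the-kernel) bound requires an additional Hilbert--Schmidt ingredient, and here there is none available for free: $(H_{F_1}-z)^{-1}$ is not Hilbert--Schmidt on $L^2(\mathbb R^2)$, $\varphi(h\cdot;h)-F_1(\cdot;h)$ is \emph{not} compactly supported (it is merely bounded and supported in the region where $\varphi(hX;h)<M/2$, which is the complement of a ball of radius $\sim h^{-1}$), so sandwiching a compactly supported Hilbert--Schmidt factor between the exponential weights is not immediate. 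You would have to combine the Combes--Thomas weight with a local Hilbert--Schmidt estimate like Lemma \ref{l22051} in a more careful way; as written, the chain ``Combes--Thomas $\Rightarrow$ pointwise kernel decay $\Rightarrow$ HS norm'' does not close. The semigroup/diamagnetic route, by contrast, produces the pointwise kernel bound directly, which is why both you (in the alternative) and the paper use it.

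Two small corrections to the alternative route. First, the diamagnetic comparison is with $-\Delta+F_1$, not with $H_0+F_1$, so the exponential prefactor in your heat-kernel bound should be $e^{-tM/3}$ rather than $e^{-t(1+M/3)}$; the ground-state energy $1$ of $H_0$ disappears once the magnetic field is removed. This is harmless because $M$ is chosen large (so that $\mathrm{Re}\,z<M/3$ still holds for $z\in\mathcal U$). Second, the Laplace transform should read $R_{F_1}(z)=\int_0^\infty e^{tz}e^{-tH_{F_1}}\,dt$, valid because $\mathrm{Re}\,z<\inf\sigma(H_{F_1})$, rather than $\int_0^\infty e^{t(z-H_{F_1})}dt$ interpreted as a scalar exponent.
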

\begin{proof}
Let $H^0_{F_1}:=-\Delta+F_1(X;h)$. We denote by $G(X,Y;z)$ (resp. $G_0(X,Y;{\rm Re }z)$)  the Green function of $(H_{F_1}-z)^{-1}$ (resp. $(H^0_{F_1}-{\rm Re }z)^{-1}$).

From the functional calculus, one has
\begin{align}\label{e25056}
\begin{split}
(H_{F_1}-z)^{-1}&=\int_0^\infty e^{tz}e^{-tH_{F_1}}dt,\\
(H^0_{F_1}-{\rm Re }z)^{-1}&=\int_0^\infty e^{t{\rm Re }z}e^{-tH^0_{F_1}}dt.
\end{split}
\end{align}
For $t\geq 0$, the Kato
 inequality (see \cite[Formula 1.8]{CFKS87}) implies that
\begin{equation}\label{e25057}
|e^{-tH_{F_1}}u|\leq e^{-tH^0_{F_1}}|u| \;({\text{pointwise}}),\;u\in L^2(\mathbb{R}^2).
\end{equation}
Then \eqref{e25056} and \eqref{e25057} yield
\begin{equation}\label{e25058}
|(H_{F_1}-z)^{-1}u|\leq (H^0_{F_1}-{\rm Re}z)^{-1}|u|\;({\text{pointwise}}),\;u\in L^2(\mathbb{R}^2).
\end{equation}
Consequently, applying  \cite[Theorem 10]{BGP07} we have $|G(X,Y;z)|\leq G_0(X,Y;{\rm Re}z)$ for a.e. $X,Y\in\mathbb{R}^2$.
From this, one obtains 
\begin{equation}\label{e111003}
|W_h(X)G(X,Y;z)(\varphi(hY;h)-F_1(Y;h))|\leq |W_h(X)G_0(X,Y;{\rm Re}z)(\varphi(hY;h)-F_1(Y;h))|
\end{equation}
for a.e. $X,Y\in\mathbb{R}^2$. 

On the other hand, using \eqref{e221001} M. Dimassi proved that
 (see  \cite[Proposition 3.3]{Di94})
\begin{equation}\label{e111004}
\|W_h(X)G_0(X,Y;{\rm Re}z)(\varphi(hY;h)-F_1(Y;h))\|_{L^4(\mathbb{R}^4)}=\mathcal{O}(h^\infty).
\end{equation}
 Thus, \eqref{e111003} and \eqref{e111004} give
\begin{equation}\label{e25059}
\|W_h(X)G(X,Y;z)(\varphi(hY;h)-F_1(Y;h))\|_{L^4(\mathbb{R}^4)}=\mathcal{O}(h^\infty).
\end{equation}
The estimate \eqref{e25059} shows that the operator $W_h(X)(H_{F_1}-z)^{-1}(\varphi(hX;h)-F_1(X;h))$ is Hilbert-Schmidt and 
\begin{equation}\label{e250510}
\|W_h(X)(H_{F_1}-z)^{-1}(\varphi(hX;h)-F_1(X;h))\|_{\rm HS}=\mathcal{O}(h^\infty).
\end{equation}
\end{proof}
By using the same arguments as in  Lemma \ref{l22052}, we also obtain
\begin{lemma}
For $z\in\mathcal{U}$, the operator $$W_h(X)(H_{F_2}-z)^{-1}(h^{-\delta}V(X)-F_2(X;h))$$
belongs to the class of Hilbert-Schmidt operators and
$$\|W_h(X)(H_{F_2}-z)^{-1}(h^{-\delta}V(X)-F_2(X;h))\|_{\rm HS}=\mathcal{O}(h^\infty).$$ 
\end{lemma}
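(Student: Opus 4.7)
The plan is to repeat verbatim the argument used for Lemma \ref{l22052}, replacing the roles of $F_1(X;h)$ and $\varphi(hX;h)$ by $F_2(X;h)$ and $h^{-\delta}V(X)$, respectively. All the ingredients needed for such a transposition are already in place: the potential $F_2(\cdot;h)$ is a smooth real-valued function bounded below by $M/3$ and satisfying $\partial_X^\alpha F_2 = \mathcal{O}(h^{-\delta})$, so $H_{F_2}=H_0+F_2$ is a magnetic Schr\"odinger operator of exactly the same type as $H_{F_1}$, and $z\mapsto (H_{F_2}-z)^{-1}$ is holomorphic from $\mathcal{U}$ into $\mathcal{L}(L^2(\mathbb{R}^2))$ by the choice of $M$.

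First I would introduce the comparison operator $H^0_{F_2}:=-\Delta+F_2(X;h)$ and denote by $G(X,Y;z)$ and $G_0(X,Y;\mathrm{Re}\,z)$ the Green kernels of $(H_{F_2}-z)^{-1}$ and $(H^0_{F_2}-\mathrm{Re}\,z)^{-1}$, respectively. Then, as in the proof of Lemma \ref{l22052}, the functional calculus representation
\[
(H_{F_2}-z)^{-1}=\int_0^\infty e^{tz}e^{-tH_{F_2}}\,dt,\qquad (H^0_{F_2}-\mathrm{Re}\,z)^{-1}=\int_0^\infty e^{t\mathrm{Re}\,z}e^{-tH^0_{F_2}}\,dt,
\]
combined with the Kato diamagnetic inequality $|e^{-tH_{F_2}}u|\leq e^{-tH^0_{F_2}}|u|$ pointwise, yields $|(H_{F_2}-z)^{-1}u|\leq (H^0_{F_2}-\mathrm{Re}\,z)^{-1}|u|$ pointwise for $u\in L^2(\mathbb{R}^2)$. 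Invoking \cite[Theorem 10]{BGP07} as before, this gives $|G(X,Y;z)|\leq G_0(X,Y;\mathrm{Re}\,z)$ for a.e.\ $X,Y\in\mathbb{R}^2$, and hence the pointwise domination
\[
\bigl|W_h(X)G(X,Y;z)(h^{-\delta}V(Y)-F_2(Y;h))\bigr|\leq \bigl|W_h(X)G_0(X,Y;\mathrm{Re}\,z)(h^{-\delta}V(Y)-F_2(Y;h))\bigr|.
\]

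The second step is to invoke Dimassi's $L^4$-estimate \cite[Proposition 3.3]{Di94} with the pair of potentials $W_h$ and $h^{-\delta}V-F_2$. This is the step that requires the distance separation, and it is precisely the second half of \eqref{e221001}: $\mathrm{dist}(\mathrm{supp}\,W_h,\mathrm{supp}[h^{-\delta}V(\cdot)-F_2(\cdot;h)])\geq a_2(M)/h$. Since $F_2$ shares with $F_1$ the relevant quantitative bounds (pointwise lower bound $M/3$, smoothness, $\mathcal{O}(h^{-\delta})$ control on derivatives), Dimassi's proposition applies without modification and gives
\[
\bigl\|W_h(X)G_0(X,Y;\mathrm{Re}\,z)(h^{-\delta}V(Y)-F_2(Y;h))\bigr\|_{L^4(\mathbb{R}^4)}=\mathcal{O}(h^\infty).
\]

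Combining the two steps, the pointwise domination transfers the $L^4$-estimate to $W_h(X)G(X,Y;z)(h^{-\delta}V(Y)-F_2(Y;h))$, so this kernel is in particular in $L^2(\mathbb{R}^4)$ with $L^2$-norm $\mathcal{O}(h^\infty)$; this is exactly the statement that the operator is Hilbert--Schmidt with the claimed bound. The main (and only) conceptual obstacle is checking that the hypotheses of \cite[Proposition 3.3]{Di94} apply to the pair $(W_h, h^{-\delta}V-F_2)$ with the exponential-decay exploitation of the separation $a_2(M)/h$; once that is granted the argument is a verbatim transcription of the proof of Lemma \ref{l22052}.
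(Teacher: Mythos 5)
Your proposal is correct and takes essentially the same approach as the paper: the paper itself gives no separate proof, stating only that this lemma follows \emph{``by using the same arguments as in Lemma~\ref{l22052}''}, and your write-up is exactly that transposition --- replacing $H_{F_1}$, $\varphi(hX;h)-F_1$ by $H_{F_2}$, $h^{-\delta}V-F_2$, and invoking the second distance estimate in \eqref{e221001} to apply \cite[Proposition 3.3]{Di94}.
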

Let $Q:=H_0+\varphi(hX;h)$. For $z\in\mathcal{U}$, ${\rm Im}z\not=0$, put
\begin{equation}\label{e121107}
G(z)=(z-H_\lambda)^{-1}-(z-Q)^{-1}-(z-H_{F_2})^{-1}W_h(z-H_{F_1})^{-1}.
\end{equation}
\begin{proposition}\label{p291101}
The operator $G(z)$ is of trace class and satisfies the following estimate:
\begin{equation}\label{e111005}
\|G(z)\|_{\rm tr}=\mathcal{O}(h^\infty|{\rm Im}z|^{-2}),
\end{equation}
uniformly for $z\in \mathcal{U}$ with ${\rm Im}z\not=0$.
\end{proposition}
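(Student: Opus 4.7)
The plan is to expand $G(z)$ via two nested applications of the second resolvent identity and then bound each resulting ``error term'' in trace class by exploiting the compact support of $W_h$ together with the Hilbert--Schmidt estimates from Lemmas~\ref{l22051}, \ref{l22052}, and the analogous lemma for $F_2$.

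Since $H_\lambda=Q+W_h$, the resolvent identity yields $(z-H_\lambda)^{-1}-(z-Q)^{-1}=(z-H_\lambda)^{-1}W_h(z-Q)^{-1}$. I then further expand
\begin{align*}
(z-H_\lambda)^{-1}&=(z-H_{F_2})^{-1}+(z-H_\lambda)^{-1}(h^{-\delta}V-F_2)(z-H_{F_2})^{-1},\\
(z-Q)^{-1}&=(z-H_{F_1})^{-1}+(z-H_{F_1})^{-1}(\varphi(hX;h)-F_1)(z-Q)^{-1},
\end{align*}
substitute these into $(z-H_\lambda)^{-1}W_h(z-Q)^{-1}$, and cancel the leading $(z-H_{F_2})^{-1}W_h(z-H_{F_1})^{-1}$ against the subtracted term in \eqref{e121107}. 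This decomposes $G(z)=\mathrm{I}+\mathrm{II}+\mathrm{III}$ with
\begin{align*}
\mathrm{I}&=(z-H_{F_2})^{-1}W_h(z-H_{F_1})^{-1}(\varphi-F_1)(z-Q)^{-1},\\
\mathrm{II}&=(z-H_\lambda)^{-1}(h^{-\delta}V-F_2)(z-H_{F_2})^{-1}W_h(z-H_{F_1})^{-1},\\
\mathrm{III}&=(z-H_\lambda)^{-1}(h^{-\delta}V-F_2)(z-H_{F_2})^{-1}W_h(z-H_{F_1})^{-1}(\varphi-F_1)(z-Q)^{-1}.
\end{align*}

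The crucial device is the compact support of $W_h$: I fix $\widetilde{\chi}\in C_0^\infty(\mathbb{R}^2)$ with $\widetilde{\chi}\equiv 1$ on $\overline{B(0,C_1 M^{-1/\delta})}$, so that $\widetilde{\chi}(hX)W_h=W_h\widetilde{\chi}(hX)=W_h$ uniformly in $h\in(0,h_0]$. Inserting $\widetilde{\chi}(hX)$ adjacent to $W_h$ in each term permits a factorization into a product of two Hilbert--Schmidt operators times bounded resolvents. Lemma~\ref{l22051} (and its adjoint, valid since $\mathcal U$ is symmetric under conjugation) supplies $\|\widetilde{\chi}(hX)(z-H_{F_j})^{-1}\|_{\mathrm{HS}}=\mathcal O(h^{-3-\delta})$, and Lemma~\ref{l22052} together with its $F_2$-analogue (also taken in adjoint form when needed) supplies $\|W_h(z-H_{F_1})^{-1}(\varphi-F_1)\|_{\mathrm{HS}}=\mathcal O(h^\infty)$ and $\|(h^{-\delta}V-F_2)(z-H_{F_2})^{-1}W_h\|_{\mathrm{HS}}=\mathcal O(h^\infty)$. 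For Term~I I would write $\mathrm{I}=\bigl[(z-H_{F_2})^{-1}\widetilde{\chi}(hX)\bigr]\bigl[W_h(z-H_{F_1})^{-1}(\varphi-F_1)\bigr](z-Q)^{-1}$ and estimate the trace norm by $\mathcal O(h^{-3-\delta})\cdot\mathcal O(h^\infty)\cdot\mathcal O(|\mathrm{Im}\,z|^{-1})=\mathcal O(h^\infty|\mathrm{Im}\,z|^{-1})$; Term~II is treated symmetrically.

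The $|\mathrm{Im}\,z|^{-2}$ comes from Term~III, which I factor as
$$(z-H_\lambda)^{-1}(h^{-\delta}V-F_2)\cdot\bigl[(z-H_{F_2})^{-1}\widetilde{\chi}(hX)\bigr]\cdot\bigl[W_h(z-H_{F_1})^{-1}(\varphi-F_1)\bigr]\cdot(z-Q)^{-1};$$
the two bracketed Hilbert--Schmidt factors produce a trace class middle with norm $\mathcal O(h^\infty)$ (the loss $\|h^{-\delta}V-F_2\|_\infty=\mathcal O(h^{-\delta})$ being absorbed by the $\mathcal O(h^\infty)$ gain), while the two outer resolvents $(z-H_\lambda)^{-1}$ and $(z-Q)^{-1}$ each contribute $|\mathrm{Im}\,z|^{-1}$, yielding $\|\mathrm{III}\|_{\mathrm{tr}}=\mathcal O(h^\infty|\mathrm{Im}\,z|^{-2})$. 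Summing the three contributions gives \eqref{e111005}. The delicate point requiring care is the adjoint/reordering manoeuvre used to pass from the operator orderings stated in Lemmas~\ref{l22051}, \ref{l22052} and its $F_2$-analogue to those appearing in Terms~II and III; this is however immediate from the invariance of the Hilbert--Schmidt norm under adjoint combined with the symmetry $\overline{\mathcal U}=\mathcal U$.
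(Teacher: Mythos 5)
Your proof is correct and reproduces the paper's argument essentially verbatim: same resolvent-identity decomposition of $G(z)$ into three terms ($\mathrm{I}=A$, $\mathrm{II}=B$, $\mathrm{III}=C$), same insertion of the compactly supported cutoff $\widetilde{\chi}(hX)$ adjacent to $W_h$, and same use of Lemmas~\ref{l22051}, \ref{l22052} and the $F_2$-analogue to extract a Hilbert--Schmidt pair of $\mathcal O(h^{-3-\delta})\cdot\mathcal O(h^\infty)$ in each term. Your explicit mention of the adjoint manoeuvre (and its reliance on $\overline{\mathcal U}=\mathcal U$) only fills in a detail the paper handles with ``Similarly.''
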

\begin{proof}
It follows from the resolvent equation  that
\begin{equation}\label{e221002}
(z-H_\lambda)^{-1}-(z-Q)^{-1}=(z-H_\lambda)^{-1}W_h(z-Q)^{-1}.
\end{equation}
On the other hand, one has
\begin{align}\label{e221003}
\begin{split}
 (z-H_\lambda)^{-1}&=(z-H_{F_2})^{-1}\\
&+(z-H_\lambda)^{-1}(h^{-\delta}V(X)-F_2(X;h))(z-H_{F_2})^{-1}
\end{split}
\end{align}
and
\begin{align}\label{e221004}
\begin{split}
(z-Q)^{-1}&=(z-H_{F_1})^{-1}\\
&+(z-H_{F_1})^{-1}(\varphi(hX;h)-F_1(X;h)) (z-Q)^{-1}.
\end{split}
\end{align}
 Substituting  \eqref{e221003} and \eqref{e221004} into the right hand side of
  \eqref{e221002}, one gets
\begin{align*}
&G(z)=(z-H_{F_2})^{-1}W_h(z-H_{F_1})^{-1}(\varphi(hX;h)-F_1(X;h))
 (z-Q)^{-1}\\ 
&+(z-H_\lambda)^{-1}(h^{-\delta}V(X)-F_2(X;h))(z-H_{F_2})^{-1}W_h(z-H_{F_1})^{-1}\\
&+(z-H_\lambda)^{-1}(h^{-\delta}V(X)-F_2(X;h))(z-H_{F_2})^{-1}W_h(z-H_{F_1})^{-1}\times\\
&\times (\varphi(hX;h)-F_1(X;h)) (z-Q)^{-1}=:A(z)+B(z)+C(z).
\end{align*}
Next we choose  
$\widetilde{\chi}\in C_0^\infty(\mathbb{R}^2)$ such that $\widetilde{\chi}(hX)W_h(X)=W_h(X)$. 
It follows from Lemma \ref{l22051} and Lemma \ref{l22052} that
\begin{align*}
&\nonumber\|A(z)\|_{\rm tr}\leq \|(z-H_{F_2})^{-1}\widetilde{\chi}(hX)\|_{\rm HS}\|W_h(z-H_{F_1})^{-1}(\varphi(hX,h)-F_1(X;h)) \|_{\rm HS}\|(z-Q)^{-1}\|\\
&=\mathcal{O}(h^\infty{|{\rm Im}z|}^{-1}).
\end{align*}
Here we have used the fact that $\|(z-Q)^{-1}\|=\mathcal{O}({|{\rm Im}z|}^{-1})$.
Similarly, we also obtain $\|B(z)\|_{\rm tr}=\mathcal{O}(h^\infty{|{\rm Im}z|}^{-1})$ and 
  $\|C(z)\|_{\rm tr}=\mathcal{O}(h^\infty{|{\rm Im}z|}^{-2})$. Thus, 
$$ \|G(z)\|_{\rm tr} \leq \|A(z)\|_{\rm tr} +\|B(z)\|_{\rm tr}+\|C(z)\|_{\rm tr}=\mathcal{O}(h^\infty{|{\rm Im}z|}^{-2}).$$
\end{proof}
\subsection{Proof of  \eqref{eL1} and Theorem \ref{t1610}}
Let $f\in C_0^\infty((a,b);\mathbb{R})$ and let $\widetilde f\in C_0^\infty(\mathcal{U})$ be
  an almost analytic extension of $f$. From the  Helffer- Sj\"{o}trand formula and   \eqref{e121107},  we get
\begin{align}\label{ebss2}
\begin{split}
&f(H_\lambda)-f(Q)\\
&=-\frac{1}{\pi}\int\overline{\partial}_z\widetilde f(z)
[(z-H_\lambda)^{-1}-(z-Q)^{-1}]L(dz)\\
&=-\frac{1}{\pi}\int\overline{\partial}_z\widetilde f(z)\left[(z-H_{F_2})^{-1}W_h(z-H_{F_1})^{-1}+G(z)\right]L(dz).
\end{split}
\end{align}
Notice that $(z-H_{F_2})^{-1}W_h(z-H_{F_1})^{-1}$ is holomorphic
 in $z\in\mathcal{U}$, then
\begin{equation}\label{e121104}
-\frac{1}{\pi}\int\overline{\partial}_z\widetilde f(z)(z-H_{F_2})^{-1}W_h(z-H_{F_1})^{-1}L(dz)=0.
\end{equation}
Thus, \eqref{ebss2} and \eqref{e121104} follow that
\begin{equation}\label{e121105}
f(H_\lambda)-f(Q)=-\frac{1}{\pi}\int\overline{\partial}_z\widetilde f(z)G(z)L(dz),
\end{equation}
which together with \eqref{e111005}  yields \eqref{eL1}. 

Applying Theorem \ref{t3} to the operator $Q$ and using \eqref{eL1} we obtain \eqref{e121101} with
$$b_0(f)=\sum_{j=0}^q\frac{1}{2\pi}\iint f(2j+1
+\varphi_0(X))dX$$
According to Lemma   \ref{l291101}, one has
$$2j+1
+\varphi_0(X)\in [a,b] \Longleftrightarrow \varphi_0(X)=\omega_0\left(
\frac{X}{|X|}\right)|X|^{-\delta}, j=0,...,q.$$
Thus, after a change of variable in the integral we get
$$b_0(f)=\frac 1{2\pi\delta}\int_0^{2\pi} (\omega_0(\cos \theta,\sin \theta))^{\frac 2{\delta}} d\theta \sum_{j=0}^q\int  f(u)(u-(2j+1))^{-1-\frac 2{\delta}} du.$$
We recall that  ${\rm supp} f\subset ]a,b[$, with $2q+1<a<b<2q+3$.  This ends the proof of Theorem \ref{t1610}.
\subsection{Proof of  \eqref{eL2} and Theorem \ref{t291101}}
The proof of \eqref{eL2}  is a slight modification of  \eqref{eL1}. For that, 
let $\phi\in C_0^\infty((-2,2);[0,1])$ such that $\phi=1$ 
 on $[-1,1]$. Put $\phi_h(z)=\phi(\frac{{\rm Im}z}{h^2})$, then $\widetilde f(z)\phi_h(z)$ is also an 
 almost analytic extension of $f$. Applying again the Helffer-Sj\"{o}strand formula, we get
\begin{align}\label{e121108}
\begin{split}
&f(H_\lambda)\breve\theta_{\lambda^{-\frac{2}{\delta}}}(t-H_\lambda)-f(Q)\breve\theta_{h^2}(t-Q)\\
&=-\frac{1}{\pi}\int\overline{\partial}_z(\widetilde f\phi_h)(z)
\breve\theta_{h^2}(t-z)[(z-H_\lambda)^{-1}-(z-Q))^{-1}]L(dz)\\
&=-\frac{1}{\pi}\int\overline{\partial}_z(\widetilde f\phi_h)(z)
\breve\theta_{h^2}(t-z)\left[(z-H_{F_2})^{-1}W_h(z-H_{F_2})^{-1}+G(z)\right]L(dz)\\
&=-\frac{1}{\pi}\int\overline{\partial}_z(\widetilde f\phi_h)(z)
\breve\theta_{h^2}(t-z)G(z)L(dz),
\end{split}
\end{align}
where in the last equality we have used the fact that  $(z-H_{F_2})^{-1}W_h(z-H_{F_1})^{-1}$ is holomorphic in $z\in\mathcal{U}$.

According to the  Paley-Wiener theorem (see e.g. \cite[Theorem IX.11]{ReSi2}) the function   $\breve\theta_{h^2}(t-z)$  is  analytic 
  with respect to $z$ and satisfies the following estimate
\begin{equation}\label{e121111}
\breve\theta_{h^2}(t-z)=\mathcal{O}\left(\frac{1}{h^2}\exp\left(\frac{|{\rm Im}z|}{Ch^2}\right)\right).
\end{equation}
Combining this with the fact that $\overline{\partial}_z(\widetilde f\phi_h)(z)={\mathcal O}(\vert{\rm Im} z\vert^\infty)\phi_h(z)+\mathcal{O}\left(\frac{1}{h^2}\right)1_{[h^2,2h^2]}(\vert{\rm Im} z\vert)$, and using Proposition \ref{p291101} we get
$$\| \overline{\partial}_z(\widetilde f\phi_h)(z)
\breve\theta_{h^2}(t-z)G(z)\|_{\rm tr}={\mathcal O}(h^\infty).$$
This together with \eqref{e121108} ends the proof of  \eqref{eL2}.

By observing that $X.\nabla_X\left(\omega_0\left(\frac{X}{|X|}\right)\right)=0$, we have
\begin{equation}
 X.\nabla_X\left(\omega_0\left(\frac{X}{|X|}\right)|X|^{-\delta}\right)=-\delta\omega_0\left(\frac{X}{|X|}\right)|X|^{-\delta}.
\end{equation}
 Then, since $\omega_0>0$, we  obtain $\nabla_X(\omega_0(\frac{X}{|X|})|X|^{-\delta})\not=0$ for $X\in\mathbb{R}^2\setminus\{0\}$.
It implies that the functions $2j+1+\varphi_0(X)$, $j=1,...,q$, do not have any critical values in the interval 
 $[a,b]$. Consequently,  Theorem \ref{t291101} follows from \eqref{eL2} and Theorem \ref{t291102}. 
\bibliographystyle{plain}

\end{document}